\documentclass[12pt]{article}

\usepackage{amsmath,amsfonts,amssymb,amsthm,textcomp,mathrsfs,mathrsfs,bbm}
\usepackage{braket,marvosym}
\usepackage{mathtools}
\usepackage[utf8]{inputenc}
\usepackage{graphicx}
\usepackage{color}
\usepackage[colorlinks=true,linkcolor=black,citecolor=black,plainpages=false,pdfpagelabels]{hyperref}
\usepackage{palatino}
\usepackage[T1]{fontenc}
\usepackage{fullpage}

\newtheorem{theorem}{Theorem}
\newtheorem{lemma}[theorem]{Lemma}
\newtheorem{corollary}[theorem]{Corollary}
\newtheorem{definition}{Definition}

\newcommand*{\cC}{\mathcal{C}}
\newcommand*{\cM}{\mathcal{M}}
\newcommand*{\cN}{\mathcal{N}}
\newcommand*{\bracket}[2]{\left\langle #1 \right| \left. #2 \right\rangle}
\newcommand*{\brackett}[3]{\left\langle #1 \right| \left. #2 \right. \left| #3 \right\rangle}
\newcommand*{\cD}{\mathcal{D}}
\newcommand*{\cX}{\mathcal{X}}
\newcommand*{\eye}{{\mathbbm{1}}}
\newcommand*{\mzero}{{\mathbf{0}}}

\newcommand*{\bbR}{{\mathbb{R}}}

\newcommand*{\cE}{\mathcal{E}}
\newcommand*{\cT}{{\mathcal{T}}}
\newcommand*{\cF}{\mathcal{F}}
\newcommand*{\cR}{\mathcal{R}}
\newcommand{\beq}{\begin{equation}}
\newcommand{\enq}{\end{equation}}
\newcommand{\ketbra}[1]{\ket{#1} \bra{#1}}

\newcommand{\tr}{\mathrm{Tr}}

\newcommand*{\cH}{\mathcal{H}}
\newcommand*{\cI}{\mathcal{I}}
\newcommand*{\cV}{\mathcal{V}}

\newcommand*{\renyi}{R\'{e}nyi }
\newcommand*{\bbU}{{\mathbb{U}}}

\newcommand{\Exp}{{\mathsf E}}

\newcommand{\ind}{{\mathrm{ind}}}

\DeclareMathOperator{\densitymatrix}{D}

\newcommand{\LL}{\mathrm{L}}
\DeclareMathOperator{\Herm}{Herm}
\DeclareMathOperator{\Pos}{Pos}

\newcommand{\mapone}{\cT_W}

\newcommand{\qmap}{{\mathcal{Q}}}
\newcommand{\err}{{\text{error}}}
\newcommand{\sold}{D^{\text{old}}}
\newcommand{\sand}{D^{\text{sand}}}
\newcommand{\qold}{Q^{\text{old}}}
\newcommand{\qsand}{Q^{\text{sand}}}
\newcommand{\csold}{H^{\text{old}}}
\newcommand{\csand}{H^{\text{sand}}}

\usepackage[colorlinks=true,linkcolor=black,citecolor=black,plainpages=false,pdfpagelabels]{hyperref}
\hypersetup{pdftitle={Template}}
\usepackage{color}

\begin{document}

\newcommand{\itwomax}{{^2I}_{\max}}

\title{Random coding exponents galore via decoupling}
\author{Naresh Sharma \\
Tata Institute of Fundamental Research \\
Mumbai 400005, India \\
Email: {\tt nsharma@tifr.res.in}}
\date{\today}
\maketitle

\begin{abstract}
A missing piece in quantum information theory,
with very few exceptions, has been to provide the random coding exponents for
quantum information-processing protocols. We remedy the situation by 
providing
these exponents for a variety of protocols including those at the top of the 
family tree of protocols.
Our line of attack is to provide an exponential bound on the decoupling error
for a restricted class of completely positive maps where a 
key term in the exponent is in terms of a \renyi $\alpha$-information-theoretic 
quantity for any $\alpha \in (1,2]$.
Among the protocols covered are fully quantum Slepian-Wolf, quantum state 
merging, quantum
state redistribution, quantum/classical
communication across channels with side
information at the transmitter with or without entanglement assistance,
and quantum communication across broadcast channels.
\end{abstract}

\section{Introduction}

Analysis of optimal resources needed/generated in an information-processing protocol
is one of the holy grails of information theory \cite{covertom, nielsen-chuang, hayashi, petz-book,
wilde-book}. Nice answers in terms of information-theoretic quantities
are obtained, in general, for large copies such as of inputs and channel uses. One part in establishing
these answers is the achievability that says that for resources arbitrarily close
to the optimal, there exists a protocol accomplishing the task with arbitrarily small error.

Achievability proofs come in various flavors and we list some of them
but not in the chronological order.
One way is via the law of large numbers (or
typicality) that involves making statements for large copies. Another way is via the smooth
information-theoretic quantities that are defined in terms of a semi-definite program
(see Refs. \cite{renner-thesis, tomamichel-thesis} and references therein). This method has the
advantage that one can make statements for any number of copies and it matches the
optimal answer for large number of copies using the law of large numbers. A third way has been
via the random coding exponents, i.e., one makes statements for any number of copies
by obtaining an exponential bound on the error of the protocol. In many comparisons with the second
method, this method provides stronger bounds and was pioneered by
Gallager who obtained such bounds for the classical capacity
\cite{gallager-expo-1965, gallager-68-book}.
Yet another method has been via the optimal terms in the asymptotic expansions of the
rate at which the resources are generated or used and this was pioneered
by Strassen \cite{strassen-1962}.

It is the Gallager's approach that would be further investigated in this paper. If one scours the literature on
the random coding exponents for quantum protocols, one finds that not much work has been done on this topic.
Indeed, apart from Burnashev and Holevo \cite{burnashev-1998}, Holevo \cite{holevo-2000}, and
Hayashi \cite{hayashi-2006}, no other work, to the best of the author's knowledge, provides
random coding exponents for the quantum protocols. (Exponential bounds on the error for the
Schumacher compression can be obtained without much difficulty leveraging the analysis
for the classical source compression \cite{hayashi}.) Burnashev and Holevo \cite{burnashev-1998} provide
the reliability function (loosely defined as the best exponent one could get for
large number of copies \cite{gallager-68-book}) for sending classical information across the quantum
channel for the case of pure states, and Holevo \cite{holevo-2000} extends it for the case of
commuting density matrices. Hayashi provides a random coding exponent
for the same protocol for general density matrices but his exponent when specialized to
classical does not match with Gallager's \cite{holevo-2000,gallager-expo-1965}.

Quantum information theory is much richer than the classical and with a plethora of protocols
(one can just glance at the family tree of quantum protocols
\cite{devetak-family-protocols-2004,devetak-resource-2008} to appreciate this), it is not just
important to provide the random coding exponents but, if possible, also a unified approach to
get these exponents for a variety of protocols.

Where would such a unified approach come from? An answer lies in decoupling, a phenomenon
where random evolution of a part of the quantum system would, on the average,
make it decouple from the other part.
That decoupling would be useful for quantum error correction was first observed by Schumacher
and Nielsen \cite{schumacher-nielsen-1996}. It has subsequently been recognized as a
building block in quantum information theory (see Refs. \cite{dupuis-thesis,hayden-qip-2011}
and references therein).

The decoupling theorem quantifies the average error between the state, part of which is
randomly evolved, and the
completely decoupled state, and is now known in various versions. We go through some of
them not necessarily in the chronological order. The one
provided by Hayden \emph{et al} \cite{hayden-et-al-decoupling-2008}
gives a bound in terms of dimensions of the quantum
systems involved and this, with an appeal to typicality for large copies, yields the optimal answers
$\relbar$ similar approach is followed by Abeyesinghe \emph{et al} \cite{fqsw-2009}.
Dupuis \emph{et al} provide another version that gives a bound in terms of
smooth entropies \cite{dupuis-et-al-2010}.

Another version by Dupuis gives an exponential bound for any number of copies and the exponent has two
\renyi $2$-conditional entropies: first one is computed using the density matrix that is evolved 
and the second one is computed using the Choi-Jamio{\l}kowski representation of a map \cite{dupuis-thesis}.

Since this version gives an exponential bound, it seems close to the stated purpose of this
paper but it is not quite there simply because for the random coding exponents, we shall need
the first term to be in terms of \renyi $\alpha$-conditional entropies for $\alpha$ arbitrarily
close to $1$. It is not necessary to strengthen the second term that determines the rate.

Could there be a way of modifying Dupuis' bound? This paper stems from asking this
question, answers it in the affirmative, and then applies the new version to obtain the random
coding exponents for a variety of protocols. In
particular, we are able to replace the first term by a \renyi $\alpha$-conditional
entropy for all $\alpha \in (1,2]$ (although adding some inconsequential terms in the process).
We do this by leveraging ideas from the independent works of Dupuis and Hayashi
\cite{dupuis-thesis,hayashi}.

Some of the protocols we analyze are at the top of the family tree of
protocols and the author didn't encounter any protocol that could be analyzed by other
versions of the decoupling theorem but not from the version provided in this paper. For the
protocols analyzed, the application
of our version of the decoupling theorem is, in some cases, but not always,
inspired by the application of other versions of the decoupling theorem.

We don't address how close the exponent in the proposed bounds might be to the reliability function.
There is, however, one resemblance between the exponents we
obtain and the reliability function for the classical case (in certain regimes), which is that
in both the cases, it is in terms of \renyi $\alpha$-information-theoretic quantities. 

The structure of the paper is as follows. Section \ref{prelims} provides the notation and definitions
used throughout this paper. Section \ref{dec-theorem} provides a new version of the decoupling theorem.
(There is a more general version provided as well in Appendix \ref{appendix3} although we don't use it!)
The subsequent sections apply this version to various protocols. Following protocols
are analyzed: Schumacher compression, fully quantum Slepian-Wolf, fully quantum reverse Shannon,
quantum state merging, quantum/classical communication across channels
with side information at the transmitter with or without
entanglement assistance, entanglement-assisted classical communication, quantum state
redistribution, quantum communication across broadcast channels, and destroying correlations by
adding classical randomness. The lemmas are provided in the appendix so as to not interrupt the flow.

\section{Notation and Preliminaries}
\label{prelims}

Let $\cH_A$ be the Hilbert space associated with the quantum system $A$.
We shall confine ourselves to the
finite dimensional Hilbert spaces in this paper and $|A|$ denotes the dimension of $\cH_A$.
$A \cong B$ implies that $|A| = |B|$.
For a system $A$, we denote $A^n$ to be a quantum system described by $\bigotimes_{i=1}^n \cH_{A_i}$,
where $A_i \cong A$, $i = 1,...,n$.
Let $\LL(\cH_A,\cH_B)$ be the set of all matrices from $\cH_A$ to $\cH_B$ and
$\LL(\cH_A)$ denotes $\LL(\cH_A,\cH_A)$. Let $\Herm(\cH_A)$,
$\Pos(\cH_A)$ $\subseteq \LL(\cH_A)$ be the set of Hermitian and positive semidefinite matrices
respectively. Let $\densitymatrix(\cH_A) \subseteq \Pos(\cH_A)$
be the set of unit trace matrices and $\densitymatrix_{\leqslant}(\cH_A)$ $\subseteq \Pos(\cH_A)$
be the set of matrices with trace not greater than $1$.
Let $\nu_{\sigma^A}$ be the number of distinct eigenvalues of $\sigma^A \in \Herm(\cH_A)$. For
$\rho^A, \sigma^A \in \Herm(\cH_A)$, let $\{ \rho^A \geqslant \sigma^A \}$ denote the
projector onto the subspace spanned by the eigenvectors corresponding to the non-negative
eigenvalues of $\rho^A - \sigma^A$. Let $X \cdot \rho \equiv X \rho X^\dagger$.
For $X \in \LL(\cH_A, \cH_B)$ (also denoted as $X^{A \to B}$),
the trace norm, $\|X\|_1$, is the sum of its singular values.
The Fidelity between $\rho, \sigma \in \Pos(\cH_A)$ is
$F(\rho,\sigma) \equiv \| \sqrt{\rho} \sqrt{\sigma} \|_1$.

Let $\bbU(A)$ be a Unitary $2$-design on a quantum system $A$
(see Ref. \cite{dupuis-thesis} and references therein).
For a function $f: \bbU(A) \to \LL(\cH_E)$, $\Exp_U f(U)$ denotes the
expectation taken over a random Unitary $U$ distributed uniformly on $\bbU(A)$.

Let $\ket{\Phi}^{A A^\prime}$ be the maximally entangled state (MES)
on $A A^\prime$, i.e., for $A \cong A^\prime$,
orthonormal bases $\{ \ket{i}^A \}$ and $\{ \ket{i}^{A^\prime} \}$,
$\ket{\Phi}^{A A^\prime} \equiv |A|^{-1/2} \sum_{i=1}^{|A|} \ket{i}^A \ket{i}^{A^\prime}$.
Let the maximally mixed state in $\cH_A$ be denoted by $\pi^A \equiv \eye^A/|A|$, where $\eye^A$
is the Identity matrix. The zero matrix (with all entries as zero) is denoted by $\mzero$.

A matrix $V^{A \to B}$ is an isometry if either
$V^\dag V = \eye$ or $V V^\dag = \eye$, and is a partial isometry if its singular values are
either $0$ or $1$. A full-rank partial isometry $V^{A \to B}$ has rank $\min\{|A|,|B|\}$.

The Kronecker delta function is $\delta_{j,k} = 1$ if $j = k$, and $0$ otherwise. The indicator
function $\ind_{\text{condition}}$ $= 1$ if condition is true, and $0$ otherwise.
The partial trace over $B$ of $\rho^{AB} \in \LL(\cH_A \otimes \cH_B)$ is denoted by either $\tr_B \rho^{AB}$
or $\rho^A$. For a pure state $\ket{\Psi}^{AB}$, $\Psi^{AB} = \ketbra{\Psi}^{AB}$,
and it does not necessarily imply that $\Psi^A$ is also a pure state.
All the logarithms in this paper are to the base $2$ and $\exp(x)$ denotes $2^x$, $x \in \bbR$.
We define $\Xi(\varepsilon) \equiv \sqrt{\varepsilon (2 + \varepsilon + 2 \sqrt{1+
\varepsilon})}$ for $\varepsilon \geqslant 0$.

With an abuse of notation, we call a weighted sum of exponentially
decaying terms also as exponential decay, i.e., for $x$, $\alpha_i$,
$\beta_i$ > 0, $i=1,...,n$, $n$ finite, we call $\sum_{i=1}^n \beta_i
\exp\{ -\alpha_i x \}$ as exponentially decaying with $x$. All the error bounds
that we provide in this paper can be put in this form.

\subsection{Super-operators}
\label{subsec-so}

A super-operator $\cT^{A \to B}$ is a map from $\LL(\cH_A) \to \LL(\cH_B)$. Important classes
include completely positive maps $\cT^{A \to B}$, which map $\Pos(\cH_A \otimes \cH_R)$
to $\Pos(\cH_B \otimes \cH_R)$ for any ancilla $R$, and completely positive and trace preserving (cptp)
maps which are completely positive and have an additional property that the trace is preserved.

The Choi-Jamio{\l}kowski representation of a map $\cT^{A \to E}$ is given by
$\omega^{E A^\prime}_{\cT}$ $\equiv$ $\cT^{A \to E}(\Phi^{A A^\prime})$. To a completely positive map,
we associate a quantity $\Theta(\cT)$ defined as the negative of the \renyi old $2$-conditional
entropy (defined in Section \ref{info-quant}) and is given by
\beq
\label{yae8}
\Theta(\cT) \equiv - \csold_2(A^\prime | E)_{\omega_{\cT}^{E A^\prime}}.
\enq

Concatenation of two maps, i.e., $\cE$ followed by $\cD$ is denoted by $\cD \circ \cE$,
and with a slight abuse of notation, for an isometry $V$ and a map $\cE$, $\cE \circ V (\rho)$ denotes
$\cE(V \cdot \rho)$, and $V \circ \cE (\rho)$ denotes $V \cdot \cE(\rho)$.

We now define three maps.
For $\sigma^{AB} \in \LL(\cH_A \otimes \cH_B)$,
$\qmap_A(\sigma^{AB}) \equiv |A| \tr_A \sigma^{AB} (\sigma^{AB})^\dag - \sigma^B (\sigma^B)^\dag$.
For $\rho, \sigma \in \Pos(\cH_A)$,
the spectral decomposition $\sigma = \sum_{i=1}^{\nu_{\sigma}} \lambda_i \Pi_i$,
where $\lambda_i$'s are all distinct and $\Pi_i$'s are projectors,
a pinching map in the eigenbasis of $\sigma$ is defined as
$\cM_\sigma(\rho) \equiv \sum_{i=1}^{\nu_{\sigma}} \Pi_i \rho \Pi_i$.
Let $W^{A \to B}$, $|B| \leqslant |A|$, be a full-rank partial isometry.
Then a compressive map $\cC_W^{A \to B}$ is defined as
$\cC_W(\rho^A) \equiv W \rho^A W^\dagger + \left[ \tr (\eye^A - W^\dagger W) \rho^A \right] \pi^B$.

\begin{definition}[Class-$1$ maps]
A map $\cT^{A \to E}$ is said to be in class-$1$ if it is completely positive and
for any $\sigma \in \LL(\cH_A)$, $\Exp_U \| \cT(U \cdot \sigma) \|_1$ $\leqslant$ $\| \sigma \|_1$.
\end{definition}

Note that all cptp maps fall under class-$1$. Another set of
completely positive maps under class-$1$ are those with
$\tr \, \cT(\eye^A) = |A|$ (see Lemma \ref{yal6} for proof). An example of such a map (taken from
Ref. \cite{dupuis-thesis}) that we shall use later in the paper is given by
\beq
\label{yae31}
\mapone^{A \to B}(\sigma^A) \equiv \frac{|A|}{|B|} (W^{A \to B} \cdot \sigma^A),
\enq
where $W^{A \to B}$, $|A| \geqslant |B|$, is a full-rank partial isometry.

\subsection{Information-theoretic quantities}

\label{info-quant}

The quantum relative entropy from $\rho$ to $\sigma$ is given by
$D(\rho \| \sigma) \equiv \tr \rho (\log \rho - \log \sigma)$, the von Neumann
entropy of $\rho^A \in \densitymatrix(\cH_A)$
is given by $H(A)_\rho \equiv - \tr \rho^A \log \rho^A$.
For a tripartite state $\rho^{ABC}$, the conditional entropy of $A$ given $B$
is given by $H(A|B)_\rho \equiv H(AB)_\rho - H(B)_\rho$, the
conditional mutual information between $A$ and $B$ given $C$ is
$I(A:B|C)_\rho \equiv H(A|C)_\rho - H(A|BC)_\rho$, and the coherent information is given
by $I(A \rangle B)_\rho \equiv -H(A|B)_\rho$.
The \renyi generalizations of the quantum relative entropy can be
done in various ways and we mention two prominent candidates.

\begin{definition}[\renyi entropies]
For $\alpha \in (0,2] \backslash \{1\}$, from $\rho$ to $\sigma$,
the quasi old $\alpha$-relative entropy is given by
$\qold_\alpha(\rho \| \sigma) \equiv \tr \rho^\alpha \sigma^{1-\alpha}$,
and the quasi sandwiched $\alpha$-relative entropy (proposed
independently in Refs. \cite{lennert-2013,wilde-2013}) is given by
$\qsand_\alpha(\rho \| \sigma) \equiv \tr \left( \sigma^{\frac{1-\alpha}{2 \alpha}}
\rho \sigma^{\frac{1-\alpha}{2 \alpha}} \right)^\alpha$.
The \renyi old (sandwiched) $\alpha$-relative entropy from $\rho$ to $\sigma$ is given by
\beq
D^{\text{old (sand)}}_\alpha(\rho \| \sigma) \equiv \frac{1}{\alpha-1} \log
Q^{\text{old (sand)}}_\alpha(\rho \| \sigma), ~~
\alpha \in (0,2] \backslash \{1\}.
\enq
We can extend these definitions to include $\alpha = 1$ by taking limits and we drop the subscript
and the superscript.
The \renyi $\alpha$-conditional entropies of $A$ given $B$ are defined as
\begin{align}
H^{\text{type}}_\alpha(A|B)_\rho & \equiv -\inf_{\sigma^B \in \densitymatrix(\cH_B)}
D^{\text{type}}_\alpha(\rho^{AB} \| \eye^A \otimes \sigma^B) \\
^{\downarrow}H^{\text{type}}_\alpha(A|B)_\rho & \equiv
- D^{\text{type}}_\alpha(\rho^{AB} \| \eye^A \otimes \rho^B),
\end{align}
where `type' is `old' or `sand'.
\end{definition}

It follows from Refs. \cite{petz-quasi-entr-1986, petz-quasi-entr-2010, lennert-2013, wilde-2013}
that for $\alpha \in (0,2] \backslash \{1\}$ and a cptp map $\cE$,
$D^{\text{type}}_\alpha(\rho \| \sigma) \geqslant D^{\text{type}}_\alpha\left[ \cE(\rho) \| \cE(\sigma)\right]$.

There are duality relations known for a tripartite pure state $\Psi^{ABC}$ . One such example is
$\csand_\alpha(A|B)_\Psi + \csand_{\widetilde{\alpha}}(A|C)_\Psi = 0$,
$\widetilde{\alpha} = 1/\alpha$, $\alpha \in [0.5,1) \cup (1,2]$. See Ref. \cite{tomamichel-duality-2013} and references therein for a
complete list of duality relations.
In the remainder of the paper, the `type' superscript is dropped, and it implies that the expression holds for
either one and one could pick one's favorite. For example,
$D_\alpha(\rho \| \sigma)$ denotes either
$\sold_\alpha(\rho \| \sigma)$ or $\sand_\alpha(\rho \| \sigma)$. Furthermore,
while invoking the above duality relations, since there are many options,
we also drop the downarrow superscript from the conditional entropies
and assume that appropriate superscript is implicitly assumed and $\widetilde{\alpha}$ is
assumed to be an appropriate function of $\alpha$ depending on the type of
conditional entropies involved.

\section{Yet another version of the decoupling theorem with a useful R\'{e}nyification}
\label{dec-theorem}

In this section, we provide a version of the decoupling theorem where the crucial term in
the exponent is in terms of a \renyi $\alpha$-information-theoretic quantity for $\alpha \in (1,2]$ instead of just
$\alpha = 2$ as provided in Ref. \cite{dupuis-thesis}.

We leverage ideas from the independent works of Dupuis and Hayashi and in particular
Theorem 3.7 in Ref. \cite{dupuis-thesis} and Lemma 9.2 in Ref. \cite{hayashi}.

\begin{theorem}
\label{theorem1}
Let $\rho^{AR} \in \densitymatrix(\cH_{AR})$ and $\cT^{A \to E}$ be a class-$1$ map. 
Then for $\alpha \in (1,2]$, a random Unitary $U$ acting on $A$, we have for any $\sigma^R$
$\in \densitymatrix(\cH_{R})$,
\begin{multline}
\Exp_{U} \left\| \cT(U \cdot \rho^{AR}) - \omega^E_\cT \otimes \rho^R \right\|_1 \\
\leqslant 4 \exp\left\{ \frac{\alpha-1}{2\alpha} \left[ \log \nu_{\sigma^R}
+ D_\alpha(\rho^{AR} \| \eye^A \otimes \sigma^R) + \Theta(\cT) \right] \right\}.
\end{multline}
In particular, for $n$ copies, a random Unitary $U$ acting on
$A^n$, and a class-$1$ map
$\cT^{A^n \to E}$, we have
\begin{multline}
\Exp_U \left\| \cT \left[ U \cdot (\rho^{AR})^{\otimes n} \right] -
\omega^E_\cT \otimes (\rho^R)^{\otimes n} \right\|_1 \\
\leqslant 4 \exp\left\{ \frac{\alpha-1}{2\alpha} \Big[ |R| \log (n+1)
- n H_\alpha(A|R)_\rho + \Theta(\cT) \Big] \right\}.
\end{multline}
\end{theorem}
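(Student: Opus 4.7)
The plan is to combine the $\alpha = 2$ decoupling of Dupuis (Theorem~3.7 of \cite{dupuis-thesis}) with a pinching-based R\'enyi interpolation in the spirit of Hayashi's Lemma~9.2 \cite{hayashi}. The weakening from cptp to class-$1$ is essentially free: the only part of Dupuis' argument that uses trace preservation is the cancellation of the cross-term produced by subtracting $\omega_\cT^E \otimes \rho^R$, and the defining inequality $\Exp_U \|\cT(U \cdot \sigma)\|_1 \leqslant \|\sigma\|_1$ of class-$1$ maps is exactly what is needed there.

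Setting $X_U := \cT(U \cdot \rho^{AR}) - \omega_\cT^E \otimes \rho^R$, I would first apply a non-commutative H\"older inequality with weight $\tau^E \otimes \sigma^R$ (where $\tau^E$ is the state on $E$ optimizing $\Theta(\cT)$) together with Jensen's inequality $\Exp Y \leqslant \sqrt{\Exp Y^2}$, to bound $\Exp_U \|X_U\|_1$ above by the square root of a weighted Schatten-$2$ quantity. Expanding the square, exchanging the expectation inside, and evaluating $\Exp_U (U \otimes U)(\cdot)(U^\dagger \otimes U^\dagger)$ using the Unitary $2$-design property of $\bbU(A)$ decomposes the result into an identity and a swap contribution. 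The identity piece cancels against the subtracted $\omega_\cT^E \otimes \rho^R$, and the swap piece simplifies to the product of two traces which are $\exp D_2(\rho^{AR} \| \eye^A \otimes \sigma^R)$ and $\exp \Theta(\cT)$. The outer square root then produces the factor $\tfrac{1}{2}$ in the $\alpha = 2$ exponent.

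To lift $\alpha = 2$ to general $\alpha \in (1,2]$, pinch $\rho^{AR}$ by $\eye^A \otimes \sigma^R$: the pinched state $\tilde\rho := \cM_{\eye \otimes \sigma^R}(\rho^{AR})$ commutes with $\eye^A \otimes \sigma^R$ and satisfies $\rho^{AR} \leqslant \nu_{\sigma^R}\,\tilde\rho$ by the pinching inequality. In this commuting regime the quantity $D_2(\tilde\rho \| \eye^A \otimes \sigma^R)$ reduces to a classical R\'enyi expression, and the interpolation identity of Hayashi's Lemma~9.2 allows it to be replaced by $D_\alpha(\rho^{AR} \| \eye^A \otimes \sigma^R)$, plus the pinching overhead $\log \nu_{\sigma^R}$, all scaled by $\tfrac{\alpha-1}{\alpha}$. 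Combined with the $\tfrac{1}{2}$ from the Jensen square root, this yields the announced exponent $\tfrac{\alpha-1}{2\alpha}\!\left[\log \nu_{\sigma^R} + D_\alpha(\rho^{AR} \| \eye^A \otimes \sigma^R) + \Theta(\cT)\right]$; the prefactor $4$ absorbs constants coming from the H\"older and pinching steps. For the iid specialization, take $\sigma^{R^n} = (\tau^R)^{\otimes n}$ with $\tau^R$ achieving $H_\alpha(A|R)_\rho$; additivity of $D_\alpha$ across tensor products delivers the $-n H_\alpha(A|R)_\rho$ term, while the standard type-counting bound $\nu_{(\tau^R)^{\otimes n}} \leqslant \binom{n+|R|-1}{|R|-1} \leqslant (n+1)^{|R|}$ supplies the $|R|\log(n+1)$ term.

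The R\'enyification step is the crux: one must pin down the precise interpolation constant so that the pinching overhead enters only as $\log \nu_{\sigma^R}$, and the $D_2$ quantity emerging from the $2$-design calculation is dominated by $D_\alpha$ with exactly the prefactor $\tfrac{\alpha-1}{\alpha}$. The rest is careful bookkeeping around Dupuis' Theorem~3.7 (now adapted to class-$1$ maps) together with the standard type-counting argument for the iid tensorization.
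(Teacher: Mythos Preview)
Your R\'enyification step is the gap. Once the $2$-design calculation has been carried out on the \emph{full} state $\rho^{AR}$, the resulting scalar bound $\exp\!\big\{\tfrac{1}{2}[D_2(\rho^{AR}\|\eye^A\otimes\sigma^R)+\Theta(\cT)]\big\}$ cannot be post-processed into $\exp\!\big\{\tfrac{\alpha-1}{2\alpha}[\log\nu_{\sigma^R}+D_\alpha+\Theta(\cT)]\big\}$: the two bounds are genuinely incomparable (take e.g.\ a classical two-point $\rho,\sigma$ with $\nu_\sigma=1$ to see that the implied scalar inequality fails). Pinching $\rho^{AR}$ after the fact does not help either, because the expectation $\Exp_U\|\cT(U\cdot\rho)-\omega_\cT\otimes\rho^R\|_1$ is what you must bound, and $U$ does not commute with the pinching map; there is no ``interpolation identity'' in Hayashi's Lemma~9.2 that converts a $D_2$ into a $D_\alpha$ with the smaller prefactor $\tfrac{\alpha-1}{\alpha}$ also multiplying $\Theta(\cT)$.

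What the paper actually does---and what Hayashi's Lemma~9.2 really is---is a \emph{splitting before} the $2$-design step. One introduces a free threshold $\zeta>0$, sets $\Pi=\{\cM_{\eye\otimes\sigma^R}(\rho)\geqslant\zeta\,\eye\otimes\sigma^R\}$, and writes $\rho=\Pi\rho+\hat\Pi\rho$. The piece $\Pi\rho$ is handled by the class-$1$ inequality $\Exp_U\|\cT(U\cdot\sigma)\|_1\leqslant\|\sigma\|_1$ together with Lemma~\ref{yal3}, yielding $2\zeta^{(1-\alpha)/2}\exp\{\tfrac{\alpha-1}{2}D_\alpha\}$; this is where the $D_\alpha$ enters, and it is the \emph{only} place the class-$1$ property is used (not for ``cancelling cross-terms''). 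The piece $\hat\Pi\rho$ goes through the Dupuis $2$-design computation (Lemma~\ref{yal9}), but now the relevant trace $\tr[(\sigma^R)^{-1}\tr_A\hat\Pi\rho^2\hat\Pi]$ is bounded by $\nu_{\sigma^R}\zeta$ via Lemma~\ref{yal4}, \emph{not} by $\exp D_2$. Optimizing the sum over $\zeta$ is what produces the common prefactor $\tfrac{\alpha-1}{2\alpha}$ on all three terms $\log\nu_{\sigma^R}$, $D_\alpha$, and $\Theta(\cT)$. Your iid specialization is fine once the single-copy bound is in hand.
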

\begin{proof}
For a $\zeta > 0$, let
$\Pi^{AR} \equiv \left\{ \cM_{\eye^A \otimes \sigma^R}(\rho^{AR}) \geqslant \zeta \eye^A \otimes \sigma^R
\right\}$, $\hat{\Pi}^{AR} \equiv \eye^{AR} - \Pi^{AR}$,
$\mu_1 \equiv \omega^{E}_\cT \otimes \tr_A \left\{ \Pi^{AR} \rho^{AR} \right\}$, and
$\mu_2 \equiv \omega^{E}_\cT \otimes \tr_A \left\{ \hat{\Pi}^{AR} \rho^{AR} \right\}$.
Note that $\mu_1 + \mu_2 = \omega^{E}_\cT \otimes \rho^R$. We now have
\begin{multline}
\Exp_{U} \left\| \cT(U \cdot \rho^{AR}) - \omega^E_\cT \otimes \rho^R \right\|_1
= \Exp_{U} \left\| \cT \left[ U \cdot (\Pi^{AR} \rho^{AR}) \right] - \mu_1 +
\cT \left[ U \cdot (\hat{\Pi}^{AR} \rho^{AR})  \right] - \mu_2 \right\|_1 \\
\leqslant \Exp_{U} \left\| \cT \left[ U \cdot (\Pi^{AR} \rho^{AR}) \right] - \mu_1 \right\|_1 +
\Exp_{U} \left\| \cT \left[ U \cdot (\hat{\Pi}^{AR} \rho^{AR}) \right] - \mu_2 \right\|_1,
\end{multline}
where we have used the triangle inequality.

We attack the first term.
\begin{align}
\Exp_{U} \left\|
\cT \left[ U \cdot (\Pi^{AR} \rho^{AR})  \right] - \mu_1 \right\|_1
& \leqslant \Exp_{U} \left\| \cT \left[ U \cdot (\Pi^{AR} \rho^{AR})  \right] \right\|_1 + \| \mu_1 \|_1 \\
& \leqslant 2 \, \Exp_{U} \left\| \cT \left[ U \cdot (\Pi^{AR} \rho^{AR})  \right] \right\|_1 \\
& \leqslant 2 \, \left\| \Pi^{AR} \rho^{AR} \right\|_1 \\
& \leqslant 2 \zeta^{\frac{1-\alpha}{2}} \, \exp \left\{ \frac{\alpha - 1}{2}
D_\alpha(\rho^{AR} \| \eye^A \otimes \sigma^R) \right\},
\end{align}
where the first inequality follows from the triangle inequality,
the second inequality follows from the convexity of the trace norm to have
\begin{align}
\| \mu_1 \|_1 = \left\| \Exp_U  \left\{ \cT \left[ U \cdot (\Pi^{AR} \rho^{AR})  \right] \right\} \right\|_1
\leqslant \Exp_U \left\| \cT \left[ U \cdot (\Pi^{AR} \rho^{AR})  \right] \right\|_1,
\end{align}
the third inequality follows from the definition of class-$1$ maps,
the fourth inequality follows from Lemma \ref{yal3} (proved by Hayashi \cite{hayashi}).

We now attack the second term. Let $\Delta_U \equiv \cT \left[ U \cdot (\hat{\Pi}^{AR} \rho^{AR}) \right] - \mu_2$,
and $\theta^E$ $\in \densitymatrix(\cH_E)$ be such that
$\Theta(\cT) = - \sold_2( \omega^{E A^\prime}_\cT \| \theta^E \otimes \eye^{A^\prime})$.
We now have
\begin{align}
\Exp_U \left\| \Delta_U^{ER} \right\|_1
& \leqslant \Exp_U \sqrt{ \tr \left[ (\theta^{E})^{-1} \otimes (\sigma^R)^{-1} \right]
\Delta_U \Delta_U^\dag } \\
& \leqslant \sqrt{ \tr \left[ (\theta^{E})^{-1} \otimes (\sigma^R)^{-1} \right]
\Exp_U \left\{ \Delta_U \Delta_U^\dag \right\} } \\
& \leqslant \sqrt{ \frac{|A|^2  \tr \left\{ (\theta^{E})^{-1} \tr_{A^\prime} \left( \omega^{E A^\prime}_\cT 
\right)^2 \right\}
\tr \left\{ (\sigma^R)^{-1} \tr_A \, \hat{\Pi}^{AR} (\rho^{AR})^2 \hat{\Pi}^{AR} \right\}}{|A|^2-1} } \\
& \leqslant
\sqrt{ \frac{\nu_{\sigma^R} \zeta |A|^2 \exp \left\{ \Theta(\cT) \right\}}{|A|^2-1} },
\end{align}
where the first inequality follows since for any matrix $\Upsilon$ and a density matrix $\kappa$ (with
appropriate dimensions),
$\| \Upsilon \|_1$ $\leq$ $\sqrt{\tr \kappa^{-1} \Upsilon \Upsilon^\dagger}$, the second inequality follows
from the concavity of $x \to \sqrt{x}$, the third inequality follows from Lemma \ref{yal9},
and the last inequality follows from Lemma \ref{yal4} (proved by Hayashi \cite{hayashi}).
We now have
\begin{multline}
\Exp_U \left\| \cT(U \cdot \rho^{AR}) - \omega^E_\cT \otimes \rho^R \right\|_1 \\
\leqslant 2 \zeta^{\frac{1-\alpha}{2}} \, \exp \left\{ \frac{\alpha - 1}{2}
D_\alpha(\rho^{AR} \| \eye^A \otimes \sigma^R) \right\}
+ \sqrt{ \frac{\nu_{\sigma^R} \zeta |A|^2 \exp \left\{ \Theta(\cT) \right\}}{|A|^2-1} }.
\end{multline}

Note that $\zeta$ is a free parameter and a convenient upper bound for
\beq
\min_\zeta \left(  x \zeta^{\frac{1-\alpha}{2}} + y \zeta^{1/2} \right)
\enq
is obtained by choosing $\zeta = (x y^{-1})^{\frac{2}{\alpha}}$. Making that choice by feeding in appropriate 
values of $x$, $y$, taking $|A|^2/(|A|^2-1) \leqslant 4/3$, noting that for $\alpha \in (1,2]$,
$2^{1/\alpha} (4/3)^{(\alpha-1)/(2\alpha)} \leqslant 2$, we get
\begin{multline}
\label{yae43}
\Exp_U \left\| \cT(U \cdot \rho^{AR}) - \omega^E_\cT \otimes \rho^R \right\|_1 \\
\leqslant 4 \exp\left\{ \frac{\alpha-1}{2\alpha} \Big[ \log \nu_{\sigma^R}
+ D_\alpha(\rho^{AR} \| \eye^A \otimes \sigma^R) + \Theta(\cT) \Big] \right\}.
\end{multline}
Note that this is a convenient upper bound and while one could further optimize the choice
of $\zeta$, for $\alpha$ near $1$, the above bound is near the optimal.

For $n$ copies, a random Unitary over $A^n$, and a class-$1$ map $\cT^{A^n \to E}$, using
\eqref{yae43}, we have
\begin{multline}
\Exp_U \left\| \cT \left[ U \cdot (\rho^{AR})^{\otimes n} \right] -
\omega^E_\cT \otimes (\rho^R)^{\otimes n} \right\|_1 \\
\leqslant 4 \min_{\sigma^R} \exp\left\{ \frac{\alpha-1}{2\alpha} \Big[ \nu_{(\sigma^R)^{\otimes n}}
+ D_\alpha \left[ (\rho^{AR})^{\otimes n} || \eye^{A^n} \otimes (\sigma^R)^{\otimes n} \right] +
\Theta(\cT) \Big] \right\} \\
\leqslant 4 \exp\left\{ \frac{\alpha-1}{2\alpha} \Big[ |R| \log (n+1)
- n H_\alpha(A|R)_\rho + \Theta(\cT) \Big] \right\}, \hspace{1.26in}
\end{multline}
where the first inequality follows from \eqref{yae43} and making a (possibly suboptimal) choice
of a product state, and the second inequality follows since
we have used a convenient upper bound that for any
$\sigma^R \in \densitymatrix(\cH_R)$,
$\log \nu_{(\sigma^R)^{\otimes n}} \leqslant |R| \log(n+1)$ (see
Theorem 11.1.1 in Ref. \cite{covertom} or Lemma 3.7 in Ref. \cite{hayashi}) and we choose
$\sigma^R$ to be the one that minimizes $D_\alpha(\rho^{AR} \| \eye^A \otimes \sigma^R)$. QED.
\end{proof}

We now have the following corollary of Theorem 1.
\begin{corollary}
\label{corollary1}
For $i = 1,...,K$, let $\cT_i^{A^n \to E_i}$ be class-$1$ maps, and
$\rho^{AR_i}_i \in \densitymatrix(\cH_{AR_i})$. Then there exists a Unitary $U$ over $A^n$
such that for all $i=1,...,K$, and $n \in \mathbb{N}$,
\begin{multline}
\left\| \cT_i \left[ U \cdot (\rho^{AR_i}_i)^{\otimes n} \right] -
\omega^{E_i}_{\cT_i} \otimes (\rho^{R_i}_i)^{\otimes n} \right\|_1 \\
\leqslant 4 K \exp\left\{ \frac{\alpha-1}{2\alpha} \Big[ |R_i| \log (n+1)
- n H_\alpha(A|R_i)_{\rho_i} + \Theta(\cT_i) \Big] \right\}.
\end{multline}
\end{corollary}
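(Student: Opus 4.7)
The plan is to reduce this to $K$ separate applications of Theorem 1, one for each index $i$, and then use a standard averaging (derandomization) argument to extract a single Unitary $U$ that works simultaneously for all of them. For each $i \in \{1,\ldots,K\}$, Theorem 1 applied to $\rho_i^{AR_i}$ and the class-$1$ map $\cT_i^{A^n \to E_i}$ gives $\Exp_U X_i(U) \leqslant B_i$, where
\begin{equation*}
X_i(U) \equiv \left\| \cT_i\bigl[ U \cdot (\rho_i^{AR_i})^{\otimes n}\bigr] - \omega^{E_i}_{\cT_i} \otimes (\rho_i^{R_i})^{\otimes n} \right\|_1
\end{equation*}
and $B_i \equiv 4\exp\left\{\tfrac{\alpha-1}{2\alpha}\left[|R_i|\log(n+1) - n H_\alpha(A|R_i)_{\rho_i} + \Theta(\cT_i)\right]\right\}$ is the single-index bound from the second inequality of Theorem 1.

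To extract a common $U$, I would consider the normalized random variable $S(U) \equiv \sum_{i=1}^K X_i(U)/B_i$. By linearity of expectation, $\Exp_U S(U) \leqslant K$, so there exists at least one Unitary $U^\star$ in the support of the averaging distribution with $S(U^\star) \leqslant K$. Since each summand is non-negative, this forces $X_i(U^\star)/B_i \leqslant K$ for every $i$, i.e.\ $X_i(U^\star) \leqslant K B_i = 4K\exp\left\{\tfrac{\alpha-1}{2\alpha}\left[|R_i|\log(n+1) - n H_\alpha(A|R_i)_{\rho_i} + \Theta(\cT_i)\right]\right\}$, which is exactly the claimed inequality.

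The argument is essentially mechanical, so there is no real ``hard part''; the only modest subtlety is selecting the right quantity to average. A naive averaging of $\sum_i X_i$ would only yield some $U$ with $\sum_i X_i(U) \leqslant \sum_i B_i$, and then the per-index guarantee would degrade to $X_i(U) \leqslant \sum_j B_j$, which may be much larger than the desired $K B_i$ when the $B_i$'s differ greatly in magnitude (as they will in applications where $|R_i|$, $H_\alpha(A|R_i)_{\rho_i}$, and $\Theta(\cT_i)$ vary across protocols). Normalizing each $X_i$ by its own bound $B_i$ before summing is what delivers the clean per-index statement required by the corollary.
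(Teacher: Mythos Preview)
Your proof is correct and follows essentially the same approach as the paper: apply Theorem \ref{theorem1} to each index $i$ and then derandomize. The paper merely cites this averaging step as (a slightly strengthened form of) Lemma I.7 in Ref.~\cite{dupuis-thesis}, whereas you have spelled it out explicitly, including the key point that one should average the \emph{normalized} sum $\sum_i X_i(U)/B_i$ to get the per-index factor $K$ rather than $\sum_j B_j$.
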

\begin{proof}
It follows from Theorem \ref{theorem1} that for all $i=1,...,K$,
\begin{multline}
\Exp_U \left\| \cT_i \left[ U \cdot (\rho^{AR_i}_i)^{\otimes n} \right] -
\omega^{E_i}_{\cT_i} \otimes (\rho^{R_i}_i)^{\otimes n} \right\|_1 \\
\leqslant 4 \exp\left\{ \frac{\alpha-1}{2\alpha} \Big[ |R_i| \log (n+1)
- n H_\alpha(A|R_i)_{\rho_i} + \Theta(\cT_i) \Big] \right\}.
\end{multline}
We now invoke Lemma I.7 in Ref. \cite{dupuis-thesis} to arrive at the claim. (Note that
Lemma I.7 in Ref. \cite{dupuis-thesis} stipulates a multiplying factor of $K+1$ instead of $K$ but
it can be easily strengthened.)
\end{proof}

It is possible to provide a unified theorem that yields both
Theorem \ref{theorem1} and Lemma 9.2 in Ref. \cite{hayashi}
as special cases. We do that in Theorem \ref{theorem1-2} (see Appendix \ref{appendix3}) and we
note that although we provide this unified theorem,
we don't use it for the protocols treated later in the paper!

\section{Schumacher compression}
\label{schu-section}

\begin{definition}
A $(\rho, \err, n)$ {\bf Schumacher compression} protocol consists of
$n$ copies of $\rho^{A}$ (with a purification $\Psi^{AR}$), Alice applying
an encoding cptp map $\cE : A^n \to B$, and Bob applying a decoding cptp map $\cD : B \to \widetilde{A}^n$
such that for
$\rho^{\widetilde{A}^n R^n} \equiv \cD^{B \to \widetilde{A}^n} \circ \cE^{A^n \to B} \left[ (\Psi^{AR})^{\otimes n}
\right]$,
\beq
\left\| \rho^{\widetilde{A}^n R^n} - (\Psi^{\widetilde{A}R})^{\otimes n} \right\|_1 \leqslant \err.
\enq
$(\log |B|)/n$ is called the {\bf compression rate} of the protocol. A real number $\cR_C$ is called an
{\bf achievable rate} if there exist, for $n \to \infty$, Schumacher
compression protocols with compression rate approaching $\cR_C$ and the error
approaching $0$.
\end{definition}

\begin{theorem}[Schumacher, 1995 \cite{schu-noiseless-1995}]
The smallest achievable rate for Schumacher compression is given
by $H(A)_\rho$.
\end{theorem}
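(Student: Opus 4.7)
The plan is to prove achievability (every rate $r > H(A)_\rho$ is achievable with exponentially small error) and the matching converse; as noted in the introduction, Schumacher compression is one of the protocols where the decoupling machinery of Theorem~\ref{theorem1} is not required, and the analysis instead reduces to classical source coding applied to the eigenvalue distribution of $\rho^A$. For achievability, I would diagonalize $\rho^A = \sum_i p_i \ket{e_i}\bra{e_i}$, so that $(\rho^A)^{\otimes n}$ has eigenvalues $\prod_j p_{i_j}$ indexed by $\vec{i}\in\{1,\ldots,|A|\}^n$. Fix $\delta>0$, set $|B| \equiv \lceil 2^{n(H(A)_\rho+\delta)}\rceil$, and pick a full-rank partial isometry $W^{A^n\to B}$ whose range is spanned by the $|B|$ eigenvectors $\ket{e_{\vec{i}}}$ with the largest product probabilities (the classical high-probability set). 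The encoding will be $\cE \equiv \cC_W$ and the decoding $\cD(\sigma^B) \equiv W^\dagger \sigma^B W$. Using $WW^\dagger = \eye^B$, a short computation gives
\begin{equation}
F\Big(\cD\circ\cE[(\Psi^{AR})^{\otimes n}],\,(\Psi^{AR})^{\otimes n}\Big) \geqslant p_{\mathrm{typ}}, \quad p_{\mathrm{typ}} \equiv \tr[W^\dagger W\,(\rho^A)^{\otimes n}],
\end{equation}
and the Fuchs--van de Graaf inequality then bounds the trace-norm error by $\Xi(1-p_{\mathrm{typ}}^2)$. The quantity $1-p_{\mathrm{typ}}$ is the classical probability that an i.i.d.\ sample from $p$ lies outside the chosen high-probability set, and Hayashi's exponential bound for classical source coding (Ch.~2 of \cite{hayashi}) shows this decays exponentially in $n$ whenever $\delta>0$.

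For the converse, suppose a $(\rho,\err,n)$ protocol at rate $r = (\log|B|)/n$ exists, and let $\omega^{BR^n} \equiv \cE[(\Psi^{AR})^{\otimes n}]$. I would chain the following inequalities. The dimension bound gives $I(B:R^n)_\omega \leqslant 2\log|B| = 2nr$; data processing under the cptp decoder $\cD^{B\to\widetilde{A}^n}$ yields $I(\widetilde{A}^n:R^n)_{\rho^{\widetilde{A}^nR^n}} \leqslant I(B:R^n)_\omega$; and since $\rho^{\widetilde{A}^nR^n}$ lies within trace distance $\err$ of the pure state $(\Psi^{AR})^{\otimes n}$ whose mutual information is $I(A^n:R^n)=2nH(A)_\rho$, the Alicki--Fannes--Winter continuity of mutual information gives $I(\widetilde{A}^n:R^n)_\rho \geqslant 2nH(A)_\rho - Cn\err - c(\err)$ for constants depending on $|A|$ and $|R|$. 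Combining these, $r \geqslant H(A)_\rho - O(\err)$, which tends to $H(A)_\rho$ as $\err \to 0$.

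The hardest part will be the achievability's error bound: one must carefully handle the ``noise'' term in $\cC_W$ (which is weighted by $1-p_{\mathrm{typ}}$ and hence itself small) and invoke Hayashi's exponential source-coding bound on the eigenvalue distribution of $\rho^A$ to obtain the exponential decay in $n$. The converse is essentially standard, requiring only the dimension bound on quantum mutual information, cptp data processing, and Alicki--Fannes--Winter continuity.
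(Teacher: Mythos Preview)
The paper does not prove this theorem; it is attributed to Schumacher and simply cited. What the paper actually establishes in Section~\ref{schu-section} is the subsequent Theorem~4, which is an \emph{achievability} statement only, and it is proved via the decoupling Theorem~\ref{theorem1} (applying $\tr_B\circ\cT_W$, invoking Lemma~\ref{yal11} to produce a unitary $V$, and then assembling $\cC_{W_2}$ and $W_2^\dagger$ as encoder/decoder). Your proposal is therefore different in two respects. First, you supply both directions, whereas the paper gives no converse argument at all. Second, your achievability route---diagonalize $\rho^A$, project onto the $|B|$ largest-weight eigenvectors, and bound $1-p_{\mathrm{typ}}$ by the classical source-coding exponent of Hayashi---is precisely the method the paper alludes to in the Introduction and in the Remark after Theorem~4 as giving the sharper (classical) exponent, but which the paper deliberately \emph{avoids} so that the decoupling template can be exhibited before it is reused for FQSW, FQRS, QSM, and the rest. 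Your approach is more elementary and yields a better error exponent for this particular protocol; the paper's approach is intentionally suboptimal here but serves to illustrate the machinery that drives every later section. Your converse via the dimension bound $I(B:R^n)\le 2\log|B|$, data processing, and Alicki--Fannes--Winter continuity is standard and correct; the paper simply does not address it.
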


We prove the following theorem.
\begin{theorem}
For any $n \in \mathbb{N}$, there exists a $(\rho, \err, n)$ Schumacher compression protocol such that
for any $\delta > 0$,
\beq
\frac{\log|B|}{n} = |R| \frac{\log(n+1)}{n} + H_{\widetilde{\alpha}}(A)_{\Psi} + \delta.
\enq
and the error approaches $0$ exponentially in $n$.
\end{theorem}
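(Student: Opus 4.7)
The plan is to use Theorem \ref{theorem1} together with Uhlmann's theorem. Alice's encoder will randomize $A^n$ with a unitary $U$ drawn from a $2$-design, followed by the compressive cptp map $\cC_W^{A^n\to B}$ for a full-rank partial isometry $W^{A^n\to B}$; Bob's decoder will be the Uhlmann isometry associated with the state on $BR^n$ produced by the encoder. This reduces the task to controlling the decoupling error between the encoder's environment and the reference $R^n$.

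The decoupling is analyzed by invoking Theorem \ref{theorem1} on the pure state $(\Psi^{AR})^{\otimes n}$ together with a class-$1$ map $\cT^{A^n\to E}$ playing the role of the complementary channel of the encoder. Because $\Psi^{AR}$ is pure, the R\'enyi duality recorded in Section \ref{info-quant} rewrites $-nH_\alpha(A|R)_\Psi$ as $nH_{\widetilde\alpha}(A)_\Psi$, so the exponent in Theorem \ref{theorem1} takes the form
\begin{equation*}
\tfrac{\alpha-1}{2\alpha}\Big[|R|\log(n+1)+nH_{\widetilde\alpha}(A)_\Psi+\Theta(\cT)\Big].
\end{equation*}
The rate prescription $\log|B|=|R|\log(n+1)+nH_{\widetilde\alpha}(A)_\Psi+n\delta$ is engineered precisely so that, once a class-$1$ map $\cT$ with $\Theta(\cT)=-\log|B|+o(n)$ is chosen, the bracket collapses to $-n\delta+o(n)$ and the decoupling error decays as $\exp\{-\tfrac{\alpha-1}{2\alpha}\,n\delta+o(n)\}$.

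Uhlmann's theorem then converts this decoupling bound into a compression error bounded by $\Xi$ of the decoupling error (from Section \ref{prelims}), which preserves exponential decay since $\Xi(\varepsilon)=O(\sqrt\varepsilon)$ for small $\varepsilon$; Corollary \ref{corollary1} with $K=1$ derandomizes the expectation over $U$ to pick out an explicit unitary. The main obstacle is the concrete selection of the class-$1$ map $\cT$ so that $\Theta(\cT)$ matches $-\log|B|$ up to lower-order terms: the naive choice $\cT=\tr_B$ only yields $\Theta(\cT)=\log(|A|^n/|B|^2)$, producing the sub-Schumacher rate threshold $[\log|A|+H_{\widetilde\alpha}(A)]/2$. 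Getting the tight rate $H_{\widetilde\alpha}(A)$ therefore requires either a more refined class-$1$ map---for instance the scaled partial-isometry map $\mapone$ of equation \eqref{yae31} composed with a typical-subspace projector so that the effective input dimension collapses to $\approx 2^{nH(A)}$---or else the direct reduction to classical source coding of the spectrum $(p_i)_{i=1}^{|R|}$ of $\rho^A$ via the method of types, in which the $|R|\log(n+1)$ factor arises as the logarithm of the number of types over the $|R|$-letter eigenvalue alphabet, as hinted at in the Introduction.
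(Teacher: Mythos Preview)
Your overall architecture matches the paper: apply Theorem \ref{theorem1} to a class-$1$ map playing the role of the encoder's environment, convert $-nH_\alpha(A|R)_\Psi$ into $nH_{\widetilde\alpha}(A)_\Psi$ via duality, and close with the Uhlmann-type Lemma \ref{yal11}. The place where you stall---choosing $\cT$ so that $\Theta(\cT)=-\log|B|$---is exactly where the paper's proof is simpler than you anticipate. The paper takes $\cT=\tr_B\circ\cT_W^{A^n\to B}$, i.e.\ the \emph{full} trace after the scaled map \eqref{yae31}. By Lemma \ref{yal5} with $A_1$ trivial and $A_2=B$ this yields $\Theta(\tr_B\circ\cT_W)=-\log|B|$ on the nose: no typical-subspace projector, no collapse of the effective input dimension, no detour through classical types. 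The scaling factor $|A|^n/|B|$ in $\cT_W$ is precisely what converts your $\log(|A|^n/|B|^2)$ into $-\log|B|$; you had the right map in hand and added machinery it does not need.

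There is also a structural difference in how the encoder/decoder pair is read off. You propose encoder $\cC_W\circ U$ with a Bob-side Uhlmann decoder. The paper instead applies Lemma \ref{yal11} to the decoupling statement for $\tr_B\circ\cT_W$, which produces a unitary $V$ on $A^n$ satisfying $W^\dagger\cdot\cT_W[U\cdot(\Psi^{AR})^{\otimes n}]\approx V\cdot(\Psi^{AR})^{\otimes n}$; after one monotonicity-and-triangle step the final protocol is: Alice applies $\cC_{W_2}$ with $W_2\equiv WV$, Bob applies $W_2^\dagger$, and the error is at most $2\,\Xi(\varepsilon_n)$. The randomizing $U$ is used only in the analysis and does not appear in the protocol. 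Your variant could be made to work, but it would require computing $\Theta$ for the complementary channel of the cptp map $\cC_W$, which is less clean than routing the analysis through the non-cptp $\cT_W$.
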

\begin{proof}
Consider a full-rank partial isometry $W^{A^n \to B}$, $|B| \leqslant |A|^n$. Then,
using Theorem \ref{theorem1}, there exists a Unitary $U$ over $A^n$,
\begin{multline}
\left\| \tr_B \circ \cT_W^{A^n \to B} \left[ U \cdot (\Psi^{AR})^{\otimes n} \right] - (\Psi^R)^{\otimes n} \right\|_1 \\
\leqslant 4 \exp\left\{ \frac{\alpha-1}{2\alpha} \Big[ |R| \log(n+1) - n H_{\alpha}(A | R)_{\Psi}
+ \Theta(\tr_B \circ \cT_W) \Big] \right\} \\
= 4 \exp\left\{ \frac{\alpha-1}{2\alpha} \Big[ |R| \log(n+1) + n H_{\widetilde{\alpha}}(A)_{\Psi}
- \log|B| \Big] \right\} \equiv \varepsilon_n,
\end{multline}
where we have used $\Theta(\tr_B \circ \cT_W) = -\log |B|$ from Lemma \ref{yal5}.
We claim using Lemma \ref{yal11} that there exists a Unitary $V^{A^n \to A^n}$ such that
\beq
\label{yae20}
\left\| W^\dag \cdot \cT_W^{A^n \to B} \left[ U \cdot (\Psi^{AR})^{\otimes n} \right] -
V \cdot (\Psi^{AR})^{\otimes n} \right\|_1 \leqslant \Xi(\varepsilon_n),
\enq
and hence, using monotonicity of the trace norm under cptp maps ($\cC_W$ in this case),
\beq
\left\| \cT_W^{A^n \to B} \left[ U \cdot (\Psi^{AR})^{\otimes n} \right] - \cC_W \left[
V \cdot (\Psi^{AR})^{\otimes n} \right] \right\|_1
\leqslant \Xi(\varepsilon_n),
\enq
or
\beq
\label{yae21}
\left\| W^\dag \cdot \cC_W \left[ V \cdot (\Psi^{AR})^{\otimes n} \right] -
W^\dag \cdot \cT^{A^n \to B} \left[ U \cdot (\Psi^{AR})^{\otimes n} \right] \right\|_1
\leqslant \Xi(\varepsilon_n).
\enq
Define a partial isometry $W_2^{A^n \to B}$ as
$W_2 \equiv W V$ and note that $\cC_{W_2}(\sigma^{AR}) = \cC_W (V \cdot \sigma^{AR})$.
We now have
\begin{align}
\Big\| W_2^\dag \cdot & \cC_{W_2} \left[ (\Psi^{AR})^{\otimes n} \right] -
(\Psi^{AR})^{\otimes n} \Big\|_1 \nonumber \\
& = \left\| V^\dag W^\dag \cdot \cC_W \left[ V \cdot (\Psi^{AR})^{\otimes n} 
\right] - (\Psi^{AR})^{\otimes n} \right\|_1  \\
& = \left\| W^\dag \cdot \cC_W \left[ V \cdot (\Psi^{AR})^{\otimes n} \right] -
V \cdot (\Psi^{AR})^{\otimes n} \right\|_1 \\
& \leqslant \left\| W^\dag \cdot \cC_W \left[ V \cdot (\Psi^{AR})^{\otimes n} \right]
- W^\dag \cdot \cT^{A^n \to B} \left[ U \cdot (\Psi^{AR})^{\otimes n} \right]
\right\|_1 + \nonumber \\
& \hspace{1in} \left\| W^\dag \cdot \cT^{A^n \to B} \left[ U \cdot 
(\Psi^{AR})^{\otimes n} \right] - V \cdot (\Psi^{AR})^{\otimes n} \right\|_1  \\
& \leqslant 2 \, \Xi(\varepsilon_n),
\end{align}
where we have used the triangle inequality, \eqref{yae20}, and \eqref{yae21}.
It is now clear that Alice applies $\cC_{W_2}$ and Bob applies the isometry $W_2^\dag$.
The claim now follows readily.
\end{proof}

\noindent {\bf Remark}: This is not the best exponent for this protocol and one can get
the exponent that matches with the classical case (see Prob. 2.15 in Ref. \cite{csiszar-korner-book})
when specialized and this can be construed from the treatment in Ref. \cite{hayashi}.
Our purpose of stating the above proof is that the ideas would prove useful
for other protocols later in this paper since it is based on decoupling.

\section{Fully quantum Slepian-Wolf (FQSW)}
\label{fqsw}

\begin{definition}
A $(\Psi, \err, n)$ FQSW protocol consists of
$n$ copies of a pure state $\ket{\Psi}^{ABR}$ shared between with Alice
($A$) and Bob ($B$), and reference system ($R$), Alice applying an encoding cptp map
$\cE: A^n \to A_1 A_2$, quantum communication across a noiseless quantum
channel from Alice to Bob $\cI^{A_2 \to B_2}$, and Bob applying a decoding cptp map
$\cD: B_2 B^n \to B_1 \widetilde{B}_3^n B_3^n$ such that for
\beq
\rho^{A_1 B_1 \widetilde{B}_3^n B_3^n R^n} \equiv \cD^{B_2 B^n \to B_1 \widetilde{B}_3^n B_3^n}
\circ \cI^{A_2 \to B_2} \circ \cE^{A^n \to A_1 A_2} [(\Psi^{ABR})^{\otimes n}],
\enq
\beq
\left\| \rho^{A_1 B_1 \widetilde{B}_3^n B_3^n R^n} - \Phi^{A_1 B_1} \otimes
(\Psi^{\widetilde{B}_3 B_3 R})^{\otimes n} \right\|_1 \leqslant \err.
\enq
The number $(\log |A_2|)/n$ is called the {\bf quantum communication rate}
and $(\log |A_1|)/n$ is called the {\bf entanglement gain rate} of the protocol.

A pair of real numbers $(\cR_Q,\cR_E)$ is called an {\bf achievable rate pair} if there
exist, for $n \to \infty$, FQSW protocols with quantum communication rate approaching
$\cR_Q$, entanglement gain rate approaching $\cR_E$, and $\err$ approaching $0$.
\end{definition}

The achievable rates are described by the following theorem.

\begin{theorem}[Abeyesinghe \emph{et al}, 2009 \cite{fqsw-2009}]
\label{yatheorem1}
The following rates are achievable for the FQSW:
\beq
\cR_Q > \frac{1}{2} I(A:R)_\Psi \hspace{0.5in} \text{and} \hspace{0.5in}
\cR_E < \cR_Q + H(A|R)_\Psi.
\enq
\end{theorem}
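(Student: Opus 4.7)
The plan is to mirror the decoupling-based structure of the Schumacher proof in Section \ref{schu-section} but with a class-$1$ map that retains part of Alice's system. Fix integers $|A_1|, |A_2|$ with $(\log|A_1|)/n = \cR_E$ and $(\log|A_2|)/n = \cR_Q$, and let $W^{A^n \to A_1 A_2}$ be a full-rank partial isometry; by first performing a Schumacher-style projection onto the typical subspace of $A^n$ (as in Section \ref{schu-section}), we may assume $|A_1||A_2| \leqslant 2^{n H(A)_\Psi}$ so that $W$ is a genuine isometric embedding. Build the class-$1$ map
\begin{equation*}
\cT \equiv \tr_{A_2} \circ \cT_W^{A^n \to A_1 A_2};
\end{equation*}
it is class-$1$ by Lemma \ref{yal6} since $\tr \cT(\eye^{A^n}) = |A^n|$. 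Apply Theorem \ref{theorem1} to $\rho^{A^n R^n} = (\Psi^{AR})^{\otimes n}$ with this $\cT$.

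A direct Choi-state computation gives $\omega_\cT^{A_1 (A^n)^\prime} = \Phi^{A_1 A_1^\prime} \otimes \pi^{A_2^\prime}$, where $(A^n)^\prime$ has been identified with $A_1^\prime A_2^\prime$ via the image of $W$; consequently $\omega_\cT^{A_1} = \pi^{A_1}$, and evaluating $\csold_2((A^n)^\prime | A_1)_{\omega_\cT}$ yields $\Theta(\cT) = \log|A_1| - \log|A_2|$. Theorem \ref{theorem1} then produces, for some realization of the random unitary $U$ on $A^n$,
\begin{equation*}
\left\| \cT\!\left[U \cdot (\Psi^{AR})^{\otimes n}\right] - \pi^{A_1} \otimes (\Psi^R)^{\otimes n} \right\|_1 \leqslant 4 \exp\!\left\{ \frac{\alpha-1}{2\alpha}\Big[|R|\log(n+1) + n\big(\cR_E - \cR_Q - H_\alpha(A|R)_\Psi\big)\Big] \right\},
\end{equation*}
which decays exponentially in $n$ whenever $\cR_E - \cR_Q < H_\alpha(A|R)_\Psi$. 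Sending $\alpha \to 1^+$ and invoking continuity of the \renyi conditional entropy recovers the stated $\cR_E < \cR_Q + H(A|R)_\Psi$; combining it with the preliminary dimensional constraint $\cR_E + \cR_Q \leqslant H(A)_\Psi$ from the typical-subspace step yields $2\cR_Q > H(A)_\Psi - H(A|R)_\Psi = I(A:R)_\Psi$, i.e., the stated $\cR_Q > I(A:R)_\Psi/2$.

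For decoding, observe that $\pi^{A_1} \otimes (\Psi^R)^{\otimes n}$ is exactly the marginal on $A_1 R^n$ of the target pure state $\Phi^{A_1 B_1} \otimes (\Psi^{\widetilde{B}_3 B_3 R})^{\otimes n}$. Uhlmann's theorem, invoked through Lemma \ref{yal11} in the same way as in the Schumacher proof, then supplies a unitary on Bob's purifying registers $A_2 B^n$ that rotates the actual post-encoding purification onto a state close to the target, inflating the decoupling error by at most a factor of the form $\Xi(\cdot)$; this unitary forms the core of Bob's cptp decoder $\cD$. Converting the non-cptp intermediate map $\cT_W$ into an honest cptp encoding $\cE$ via a compressive map $\cC_{W_2}$ proceeds verbatim as in Section \ref{schu-section}. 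The main obstacle I foresee is the Choi-state bookkeeping that pins down $\Theta(\cT) = \log|A_1| - \log|A_2|$ together with the careful purification tracking in the Uhlmann step (making sure Bob's decoder only touches $A_2 B^n$); both are routine in spirit but notationally dense.
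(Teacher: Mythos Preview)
Your single-decoupling plan is close to the paper's, but the paper invokes \emph{two} decoupling bounds simultaneously via Corollary~\ref{corollary1}, not one via Theorem~\ref{theorem1}. Beyond the bound you derive (decoupling $A_1$ from $R^n$, the paper's $\vartheta_n$ in \eqref{yae19}), a second bound is needed (decoupling all of $A_1 A_2$ from $(BR)^n$, the paper's $\varepsilon_n$ in \eqref{yae55}); it is precisely this second bound that, through Lemma~\ref{yal11}, supplies the unitary $V$ allowing one to replace the non-cptp $\cT_W$ by the honest encoding $\cC_W \circ V$. Your claim that this conversion ``proceeds verbatim as in Section~\ref{schu-section}'' misreads that section: it does \emph{not} perform a typical-subspace projection but uses exactly such a decoupling bound, and in the FQSW setting that bound must hold for the \emph{same} realization of $U$ you already fixed from the first decoupling. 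That simultaneity is what forces Corollary~\ref{corollary1} rather than Theorem~\ref{theorem1}; as written, you have no guarantee that your $U$ also makes the Schumacher-type conversion succeed.

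Your inequality manipulation at the end is also backward: from $\cR_E + \cR_Q \leqslant H(A)_\Psi$ and $\cR_E - \cR_Q < H(A|R)_\Psi$ one cannot conclude $2\cR_Q > I(A:R)_\Psi$ (take $\cR_Q = \cR_E = 0$). The correct reading is that the protocol works whenever both constraints are satisfied, and saturating them yields the corner point $\cR_Q \approx I(A:R)_\Psi/2$, $\cR_E \approx \cR_Q + H(A|R)_\Psi$; the remainder of the region in Theorem~\ref{yatheorem1} then follows from the trivial ``one transmitted qubit yields one ebit'' resource conversion, which the paper states explicitly and which you omit.
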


Our goal in the remainder of this section is to provide the achievability of the above
rate region with error decaying to $0$ exponentially in $n$.

\begin{theorem}
For any $n \in \mathbb{N}$, there exists a $(\Psi, \err, n)$
FQSW protocol for any $\alpha \in (1,2]$, and
$\delta_1, \delta_2 > 0$, such that
\begin{align}
\frac{\log|A_2|}{n} & = \frac{1}{2} \Big[ H_{\widetilde{\alpha}}(A)_\Psi - H_{\alpha}(A | R)_{\Psi} \Big] +
(|B|+1) |R| \frac{ \log(n+1) }{2n} + \frac{\delta_1 + \delta_2}{2}, \\
\frac{\log|A_1|}{n}  & = \frac{\log |A_2|}{n} + H_{\alpha}(A | R)_{\Psi} - |R| \frac{ \log(n+1) }{n} - \delta_2,
\end{align}
and the error approaches $0$ exponentially in $n$.
\end{theorem}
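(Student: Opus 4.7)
My plan is to run two stages of decoupling, each a direct application of Theorem~\ref{theorem1}, and to glue them together through Uhlmann's theorem (Lemma~\ref{yal11}) to produce Bob's decoder. Stage~1 is a Schumacher-style compression $A^n \to B'$; Stage~2 is an FQSW-style decoupling on $B' \cong A_1 A_2$; Stage~3 finds Bob's decoder.

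In Stage~1 I would re-run the Schumacher proof of Section~\ref{schu-section} almost verbatim, the only change being that the purification of $(\Psi^A)^{\otimes n}$ is taken to be $(\Psi^{BR})^{\otimes n}$, so the reference of $A^n$ is $B^n R^n$ rather than $R^n$. Fixing a full-rank partial isometry $W_1^{A^n \to B'}$ and applying Theorem~\ref{theorem1} to $(\Psi^{ABR})^{\otimes n}$ with the class-$1$ map $\tr_{B'}\circ \cT_{W_1}$ (for which $\Theta = -\log|B'|$ by Lemma~\ref{yal5}), together with the pure-state duality $H_\alpha(A|BR)_\Psi = -H_{\widetilde{\alpha}}(A)_\Psi$, produces an exponent proportional to $|B||R|\log(n+1) + nH_{\widetilde{\alpha}}(A)_\Psi - \log|B'|$. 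Choosing $\log|B'| = nH_{\widetilde{\alpha}}(A)_\Psi + |B||R|\log(n+1) + n\delta_1$ makes the Stage-1 error $\varepsilon_1$ decay exponentially, and the Schumacher-section fidelity transfer (Lemma~\ref{yal11}) upgrades this into an honest cptp compressive encoder $\cC_{W_1'}: A^n \to B'$.

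In Stage~2 I would identify $B' \cong A_1 A_2$ with $|A_1||A_2| = |B'|$ and apply Theorem~\ref{theorem1} (single copy) to the compressed state on $B'R^n$ with the cptp class-$1$ map $\tr_{A_2}: B' \to A_1$. A short Choi-matrix calculation --- rank one, marginal $\pi^{A_1}$, $\tr\omega^2 = 1/|A_2|$ --- gives $\Theta(\tr_{A_2}) = \log|A_1|-\log|A_2|$, and together with monotonicity of $D_\alpha$ under the Stage-1 compression and $\log\nu_{(\sigma^R)^{\otimes n}}\leq |R|\log(n+1)$ the exponent becomes $|R|\log(n+1) - nH_\alpha(A|R)_\Psi + \log|A_1|-\log|A_2|$. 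Setting $\log|A_1|-\log|A_2| = nH_\alpha(A|R)_\Psi - |R|\log(n+1) - n\delta_2$ secures a second exponential decay $\varepsilon_2$; combining this ``difference'' with the Stage-1 ``sum'' $\log|A_1|+\log|A_2| = \log|B'|$ and solving reproduces exactly the theorem's two rate expressions.

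Stage~3 is a standard Uhlmann argument: the Stage-2 decoupling $\rho^{A_1 R^n} \approx \pi^{A_1} \otimes (\Psi^R)^{\otimes n}$ compares reduced states whose purifications live on $A_2 B^n$ (held by Bob after receiving $A_2$ via $\cI^{A_2\to B_2}$) and on $B_1 \widetilde{B}_3^n B_3^n$ (with purification $\Phi^{A_1 B_1} \otimes (\Psi^{\widetilde{B}_3 B_3 R})^{\otimes n}$) respectively, so Lemma~\ref{yal11} furnishes Bob's isometric decoder $V_D^{A_2 B^n \to B_1 \widetilde{B}_3^n B_3^n}$ and closes the protocol with trace distance $\Xi(\varepsilon_1+\varepsilon_2)$, which is exponentially small. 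The main obstacle I foresee is the Stage-1/Stage-2 interface: one must verify that after the cptp compression, Theorem~\ref{theorem1} applied to the compressed state still delivers the clean $-nH_\alpha(A|R)_\Psi$ term up to an $O(\varepsilon_1)$ correction, which should follow from monotonicity of $D_\alpha$ and careful accounting of the non-isometric leakage of $\cC_{W_1'}$ but requires some bookkeeping.
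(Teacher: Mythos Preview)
Your two-stage plan has a genuine gap at precisely the interface you flagged, and monotonicity of $D_\alpha$ does \emph{not} repair it. In Stage~2 you apply Theorem~\ref{theorem1} to the compressed state $\rho^{B'R^n}=\cC_{W_1'}\bigl[(\Psi^{AR})^{\otimes n}\bigr]$, so the exponent contains $D_\alpha\bigl(\rho^{B'R^n}\,\big\|\,\eye^{B'}\otimes(\sigma^R)^{\otimes n}\bigr)$. Monotonicity under the cptp map $\cC_{W_1'}$ compares this not to $\eye^{B'}$ but to $\cC_{W_1'}(\eye^{A^n})=\tfrac{|A|^n}{|B'|}\,\eye^{B'}$, so what you actually obtain is
\[
D_\alpha\bigl(\rho^{B'R^n}\,\big\|\,\eye^{B'}\otimes(\sigma^R)^{\otimes n}\bigr)\;\le\;-\,nH_\alpha(A|R)_\Psi \;+\; n\log|A|-\log|B'| .
\]
The extra term $n\log|A|-\log|B'|=n\bigl(\log|A|-H_{\widetilde\alpha}(A)_\Psi\bigr)-o(n)$ is nonnegative and linear in $n$, and it propagates straight into the Stage-2 rate: solving for $\log|A_2|/n$ now forces the quantum communication rate to exceed $\tfrac12\bigl(\log|A|-H_\alpha(A|R)_\Psi\bigr)$ rather than the theorem's $\tfrac12\bigl(H_{\widetilde\alpha}(A)_\Psi-H_\alpha(A|R)_\Psi\bigr)$. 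These coincide only when $\Psi^A$ is maximally mixed, so in general your argument does not reach the claimed rates. The ``careful accounting of the leakage'' you mention would have to replace the cptp $\cC_{W_1'}$ by the non-trace-preserving co-isometry $W_1'(\,\cdot\,)W_1'^{\dagger}$ and then extend Theorem~\ref{theorem1} to sub-normalised inputs; this is not mere bookkeeping.

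The paper sidesteps the whole issue by never compressing first. It invokes Corollary~\ref{corollary1} to produce a \emph{single} unitary $U$ on $A^n$ that simultaneously decouples for the two class-$1$ maps $\tr_{A_1A_2}\circ\cT_W$ and $\tr_{A_2}\circ\cT_W$ applied to the \emph{original} states $(\Psi^{ABR})^{\otimes n}$ and $(\Psi^{AR})^{\otimes n}$. Because both decouplings are evaluated on the uncompressed state, the R\'enyi term is exactly $-nH_\alpha(A|R)_\Psi$ with no dimension penalty, and the two Uhlmann steps (Lemma~\ref{yal11}) then yield the cptp encoder $\cC_W\circ V$ and Bob's decoder $\widetilde U$ directly. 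The essential difference from your proposal is thus not two stages versus one, but whether the second decoupling sees the original state or a compressed one; only the former gives the correct exponent.
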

\begin{proof}
Let $W : A^n \to A_1 A_2$ be a full-rank partial isometry with $|A_1| |A_2| \leqslant |A|$.
Then, using Corollary \ref{corollary1}, we claim that there exists a Unitary $U$ over $A^n$ such that
for $\alpha \in (1,2]$,
\begin{multline}
\label{yae55}
\left\| \tr_{A_1 A_2} \circ \mapone^{A^n \to A_1 A_2} \left[ U \cdot (\Psi^{ABR})^{\otimes n} \right] -
(\Psi^{BR})^{\otimes n} \right\|_1 \\
\leqslant 8 \exp \left\{ \frac{\alpha-1}{2\alpha} \Big[ |B| |R| \log(n+1) - n H_{\alpha}(A | B R)_{\Psi}
+ \Theta( \tr_{A_1 A_2} \circ \mapone ) \Big] \right\} \\
= 8 \exp \left\{ \frac{\alpha-1}{2\alpha} \Big[ |B| |R| \log(n+1) + n H_{\widetilde{\alpha}}(A)_{\Psi}
-  \log|A_1| |A_2| \Big] \right\} \equiv \varepsilon_n,
\end{multline}
and
\begin{multline}
\label{yae19}
\left\| \tr_{A_2} \circ \mapone^{A^n \to A_1 A_2} \left[ U \cdot (\Psi^{AR})^{\otimes n} \right] - \pi^{A_1} \otimes
(\Psi^R)^{\otimes n} \right\|_1 \\
\leqslant 8 \exp \left\{ \frac{\alpha-1}{2\alpha} \left[ |R| \log(n+1) - n H_{\alpha}(A | R)_{\Psi}
+ \log \frac{|A_1|}{|A_2|} \right] \right\} \equiv \vartheta_n.
\end{multline}
It follows from \eqref{yae19} and Lemma \ref{yal11}
that there exists an isometry $\widetilde{U}^{A_2 B^n \to B_1 \widetilde{B}_3^n B_3^n}$ such that
\beq
\label{yae46}
\left\| \widetilde{U} \cdot \left\{ \mapone^{A^n \to A_1 A_2} \left[ U \cdot (\Psi^{ABR})^{\otimes n} \right]
\right\} - \Phi^{A_1 B_1} \otimes (\Psi^{\widetilde{B}_3^n B_3^n R})^{\otimes n} \right\|_1 \leqslant
\Xi( \vartheta_n ).
\enq
It follows from \eqref{yae55} and Lemma \ref{yal11} that there exists a Unitary $V^{A^n \to A^n}$
such that
\begin{multline}
\label{yae45}
\Xi( \varepsilon_n ) \geqslant
\left\| W^\dag \cdot \mapone^{A^n \to A_1 A_2} \left[ U \cdot (\Psi^{ABR})^{\otimes n} \right] -
V \cdot (\Psi^{ABR})^{\otimes n} \right\|_1 \\
\geqslant \left\| \mapone^{A^n \to A_1 A_2} \left[ U \cdot (\Psi^{ABR})^{\otimes n} \right] -
\cC_W \left[ V \cdot (\Psi^{ABR})^{\otimes n} \right] \right\|_1 \hspace{0.75in} \\
= \left\| \widetilde{U} \cdot \left\{ \mapone^{A^n \to A_1 A_2} \left[ U \cdot (\Psi^{ABR})^{\otimes n} \right]
\right\} - \widetilde{U} \cdot \left\{ \cC_W \left[ V \cdot (\Psi^{ABR})^{\otimes n} \right] \right\} \right\|_1,
\hspace{0.08in}
\end{multline}
where the second inequality follows using the monotonicity and noting that
\beq
\cC_W \left\{ W^\dag \cdot \mapone^{A^n \to A_1 A_2} \left[ U \cdot (\Psi^{ABR})^{\otimes n} \right] \right\}
= \mapone^{A^n \to A_1 A_2} \left[ U \cdot (\Psi^{ABR})^{\otimes n} \right],
\enq
and the last equality is true since $(\widetilde{U})^\dag \widetilde{U} = \eye^{A_2 B^n}$.
Lastly, we use the triangle inequality, \eqref{yae46}, and \eqref{yae45} to claim that
\begin{multline}
\left\| \widetilde{U}^{B_2 B^n \to B_1 \widetilde{B}_3^n B_3^n} \cdot \left\{ \cI^{A_2 \to B_2} \circ
\cC_W \left[ V \cdot (\Psi^{ABR})^{\otimes n} \right] \right\} - \Phi^{A_1 B_1} \otimes
(\Psi^{\widetilde{B}_3^n B_3^n R})^{\otimes n} \right\|_1 \\
\leqslant \Xi( \varepsilon_n ) + \Xi( \vartheta_n ).
\end{multline}
It follows that the protocol consists of Alice applying $\cC_W^{A \to A_1 A_2} \circ V^{A^n}$,
and Bob applies $\widetilde{U}$, albeit on $B_2 B^n$ instead of $A_2 B^n$.

It is now clear that if,
for $\alpha \in (1,2]$, $\delta_1, \delta_2 > 0$,
\begin{align}
\frac{\log|A_1|}{n} + \frac{\log |A_2|}{n} & = H_{\widetilde{\alpha}}(A)_\Psi +
|B| |R| \frac{ \log(n+1) }{n} + \delta_1, \\
\frac{\log|A_1|}{n} - \frac{\log |A_2|}{n} & = H_{\alpha}(A | R)_{\Psi} - |R| \frac{ \log(n+1) }{n} - \delta_2,
\end{align}
then the error decays exponentially in $n$ to zero.

The claim of the theorem now follows and we exhaust the entire achievable rate region as stipulated
by Theorem \ref{yatheorem1}.

Note that in view of the trivial protocol where one qubit transmitted across a noiseless qubit channel
from Alice and Bob generates one EPR pair shared by Alice and Bob (the reverse implication
is not true), it makes sense to keep the
quantum communication as small as possible, which is accomplished by making
$\alpha$ close to $1$, and $\delta_1, \delta_2$ close to $0$.
\end{proof}

\section{Fully quantum reverse Shannon (FQRS)}

The following definition is from Ref. \cite{fqsw-2009}.

\begin{definition}
A $(\Psi, \err, n)$ FQRS protocol consists of
$n$ copies of a pure state $\ket{\Psi}^{AA^\prime}$ (both $A$ and $A^\prime$ held by Alice),
a MES $\Phi^{A_1 B_1}$ shared between Alice ($A_1$) and Bob ($B_1$), a cptp map
$\cN^{A^\prime \to B}$ with Stinespring representation $V_{\cN}^{A^\prime \to BE}$ and
$\ket{\Psi}^{ABE} = V_{\cN}^{A^\prime \to BE} \ket{\Psi}^{AA^\prime}$, Alice applying an encoding
cptp map $\cE: A^{\prime n} A_1 \to A_2 E^n$, quantum communication across a noiseless quantum
channel from Alice to Bob $\cI^{A_2 \to B_2}$, and Bob applying a decoding cptp map
$\cD: B_1 B_2 \to B^n$ such that for
\beq
\rho^{A^n B^n E^n} \equiv \cD^{B_1 B_2 \to B^n}
\circ \cI^{A_2 \to B_2} \circ \cE^{A^{\prime n} A_1 \to A_2 E^n} [(\Psi^{AA^\prime})^{\otimes n}
\otimes \Phi^{A_1 B_1}],
\enq
\beq
\left\| \rho^{A^n B^n E^n} - (\Psi^{ABE})^{\otimes n} \right\|_1 \leqslant \err.
\enq
The number $(\log |B_2|)/n$ is called the {\bf quantum communication rate}
and $(\log |B_1|)/n$ is called the {\bf entanglement consumption rate} of the protocol.

A pair of real numbers $(\cR_Q,\cR_E)$ is called an {\bf achievable rate pair} if there
exist, for $n \to \infty$, FQRS protocols with quantum communication rate approaching
$\cR_Q$, entanglement consumption rate approaching $\cR_E$, and $\err$ approaching $0$.
\end{definition}

The achievable rates are described by the following theorem.

\begin{theorem}[Abeyesinghe \emph{et al}, 2009 \cite{fqsw-2009}]
\label{yatheorem2}
The following rates are achievable for the FQRS:
\beq
\cR_Q > \frac{1}{2} I(A:B)_\Psi \hspace{0.5in} \text{and} \hspace{0.5in}
\cR_E < \cR_Q + H(B|A)_\Psi.
\enq
\end{theorem}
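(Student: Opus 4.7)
The plan is to derive FQRS from the exponential FQSW theorem of Section \ref{fqsw} by time-reversing a carefully chosen FQSW protocol, in the spirit of Abeyesinghe \emph{et al} but now inheriting the exponential decay. Let $\ket{\Psi}^{ABE} = V_\cN \ket{\Psi}^{AA'}$ be the Stinespring purification. I would apply the exponential FQSW theorem to the pure tripartite state $\Psi^{BEA}$ with the role assignment \emph{Alice (FQSW) holds $B^n$, Bob (FQSW) holds $E^n$, reference is $A^n$}. With this identification the FQSW rates $\frac{1}{2} I(B{:}A)_\Psi$ and $H(B|A)_\Psi$ coincide with the claimed FQRS rates, since $I(B{:}A)_\Psi = I(A{:}B)_\Psi$. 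Concretely, one obtains cptp maps $\cE^{\text{fqsw}}$, $\cD^{\text{fqsw}}$ with
\begin{equation}
\bigl\| \cD^{\text{fqsw}} \circ \cI \circ \cE^{\text{fqsw}}[(\Psi^{BEA})^{\otimes n}] - \Phi^{A_1 B_1} \otimes (\Psi^{\tilde{B}_3 B_3 A})^{\otimes n} \bigr\|_1 \leqslant \varepsilon_n,
\end{equation}
where $\varepsilon_n$ decays exponentially in $n$ and $\tilde{B}_3 \cong B$, $B_3 \cong E$.

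Next I would promote this to an isometric statement via Lemma \ref{yal11} (Uhlmann): there exist Stinespring isometries $V_\cE, V_\cD$ dilating $\cE^{\text{fqsw}}, \cD^{\text{fqsw}}$, together with a pure garbage state $\ket{g}$ on the combined environments, such that
\begin{equation}
\bigl\| V_\cD V_\cE \ket{(\Psi^{BEA})^{\otimes n}} - \ket{\Phi^{A_1 B_1}} \ket{(\Psi^{\tilde{B}_3 B_3 A})^{\otimes n}} \ket{g} \bigr\|_2 \leqslant \Xi(\varepsilon_n).
\end{equation}
The FQRS protocol is then obtained by running this isometry backwards while swapping Alice and Bob. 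Alice (FQRS) first applies $V_\cN^{\otimes n}$ locally to turn $(\Psi^{AA'})^{\otimes n}$ into $(\Psi^{ABE})^{\otimes n}$ on $A^n B^n E^n$, which matches $(\Psi^{\tilde{B}_3 B_3 A})^{\otimes n}$ under the above identifications. Playing the role of Bob (FQSW), she initializes $\ket{g}$ locally on her environment registers and applies $V_\cD^\dagger$ on $A_1 B^n E^n$ (with the MES halves relabeled appropriately), producing a transmit register $A_2$ while keeping $E^n$; she sends $A_2$ through $\cI^{A_2 \to B_2}$. Bob, playing the role of Alice (FQSW), then applies $V_\cE^\dagger$ on $B_1 B_2$ to recover $B^n$.

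A triangle-inequality argument combining the adjoint of the isometric FQSW approximation with monotonicity of the trace norm under partial trace yields
\begin{equation}
\bigl\| \rho^{A^n B^n E^n} - (\Psi^{ABE})^{\otimes n} \bigr\|_1 \leqslant \Xi(\varepsilon_n),
\end{equation}
which decays exponentially in $n$ at the rates claimed in Theorem \ref{yatheorem2}. The main obstacle I anticipate is the careful bookkeeping of the role and register swap — in particular, verifying that $V_\cD^\dagger$ and $V_\cE^\dagger$ act only on systems held by the correct party in the FQRS setup and that the Uhlmann garbage $\ket{g}$ is locally preparable on the correct side. Once this is handled cleanly, the error estimate follows immediately from the exponential FQSW bound and Lemma \ref{yal11}; the quantum communication rate carries over unchanged, and the FQSW entanglement gain becomes the FQRS entanglement consumption upon reversal.
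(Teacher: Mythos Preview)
Your overall strategy---derive FQRS by time-reversing FQSW with the role assignment $B\leftrightarrow$ FQSW-Alice, $E\leftrightarrow$ FQSW-Bob, $A\leftrightarrow$ reference---is precisely the duality observation the paper cites from Refs.~\cite{devetak-prl-2006,fqsw-2009}, so you are on the right track. The paper, however, does not run FQSW as a black box and reverse it. It re-derives the two FQSW-type decoupling inequalities directly for the $B$-system (equations \eqref{yae13} and \eqref{yae22}), extracts the isometry $\widetilde{U}^{B_2 E^n \to A_1 \widetilde{B}^n \widetilde{E}^n}$ via Lemma~\ref{yal11}, and then \emph{builds} the FQRS encoder as $\cC_{\widetilde{U}^\dagger}\circ(V_\cN)^{\otimes n}$ and the decoder as $V^\dagger W^\dagger$. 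No Stinespring dilations of black-box cptp maps enter.

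The gap in your black-box reversal is exactly the obstacle you flag but do not resolve: the Uhlmann garbage $\ket{g}$ lives on the \emph{joint} environment $F_\cE\otimes F_\cD$ of the two Stinespring dilations, and nothing forces it to be a product state. If $\ket{g}$ is entangled across $F_\cE$ and $F_\cD$, then FQRS-Alice (who only has access to $F_\cD$) and FQRS-Bob (who only has access to $F_\cE$) cannot prepare it locally, and the reversal breaks down. A second, smaller issue: $V_\cD^\dagger$ and $V_\cE^\dagger$ are bare co-isometries, not cptp maps, so ``applying $V_\cD^\dagger$'' does not yield a legitimate encoding channel.

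Both issues are fixable, and the fix lands you on the paper's proof. Use the \emph{specific} FQSW protocol of Section~\ref{fqsw}, in which Bob's decoder $\widetilde{U}$ is already an isometry (so $F_\cD$ is trivial and $\ket{g}$ lives entirely on $F_\cE$, which is local to FQRS-Bob), and replace the bare adjoints by their compressive-map versions $\cC_{\widetilde{U}^\dagger}$, $\cC_{V_\cE^\dagger}$ to keep everything cptp. With those two changes the register bookkeeping goes through cleanly, and you recover the construction and the exponential error bound of the paper's Theorem following Theorem~\ref{yatheorem2}.
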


We now provide the random coding exponents for the achievability of this protocol.

\begin{theorem}
For any $n \in \mathbb{N}$, there exists a $(\Psi, \err, n)$ FQRS protocol for any $\alpha \in (1,2]$,
$\delta_1, \delta_2 > 0$, such that
\begin{align}
\frac{\log |B_2|}{n} & =
\frac{1}{2} \left[ H_{\widetilde{\alpha}}(B)_\Psi - H_{\alpha}(B | A)_{\Psi} \right] + (|E|+1) |A| \frac{ \log(n+1) }{n} +
\frac{\delta_1 + \delta_2}{2}, \\
\frac{\log|B_1|}{n}  & = \frac{\log |B_2|}{n} + H_{\alpha}(B | A)_{\Psi} - |A| \frac{ \log(n+1) }{n} - \delta_2,
\end{align}
and the error approaches $0$ exponentially in $n$.
\end{theorem}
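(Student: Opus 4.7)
The plan is to proceed in close analogy with the FQSW proof of the preceding section, exploiting the well-known duality between the two protocols: whereas FQSW distills entanglement by decoupling Alice's compressed register from the reference, FQRS consumes entanglement to simulate the channel and produce correlations with Bob. First, I have Alice coherently apply the Stinespring isometry $V_{\cN}^{A^\prime \to BE}$ on her $A^{\prime n}$, producing a local copy of $(\Psi^{ABE})^{\otimes n}$ with $A^n, B^n, E^n$ all held on Alice's side, alongside the untouched shared entanglement $\Phi^{A_1 B_1}$. The task is thereby reduced to coherently transferring $B^n$ from Alice to Bob while consuming the pre-shared EPR, and the decoupling machinery of Theorem \ref{theorem1} and Corollary \ref{corollary1} will do the heavy lifting.

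Next, fix a full-rank partial isometry $W : B^n \to \widetilde{A}_1 B_2$ with $|\widetilde{A}_1| = |B_1|$ and $|\widetilde{A}_1| \cdot |B_2| \leqslant |B|^n$, where $\widetilde{A}_1$ is an auxiliary register on Alice's side of dimension $|B_1|$. I apply Corollary \ref{corollary1} to the map $\mapone^{B^n \to \widetilde{A}_1 B_2}$ twice, in direct parallel with \eqref{yae55} and \eqref{yae19}. The first application, with reference system $AE$ on the full state $(\Psi^{ABE})^{\otimes n}$, bounds $\| \tr_{\widetilde{A}_1 B_2} \circ \mapone[U \cdot (\Psi^{ABE})^{\otimes n}] - (\Psi^{AE})^{\otimes n} \|_1 \leqslant \varepsilon_n$, whose exponent is governed by $H_\alpha(B|AE)_\Psi = -H_{\widetilde{\alpha}}(B)_\Psi$ (by the R\'enyi duality on pure $\Psi^{ABE}$); the second, with reference $A$ on the reduced state $(\Psi^{AB})^{\otimes n}$, yields the genuine decoupling bound $\| \tr_{B_2} \circ \mapone[U \cdot (\Psi^{AB})^{\otimes n}] - \pi^{\widetilde{A}_1} \otimes (\Psi^A)^{\otimes n} \|_1 \leqslant \vartheta_n$, whose exponent is governed by $H_\alpha(B|A)_\Psi$.

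I then invoke Lemma \ref{yal11} to convert each bound into a statement about purifications. The marginal-preservation bound yields a unitary $V^{B^n \to B^n}$ on Alice's side such that $\mapone \circ U$ acts coherently as $\cC_W \circ V$ up to error $\Xi(\varepsilon_n)$; the decoupling bound yields an isometry $\widetilde{U}^{B_2 B_1 \to B^n}$ on Bob's side that coherently restores the full state to $(\Psi^{ABE})^{\otimes n}$ up to error $\Xi(\vartheta_n)$. Chaining these via the triangle inequality exactly as in the FQSW proof bounds the total error by $\Xi(\varepsilon_n) + \Xi(\vartheta_n)$, and matching the exponents to the stated rate formulas is then a routine linear calculation in $\log|B_1|$ and $\log|B_2|$, using $|\widetilde{A}_1| = |B_1|$ to eliminate the auxiliary dimension.

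The main subtlety I expect is the bookkeeping for how the pre-shared EPR $\Phi^{A_1 B_1}$ is coherently consumed. The decoupling in effect produces a fresh maximally entangled pair between the auxiliary $\widetilde{A}_1$ on Alice's side and Bob's received $B_2$, so Alice must redirect this so that Bob's decoder $\widetilde{U}^{B_2 B_1 \to B^n}$ combines $B_2$ with its actual EPR partner $B_1$ rather than with $\widetilde{A}_1$. I plan to handle this by identifying $\widetilde{A}_1$ with Alice's $A_1$ register via an appropriate swap, so that the resulting operation is a genuine cptp map $\cE : A^{\prime n} A_1 \to A_2 E^n$ as required by the FQRS definition. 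Once this identification is cleanly set up, the rest of the argument is entirely dictated by Corollary \ref{corollary1} and Lemma \ref{yal11}, and the stated rate region follows.
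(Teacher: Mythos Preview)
Your two decoupling bounds are the same as the paper's, but you have swapped which bound furnishes Alice's encoder and which furnishes Bob's decoder, and this is not a harmless relabeling. The Uhlmann isometry coming from your $\vartheta_n$ bound cannot be taken as a map $\widetilde{U}^{B_2 B_1 \to B^n}$: the marginal on $\widetilde{A}_1 A^n$ of the (rank-one) state $\mapone[U \cdot (\Psi^{ABE})^{\otimes n}]$ is purified by the complementary registers $B_2 E^n$, so Lemma~\ref{yal11} hands you an isometry acting on $B_2 E^n$, not on $B_2 B_1$. Since $E^n$ is the channel environment held by Alice, Bob cannot apply this isometry. Your proposed fix of swapping $\widetilde{A}_1 \leftrightarrow A_1$ on Alice's side does not help: that swap is local to Alice and neither teleports the $\widetilde{A}_1$ content to Bob's $B_1$ nor consumes the EPR pair; after it, Alice is still left holding both $A_1$ and $\widetilde{A}_1$, and discarding them destroys the correlations you need.

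The paper's resolution is precisely the reversal you are missing: FQRS is FQSW run \emph{backward}, with encoder and decoder interchanged. The Uhlmann isometry $\widetilde{U}^{B_2 E^n \to A_1 \widetilde{B}^n \widetilde{E}^n}$ produced by the $\vartheta_n$ bound is inverted via the compressive map $\cC_{\widetilde{U}^\dag}: A_1 \widetilde{B}^n \widetilde{E}^n \to A_2 E^n$, and \emph{this} (composed with $V_\cN^{\otimes n}$) is Alice's encoder. Alice legitimately holds $A_1, \widetilde{B}^n, \widetilde{E}^n$, so this is a bona fide cptp map of the required type $A^{\prime n} A_1 \to A_2 E^n$, and it consumes her EPR half $A_1$ as an input rather than leaving it dangling. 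Conversely, the unitary $V^{B^n}$ from the $\varepsilon_n$ bound is used to build Bob's decoder $V^\dag W^\dag : B_1 B_2 \to B^n$, which Bob can apply since he holds $B_1$ and receives $B_2$. With the roles assigned this way, the triangle-inequality chaining goes through exactly as you sketch, and the rate formulas follow.
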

\begin{proof}
We note the insightful observation in Refs. \cite{devetak-prl-2006, fqsw-2009} that FQRS can be implemented
using ideas from FQSW.

Let $W^{B^n \to B_1 B_2}$, $|B_1| |B_2| \leqslant |B|^n$, be a full-rank partial isometry.
Then, using Corollary \ref{corollary1}, we claim that there exists a Unitary $U$ over $B^n$ such that
for $\alpha \in (1,2]$,
\begin{multline}
\label{yae13}
\left\| \tr_{B_1 B_2} \circ \mapone^{B^n \to B_1 B_2} \left[ U \cdot (\Psi^{ABE})^{\otimes n} \right] -
(\Psi^{AE})^{\otimes n} \right\|_1 \\
\leqslant 8 \exp \left\{ \frac{\alpha-1}{2\alpha} \Big[ |A| |E| \log(n+1) - n H_{\alpha}(B | AE)_{\Psi}
+ \Theta( \tr_{B_1 B_2} \circ \mapone ) \Big] \right\} \\
= 8 \exp \left\{ \frac{\alpha-1}{2\alpha} \Big[ |A| |E| \log(n+1) + n H_{\widetilde{\alpha}}(B)_{\Psi}
-  \log|B_1| |B_2| \Big] \right\} \equiv \varepsilon_n,
\end{multline}
and
\begin{multline}
\label{yae22}
\left\| \tr_{B_2} \circ \mapone^{B^n \to B_1 B_2} \left[ U \cdot (\Psi^{AB})^{\otimes n} \right] - \pi^{B_1} \otimes
(\Psi^A)^{\otimes n} \right\|_1 \\
\leqslant 8 \exp \left\{ \frac{\alpha-1}{2\alpha} \left[ |A| \log(n+1) - n H_{\alpha}(B | A)_{\Psi}
+ \log \frac{|B_1|}{|B_2|} \right] \right\} \equiv \vartheta_n.
\end{multline}
Using \eqref{yae22} and Lemma \ref{yal11}, we claim that there exists an isometry
$\widetilde{U}^{B_2 E^n \to A_1 \widetilde{B}^n \widetilde{E}^n}$ such that
\beq
\label{yae56}
\left\| \widetilde{U} \cdot
\mapone^{B^n \to B_1 B_2} \left[ U \cdot (\Psi^{ABE})^{\otimes n} \right] - \Phi^{A_1 B_1} \otimes
(\Psi^{A \widetilde{B} \widetilde{E}} )^{\otimes n} \right\|_1 \leqslant \Xi( \vartheta_n ).
\enq
Using the compressive map $\cC_{\widetilde{U}^\dag} : A_1 \widetilde{B}^n \widetilde{E}^n \to
A_2 E^n$, \eqref{yae56}, and monotonicity, we get
\beq
\label{yae57}
\left\| W^\dag \cdot \mapone^{B^n \to B_1 A_2} \left[ U \cdot (\Psi^{AB})^{\otimes n} \right] -
W^\dag \cdot \cC_{\widetilde{U}^\dag} \left[ \Phi^{A_1 B_1} \otimes
(\Psi^{A \widetilde{B} \widetilde{E}} )^{\otimes n} \right] \right\|_1 \leqslant \Xi( \vartheta_n ).
\enq
Using \eqref{yae13} and Lemma \ref{yal11}, we claim that there exists a Unitary $V$ over $B^n$ such that
\beq
\label{yae58}
\left\| W^\dag \cdot \mapone^{B^n \to B_1 B_2} \left[ U \cdot (\Psi^{ABE})^{\otimes n} \right] -
V \cdot (\Psi^{ABE})^{\otimes n} \right\|_1 \leqslant \Xi( \varepsilon_n ).
\enq
Using the triangle inequality, \eqref{yae57}, and \eqref{yae58}, we now have
\beq
\left\| (V^\dag W^\dag) \circ \cI^{A_2 \to B_2} \circ \cC_{\widetilde{U}^\dag} \left[ \Phi^{A_1 B_1} \otimes
(\Psi^{A \widetilde{B} \widetilde{E}} )^{\otimes n} \right] - (\Psi^{ABE})^{\otimes n} \right\|_1
\leqslant \Xi( \varepsilon_n ) + \Xi( \vartheta_n ).
\enq
Hence, the FQRS protocol consists of Alice applying
\beq
\cE^{A^{\prime n} A_1 \to A_2 E^n} =
\cC_{\widetilde{U}^\dag}^{A_1 \widetilde{B}^n \widetilde{E}^n \to A_2 E^n} \circ
\left( V_\cN^{A^\prime \to \widetilde{B} \widetilde{E}} \right)^{\otimes n},
\enq
and Bob applying $\cD^{B_1 B_2 \to B^n} = V^\dag W^\dag$. The claim now follows readily.
\end{proof}

\section{Quantum state merging (QSM)}

\begin{definition}
A $(\Psi, \err, n)$ QSM protocol consists of
$n$ copies of a pure state $\ket{\Psi}^{ABR}$ shared between Alice ($A$),
Bob ($B$), and reference ($R$) inaccessible to both Alice and Bob,
a MES $\Phi^{A_0 B_0}$ shared between Alice ($A_0$)
and Bob ($B_0$), and a local operation and classical communication
(locc) quantum operation
$\cM: A^n A_0 \otimes B^n B_0 \to A_1 \otimes B_1 \widetilde{B}_2^n B_2^n$ such that for
\beq
\rho^{A_1 B_1 \widetilde{B}_2^n B_2^n R^n} \equiv
\cM \left[ (\Psi^{ABR})^{\otimes n} \otimes \Phi^{A_0 B_0} \right],
\enq
\beq
\left\| \rho^{A_1 B_1 \widetilde{B}_2^n B_2^n R^n} - \Phi^{A_1 B_1} \otimes
\Psi^{\widetilde{B}_2^n B_2^n R^n} \right\| \leqslant \err,
\enq
where $\Phi^{A_1 B_1}$ is a MES shared between Alice ($A_1$) and Bob ($B_1$).
The number
$(\log |A_0| - \log |A_1|)/n$ is called the {\bf entanglement rate} of the protocol.
A real number $\cR_E$ is called an {\bf achievable rate} if there exist, for
$n \to \infty$, QSM protocols of rate approaching $\cR_E$
and $\err$ approaching $0$.
\end{definition}

The achievable rate is given in the next theorem.

\begin{theorem}[Horodecki \emph{et al}, 2005 \cite{state-merging-2005}]
The following rates are achievable:
\beq
\cR_E > H(A|B)_\Psi.
\enq
Furthermore, there
exists a QSM protocol that achieves this merging cost using one-way locc
with a classical communication cost of $I(A:R)_\Psi$ per input copy.
\end{theorem}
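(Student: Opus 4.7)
The plan is to derive QSM directly from the FQSW protocol just analyzed, via the standard observation of Refs.\ \cite{state-merging-2005,fqsw-2009} that the quantum communication in FQSW can be traded for classical communication and entanglement by means of teleportation. Since teleportation is exact, all the approximation error comes from FQSW and the exponential decay in $n$ carries over verbatim; no new decoupling step is required beyond what was already established in the FQSW theorem.

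Concretely, I would first invoke the FQSW theorem to obtain an encoder $\cE^{A^n \to A_1 A_2}$, a noiseless channel $\cI^{A_2 \to B_2}$, and a decoder $\widetilde{U}^{B_2 B^n \to B_1 \widetilde{B}_2^n B_2^n}$ with rates and error as stated there. I would then replace $\cI^{A_2 \to B_2}$ by teleportation: Alice consumes $\log|A_2|$ ebits of a pre-shared MES, performs a Bell measurement jointly with $A_2$, and sends the $2\log|A_2|$ classical outcomes to Bob, who applies the corresponding Pauli correction to his half of the MES and then feeds the result into $\widetilde{U}$. Bundling the teleportation ebits with those produced at the output of FQSW into a single $\Phi^{A_0 B_0}$ yields the ledger $\log|A_0| - \log|A_1| = \log|A_2| - \log|A_1|$, and substituting the explicit FQSW rates gives
\begin{equation*}
\frac{\log|A_0|-\log|A_1|}{n} = -H_\alpha(A|R)_\Psi + |R|\frac{\log(n+1)}{n} + \delta_2,
\end{equation*}
while the classical communication rate is $2\log|A_2|/n$. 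Taking $\alpha \to 1$, $n \to \infty$, $\delta_1,\delta_2 \to 0$ and invoking the duality $H(A|R)_\Psi = -H(A|B)_\Psi$ on the pure state $\Psi^{ABR}$ yields the claimed merging rate $H(A|B)_\Psi$ together with the classical communication cost $I(A:R)_\Psi$.

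The main thing to verify is structural rather than analytic: that the composition genuinely has the one-way locc form required by the QSM definition. Alice's map is the FQSW encoder followed by a Bell measurement, which emits only classical data; Bob's map is the Pauli correction (conditioned on that classical outcome) followed by the local decoder $\widetilde{U}$ acting on his own systems. The only delicate point is separating the ebits spent in teleportation from those generated by FQSW so that the net entanglement rate correctly collapses to $H(A|B)_\Psi$, and checking that the error contribution from FQSW (a sum of two terms of the form $\Xi(\varepsilon_n) + \Xi(\vartheta_n)$, both exponentially small in $n$) is all that needs to be tracked, since teleportation adds nothing.
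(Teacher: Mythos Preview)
Your reduction is correct: FQSW followed by teleportation of the $A_2$ register is a one-way locc protocol, the only error is the FQSW error $\Xi(\varepsilon_n)+\Xi(\vartheta_n)$ (exponentially small), and the ledger gives entanglement rate $-H_\alpha(A|R)_\Psi + |R|\log(n+1)/n + \delta_2 \to H(A|B)_\Psi$ by duality and classical cost $2\log|A_2|/n \to I(A:R)_\Psi$, exactly as you write.

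The paper, however, does not go through FQSW. In the theorem immediately following this one (which implies the present statement in the limit $\alpha\to 1$, $n\to\infty$), it builds the one-way locc directly: Alice applies a random unitary on $A^n A_0$, compresses $A^n$ to a smaller system $E$ via $\cC_W$, and then performs a measurement $\{M_x^{EA_0\to A_1}\}_{x=1}^{J}$ whose elements are rank-$|A_1|$ partial isometries, sending only the classical label $x$; Bob applies a conditional isometry $V_x^{B^n B_0 \to B_1\widetilde{B}_2^n B_2^n}$. Two separate invocations of Corollary~\ref{corollary1} control, respectively, the post-measurement decoupling of $A_1X$ from $R$ (fixing the entanglement rate via $\Theta(\cE\circ\mapone)\leqslant\log|A_1|$) and a Schumacher-type bound on the compression to $E$ (fixing $|E|$ and hence the classical cost $\log J \approx \log(|E||A_0|/|A_1|)$). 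Your route is more modular, reusing the FQSW analysis as a black box and adding nothing new analytically; the paper's direct construction exhibits the measurement-and-classical-outcome structure of merging explicitly, without the detour through a quantum-communication resource that is immediately simulated away. Both arrive at the same asymptotic rates and exponential error decay.
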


We prove the following.

\begin{theorem}
For any $n \in \mathbb{N}$, there exists a $(\Psi, \err, n)$ QSM
protocol using one-way locc
for arbitrary $\delta_1, \delta_2 > 0$, $\alpha \in (1,2]$, such that
\begin{align}
\frac{\log |A_0| - \log |A_1|}{n} & = H_{\widetilde{\alpha}}(A | B)_{\Psi} + |R| \frac{\log (n+1)}{n} + \delta_1,
\end{align}
and a classical communication cost of at most
\begin{align}
H_{\widetilde{\alpha}}(A)_\Psi - H_{\alpha}(A | R)_{\Psi} + \frac{(|B|+1)|R| \log (n+1) + 2}{n} +
\delta_1 + \delta_2,
\end{align}
with the error approaching $0$ exponentially in $n$.
\end{theorem}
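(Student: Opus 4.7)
The plan is to reduce QSM to FQSW by replacing the noiseless quantum channel of the FQSW protocol with quantum teleportation through the shared MES $\Phi^{A_0 B_0}$. I would apply the FQSW achievability theorem of the previous section to $(\Psi^{ABR})^{\otimes n}$ with parameters $(\alpha, \delta_1, \delta_2)$, obtaining Alice's encoding $\cE^{A^n \to A_1 A_2}$, Bob's decoding $\cD^{B_2 B^n \to B_1 \widetilde{B}_2^n B_2^n}$, and the noiseless channel $\cI^{A_2 \to B_2}$ transporting $\log|A_2|$ qubits, with total error exponentially small in $n$. Taking $|A_0|$ to be the smallest integer such that $|A_0| \geqslant |A_2|$ (the ceiling, possibly after padding $A_2$ to the next power of two, produces the $+2/n$ additive term in the rate), Alice teleports her $A_2$ register to Bob using the MES $\Phi^{A_0 B_0}$. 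Teleportation with a MES of matching dimension is exact, and its classical communication (measurement outcomes) is one-way from Alice to Bob, so the composite protocol is a one-way locc map.

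The combined protocol produces a state that differs from the target $\Phi^{A_1 B_1} \otimes \Psi^{\widetilde{B}_2^n B_2^n R^n}$ in trace norm by at most the FQSW error, hence exponentially small in $n$. The entanglement cost per copy is
\begin{equation}
\frac{\log |A_0| - \log |A_1|}{n} = \frac{\log |A_2| - \log |A_1|}{n} + O(1/n),
\end{equation}
which by the FQSW rate equation equals $-H_\alpha(A|R)_\Psi + |R|\log(n+1)/n + \delta_2 + O(1/n)$. Invoking the duality relation $H_\alpha(A|R)_\Psi + H_{\widetilde\alpha}(A|B)_\Psi = 0$ for the pure tripartite state $\Psi^{ABR}$ converts this into $H_{\widetilde\alpha}(A|B)_\Psi + |R|\log(n+1)/n + \delta_1$ after relabeling the free parameters, matching the claimed entanglement rate. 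The classical communication cost per copy is $2\log|A_2|/n$ (plus the $2/n$ rounding overhead), and the FQSW formula for $\log|A_2|/n$ immediately yields the target expression $H_{\widetilde\alpha}(A)_\Psi - H_\alpha(A|R)_\Psi + (|B|+1)|R|\log(n+1)/n + \delta_1 + \delta_2 + 2/n$.

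The main obstacles are bookkeeping rather than technical. One must be careful that the dimensions $|A_2|$ arising from the FQSW construction can be rounded up to an integer (and to a power of two if one insists on qubit-ebit teleportation) without changing the rates beyond an $O(1/n)$ term, which is the origin of the $+2$ in the classical-cost numerator. One must also track that teleportation constitutes only one-way classical communication from Alice to Bob, so that the resulting QSM protocol genuinely uses one-way locc as claimed. Everything else is a clean reduction: decoupling does all the heavy lifting inside FQSW, teleportation is exact once dimensions are aligned, and the pure-state duality converts the $-H_\alpha(A|R)_\Psi$ term to $H_{\widetilde\alpha}(A|B)_\Psi$ to match the canonical form of the merging cost.
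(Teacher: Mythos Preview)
Your reduction is correct and the rate bookkeeping checks out: with $|A_0|=|A_2|$ (both are integer dimensions, so no padding is actually needed), teleportation is exact and one-way, the FQSW error bound transfers verbatim, and the duality $H_\alpha(A|R)_\Psi=-H_{\widetilde\alpha}(A|B)_\Psi$ on the pure state $\Psi^{ABR}$ puts the entanglement cost in the stated form. The classical cost $2\log|A_2|/n$ reproduces the claimed expression exactly.

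The paper, however, does \emph{not} invoke FQSW as a black box. It builds the merging protocol directly from decoupling: Alice first applies a compressive map $\cC_W\circ V$ with $W^{A^n\to E}$ (decoupling guarantees this nearly preserves the global state), and then performs a rank-$|A_1|$ partial-isometry measurement $\{M_x\}_{x=1}^{J}$ on $EA_0$ with $J=\lceil |E||A_0|/|A_1|\rceil$ outcomes; a second decoupling application shows that conditioned on each outcome $x$ the post-measurement state on $A_1R^n$ is close to $\pi^{A_1}\otimes(\Psi^R)^{\otimes n}$, whence Uhlmann supplies Bob's isometry $V_x$. The classical cost is then $\log J$ rather than $2\log|A_2|$, and one checks these coincide up to the same additive $O(1/n)$. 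Your route is more modular and reuses the FQSW theorem wholesale; the paper's route is self-contained and exhibits the merging measurement explicitly as a single family of partial isometries rather than the Bell-type measurement hidden inside teleportation. Both land on identical rates and exponents, so the choice is a matter of exposition: your argument is shorter once FQSW is in hand, while the paper's makes the one-way-locc structure of the protocol manifest without the teleportation detour.
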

\begin{proof}
Our line of attack is similar to that in Ref. \cite{state-merging-2005},
Corollary 3.11 in Ref. \cite{dupuis-thesis}, and Theorem 5.2 in Ref. \cite{dupuis-et-al-2010}.

Let $W^{A^n \to E}$, $|E| \leqslant |A|^n$, be a full-rank partial isometry.
Let $\zeta \equiv \frac{|E| |A_0|}{|A_1|}$, $J \equiv \lceil \zeta \rceil$,
and let $M_x^{E  A_0 \to A_1}$, $x=1,...,J$, $|A_1| \leqslant |A_0| |E|$,
be a set of measurement operators such that $\sum_{x=1}^{J} M_x^\dagger M_x = \eye^{EA_0}$,
where each $M_x$ (except possibly when $x = J$) is a full-rank partial isometry.

For any orthonormal basis $\{\ket{x}^{X}\}$, $x=1,...,J$, we define
\begin{align}
\cE^{E A_0 \to X A_1}(\sigma^{E A_0}) & \equiv \sum_{x=1}^{J} \ketbra{x}^{X}
\otimes (M_x \cdot \sigma^{E A_0})  \\
\omega^{X A_1} &
\equiv \cE^{E A_0 \to X A_1} \circ \mapone^{A^n \to E} \left( \pi^{A^n A_0} \right).
\end{align}
We have
\begin{align}
\omega^{X A_1} & = \cE^{E A_0 \to X A_1} \left( \pi^{E A_0} \right) \\
& = \frac{ 1 }{\zeta} \sum_{x=1}^{J} \ketbra{x}^X \otimes \pi^{A_1}
- \ketbra{J}^X \otimes \frac{P^{A_1}}{|E| |A_0|} \\
& = \frac{J}{\zeta} \pi^{X A_1} - \ketbra{J}^X \otimes \frac{P^{A_1}}{|E| |A_0|},
\end{align}
where $P^{A_1}$ is a projector with rank $< A_1$. Note that
\beq
\label{yae61}
\left\| \omega^{X A_1} - \pi^{X A_1} \right\|_1
\leqslant \left\| \left( \frac{J}{\zeta} - 1 \right) \pi^{X A_1} \right\|_1 +
\left\| \frac{P^{A_1}}{|E| |A_0|} \right\|_1
< \left( \frac{J}{\zeta} -1 \right) + \frac{ 1 }{\zeta}
< \frac{ 2 }{\zeta},
\enq
where the first inequality follows from the triangle inequality, the second one from
$\tr P^{A_1} < |A_1|$, and the third one from $J - \zeta < 1$.

Invoking Corollary \ref{corollary1}, we first claim that there exists a Unitary $U^{A^n A_0}$ such that for
any $\alpha \in (1,2]$,
\begin{multline}
\label{yae60}
\left\| \cE^{A^n A_0 \to A_1 X} \circ \mapone^{A^n \to E}
\left\{ U^{A^n A_0} \cdot \left[ (\Psi^{AR})^{\otimes n} \otimes \pi^{A_0})
\right] \right\} - \omega^{A_1 X} \otimes (\Psi^R)^{\otimes n} \right\|_1 \\
\leqslant  8 \exp\left\{ \frac{\alpha-1}{2\alpha} \Big[ |R| \log (n+1) -
H_{\alpha}(A^n A_0 | R^n)_{\Psi^{\otimes n}} +
\Theta( \cE \circ \mapone) \Big] \right\} \hspace{1.25in} \\
\leqslant  8 \exp\left\{ \frac{\alpha-1}{2\alpha} \Big[ |R| \log (n+1) + n H_{\widetilde{\alpha}}(A | B)_{\Psi} -
(\log |A_0| - \log |A_1|) \Big] \right\}
\equiv \vartheta_n,
\end{multline}
(where in the second inequality, we have used
$- H_{\alpha}(A^n A_0 | R^n)_{\Psi^{\otimes n}}
= - n H_{\alpha}(A|R)_\Psi - \log |A_0| = n H_{\widetilde{\alpha}}(A|B)_\Psi - \log |A_0|$,
and, from Lemma \ref{yal8}, $\Theta( \cE \circ \mapone) \leqslant$ $\log|A_1|$) and
\begin{multline}
\label{yae47}
\left\| \tr_{EA_0} \circ \mapone^{A^n \to E} \left\{ U^{A^n A_0}
\cdot \left[ (\Psi^{A BR})^{\otimes n} \otimes \Phi^{A_0 B_0} \right]
\right\} - (\Psi^{BR})^{\otimes n} \otimes \pi^{B_0} \right\|_1 \\
\leqslant 8 \exp \left\{ \frac{\alpha - 1}{2\alpha} \Big[ |B| |R| \log (n+1)
- H_\alpha(A^n A_0 |B^n R^n B_0)_{\Psi^{\otimes n} \otimes \Phi} - \log (|A_0||E|) \Big] \right\} \\
\leqslant 8 \exp \left\{ \frac{\alpha - 1}{2\alpha} \Big[ |B| |R| \log (n+1)
+ n H_{\widetilde{\alpha}}(A)_\Psi - \log|E| \Big] \right\} \equiv \varepsilon_n,
\end{multline}
where in the first inequality, we have used $\nu_{(\Psi^{BR})^{\otimes n} \otimes \pi^{B_0}}
= \nu_{(\Psi^{BR})^{\otimes n}}$, and the second inequality follows from
\begin{multline}
- H_\alpha(A^n A_0 |B^n R^nB_0)_{\Psi^{\otimes n} \otimes \Phi} - \log (|A_0||E|) \\
\leqslant - H_\alpha(A^n|B^n R^n)_{\Psi^{\otimes n}} - H_\alpha(A_0|B_0)_\Phi - \log (|A_0||E|)
= n H_{\widetilde{\alpha}}(A)_\Psi - \log|E|.
\end{multline}
\eqref{yae47} implies using Lemma \ref{yal11} that there exists
a Unitary $V^{A^n A_0 \to A^n A_0}$ such that
\begin{multline}
\label{yae59}
\left\| W^\dag \cdot \mapone^{A^n \to E} \left\{ U^{A^n A_0}
\cdot \left[ (\Psi^{A BR})^{\otimes n} \otimes \Phi^{A_0 B_0} \right]
\right\} - V^{A^n A_0 \to A^n A_0} \cdot (\Psi^{ABR})^{\otimes n} \otimes \Phi^{A_0 B_0} \right\|_1 \\
\leqslant \Xi( \varepsilon_n ),
\end{multline}
and, using monotonicity, this implies that
\beq
\label{yae62}
\left\| \mapone^{A^n \to E} \left\{ U^{A^n A_0}
\cdot \left[ (\Psi^{A BR})^{\otimes n} \otimes \Phi^{A_0 B_0} \right]
\right\} - \cC_W \left[ V \cdot (\Psi^{ABR})^{\otimes n} \otimes \Phi^{A_0 B_0} \right]
\right\|_1
\leqslant \Xi( \varepsilon_n ).
\enq
We now have
\begin{multline}
\label{yae10}
\left\| \cE^{E A_0 \to A_1 X} \circ \mapone^{A^n \to E}
\left[ U \cdot (\Psi^{AR})^{\otimes n} \otimes \pi^{A_0} \right]
- \pi^{A_1 X} \otimes (\Psi^R)^{\otimes n} \right\|_1 \\
\leqslant 
\left\| \cE^{E A_0 \to A_1 X} \circ \mapone^{A^n \to E}
\left[ U \cdot (\Psi^{AR})^{\otimes n} \otimes \pi^{A_0} \right] -
\omega^{A_1 X} \otimes (\Psi^R)^{\otimes n} \right\|_1 \\
+ \left\| \omega^{A_1 X} \otimes (\Psi^R)^{\otimes n} - \pi^{A_1 X} \otimes (\Psi^R)^{\otimes n}
\right\|_1 \\
\leqslant \vartheta_n + \frac{ 2 }{\zeta} \equiv \beta_n, \hspace{3.96in}
\end{multline}
where the first inequality follows from the triangle inequality, and in the second
inequality, the first term is upper bounded using \eqref{yae60}, and the last term by using
\eqref{yae61}. Let
\begin{align}
\xi^{A_1 B^n B_0 R^n}_x & \equiv \frac{J |A|^n}{|E|} (M_x W U) \cdot
(\Psi^{ABR})^{\otimes n} \otimes \Phi^{A_0 B_0} \\
\sigma^{XA_1 B^n B_0 R^n} & \equiv \cE^{E A_0 \to A_1 X} \circ \mapone^{A^n \to E}
\left[ U \cdot (\Psi^{ABR})^{\otimes n} \otimes \Phi^{A_0 B_0} \right] \nonumber \\
& = \sum_{x=1}^{J} \frac{1}{J} \ketbra{x}^X \otimes \xi^{A_1 B^n B_0 R^n}_x.
\end{align}
Note that $\xi^{A_1 B^n B_0 R^n}_x$ is a pure state for all $x$.
Let $\varepsilon^\prime_x \equiv \left\| \xi_x^{A_1 R^n} - \pi^{A_1} \otimes (\Psi^R)^{\otimes n} \right\|_1$.
We now have
\beq
\label{yae11}
\beta_n \geqslant \left\| \sigma^{X A_1 R^n} - \pi^{A_1 X} \otimes (\Psi^R)^{\otimes n} \right\|_1
= \sum_{x=1}^{J} \frac{\varepsilon^\prime_x}{J}.
\enq
From Lemma \ref{yal11}, let $V_x^{B^n B_0 \to B_1 \widetilde{B}_2^n B_2^n}$ be an isometry
such that
\beq
\label{yae64}
\left\| V_x^{B^n B_0 \to B_1 \widetilde{B}_2^n B_2^n} \cdot \xi_x^{A_1 B^n B_0 R^n} - \Phi^{A_1 B_1}
\otimes (\Psi^{\widetilde{B}_2 B_2 R})^{\otimes n} \right\|_1 \leqslant \Xi(\varepsilon^\prime_x).
\enq
Define a cptp map
\beq
\cD^{X B^n B_0 \to B_1 \widetilde{B}_2^n B_2^n} \left( \sum_{x=1}^{J}
\ketbra{x}^X \otimes \Upsilon_x^{B^n B_0}
\right) \equiv \sum_{x=1}^{J} (V_x \cdot \Upsilon_x^{B^n B_0}).
\enq
We now have
\begin{align}
\Big\| \cD \circ \cE \circ \cT_W & \left[ U \cdot (\Psi^{ABR})^{\otimes n} \otimes \Phi^{A_0 B_0} \right]
- \Phi^{A_1 B_1} \otimes (\Psi^{\widetilde{B}_2 B_2 R})^{\otimes n} \Big\|_1 \\
& = \left\| \sum_{x=1}^{J} \frac{1}{J}
 V_x^{B^n B_0 \to B_1 \widetilde{B}_2^n B_2^n} \cdot \xi_x^{A_1 B^n B_0 R^n} - \Phi^{A_1 B_1}
\otimes (\Psi^{\widetilde{B}_2 B_2 R})^{\otimes n} \right\|_1 \\
& \leqslant \sum_{x=1}^{J} \frac{1}{J}
\left\| V_x^{B^n B_0 \to B_1 \widetilde{B}_2^n B_2^n} \cdot \xi_x^{A_1 B^n B_0 R^n} - \Phi^{A_1 B_1}
\otimes (\Psi^{\widetilde{B}_2 B_2 R})^{\otimes n} \right\|_1 \\
& \leqslant \sum_{x=1}^{J} \frac{1}{J} \Xi(\varepsilon^\prime_x) \\
& \leqslant \sum_{x=1}^{J} \frac{1}{J} \left[
2 \sqrt{\varepsilon^\prime_x} + \sqrt{2} (\varepsilon^\prime_x)^{3/4}
+ \varepsilon^\prime_x \right] \\
& \leqslant 2 \sqrt{ \sum_{x=1}^{J} \frac{\varepsilon^\prime_x}{J} } +
\sqrt{2} \left( \sum_{x=1}^{J} \frac{\varepsilon^\prime_x}{J} \right)^{3/4} +
\sum_{x=1}^{J} \frac{\varepsilon^\prime_x}{J} \\
& \leqslant 2 \sqrt{\beta_n} + \sqrt{2} (\beta_n)^{3/4} + \beta_n,
\end{align}
where the first inequality follows from the convexity of the trace norm, the second inequality
follows from \eqref{yae64}, the third inequality follows since
\beq
\Xi(\varepsilon) = \sqrt{\varepsilon (2 + \varepsilon + 2 \sqrt{1+ \varepsilon})}
\leqslant 2 \sqrt{\varepsilon} + \sqrt{2} (\varepsilon)^{3/4} + \varepsilon,
\enq
and the fourth inequality from the concavity of $x \mapsto x^y$, $y \in [0,1]$,
and the last inequality follows from \eqref{yae11}.
Using \eqref{yae62} and the triangle inequality, we have
\begin{multline}
\left\| \cD \circ \cE \circ \cC_W \left[ V \cdot (\Psi^{ABR})^{\otimes n} \otimes \Phi^{A_0 B_0} \right]
- \Phi^{A_1 B_1} \otimes (\Psi^{\widetilde{B}_2 B_2 R})^{\otimes n} \right\|_1 \\
\leqslant \Xi( \varepsilon_n ) + 2 \sqrt{\beta_n} + \sqrt{2} (\beta_n)^{3/4} + 
\beta_n.
\end{multline}
Alice performs $\cE^{A^n A_0 \to A_1 X} \circ \cC_W \circ V$ and
Bob performs $\cD^{X B_0 B^n \to B_1 \widetilde{B}_2^n B_2^n}$.
Note that $J = \lceil \frac{|E| |A_0|}{|A_1|} \rceil$ $\leqslant \max\{1,
\frac{2 |E| |A_0|}{|A_1|}\}$, determines the classical communication cost.
We have now shown the existence of a state merging protocol using one-way locc
for arbitrary $\delta_1, \delta_2 > 0$, $\alpha \in (1,2]$, with
\begin{align}
\frac{1}{n} \log \frac{|A_0|}{|A_1|} & = H_{\widetilde{\alpha}}(A | B)_{\Psi} + |R| \frac{\log (n+1)}{n} + \delta_1 \\
\frac{\log |E|}{n} & = H_{\widetilde{\alpha}}(A)_\Psi + |B| |R| \frac{\log (n+1)}{n} + \delta_2 \\
\frac{\log J}{n} &
\leqslant  H_{\widetilde{\alpha}}(A)_\Psi - H_{\alpha}(A | R)_{\Psi} + \frac{(|B|+1)|R| \log (n+1) + 1}{n}
+ \delta_1 + \delta_2
\end{align}
that has the error converging to $0$ exponentially in $n$.
\end{proof}

\section{Entanglement-assisted quantum communication with side information at the
transmitter (Father with side information at the transmitter)}

The definitions are directly from Ref. \cite{dupuis-thesis}.
\begin{definition}
Let $\cN^{A^\prime S \to B}$ be a cptp map with Stinespring dilation $V_\cN^{A^\prime S \to BE}$
and $\ket{\Upsilon}^{S S^\prime}$ be a pure state. Then the transmitter encodes its information
contained in $\rho^{A_1 R} \in \densitymatrix(\cH_{A_1R})$ using a cptp map
$\cE^{A_1 S^\prime \to A^\prime}$, and the output of the channel is $\rho^{BR} =$
$\cN^{A^\prime S \to B} \circ \cE^{A_1 S^\prime \to A^\prime} (\rho^{A_1R} \otimes \Upsilon^{S S^\prime})$.
We denote this channel by $\{\cN^{A^\prime S \to B}, \ket{\Upsilon}^{S S^\prime}\}$.
\end{definition}

\begin{definition}
\label{def1}
A $(\{\cN, \ket{\Upsilon}\},\err,n)$ father protocol with side information at the transmitter consists of $n$ copies of two MES
$\Phi^{A_0 R}$ and $\Phi^{A_1 B_1}$, where Alice has $A_0, A_1$, Bob has $B_1$,
and the reference $R$ is inaccessible to both Alice and Bob, Alice applying an encoding
map $\cE^{A_0^n A_1^n S^{\prime n} \to A^{\prime n}}$ to
$(\Phi^{A_0 R} \otimes \Phi^{A_1 B_1} \otimes \Upsilon^{S S^\prime})^{\otimes n}$,
$n$ uses of the channel with side information at the transmitter
$\{\cN^{A^\prime S \to B}, \ket{\Upsilon}^{S S^\prime}\}$, and
Bob applying a decoding map $\cD^{B^n B_1^n \to B_2^n}$ such that for
\beq
\rho^{B_2^n R^n} \equiv \cD^{B^n B_1^n \to B_2^n} \circ ( \cN^{A^\prime S \to B} )^{\otimes n} \circ
\cE^{A_0^n A_1^n S^{\prime n} \to A^{\prime n}} (\Phi^{A_0 R} \otimes \Phi^{A_1 B_1} \otimes
\Upsilon^{S S^\prime})^{\otimes n},
\enq
\beq
\left\| \rho^{B_2^n R^n} - (\Phi^{B_2 R})^{\otimes n} \right\|_1 \leqslant \err.
\enq
The number $\log |B_1|$ is called the {\bf entanglement rate} of the protocol
and $\log |R|$ is called the {\bf quantum communication rate} of the protocol.

A pair of real numbers $(\cR_Q,\cR_E)$ is called an {\bf achievable rate pair} if there
exist, for $n \to \infty$, protocols with quantum communication rate approaching
$\cR_Q$, entanglement gain rate approaching $\cR_E$, and $\err$ approaching $0$.
\end{definition}

The achievable rates are described by the following theorem.

\begin{theorem}[Dupuis, 2009 \cite{dupuis-thesis}]
\label{yatheorem7}
Let $\ket{\Psi}^{CA A^{\prime}S}$ be a pure state with $\cH_{A} = \cH_R \otimes \cH_{B_1}$ such that
$\Psi^S = \Upsilon^S$, and $\ket{\Psi}^{CA BE}$ $= V_\cN^{A^\prime S \to BE} \ket{\Psi}^{CA A^{\prime}S}$.
The following rates are achievable:
\begin{align}
\cR_Q + \cR_E & < H(A|S)_\Psi \\
\cR_Q - \cR_E  & < -H(A|B)_\Psi.
\end{align}
\end{theorem}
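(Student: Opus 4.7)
The plan is to follow the decoupling template of the FQSW proof in Section \ref{fqsw}: I would use Uhlmann's theorem (Lemma \ref{yal11}) to set up the encoding via the pure state $\ket{\Psi}^{CABE} \equiv V_\cN^{A'S\to BE}\ket{\Psi}^{CAA'S}$, apply a Haar-random unitary $U$ on $A^n$ (exploiting the decomposition $\cH_A = \cH_R \otimes \cH_{B_1}$), and extract both the sum-rate and difference-rate bounds from two simultaneous applications of Corollary \ref{corollary1}. Since $\Psi^S = \Upsilon^S$, Lemma \ref{yal11} furnishes an isometry $Y^{S' \to CAA'}$ with $(Y \otimes \eye^S)\ket{\Upsilon}^{SS'} = \ket{\Psi}^{CAA'S}$, so Alice would apply $Y^{\otimes n}$ on $S'^n$ and identify her EPR halves $A_0^n A_1^n$ with the $A^n$ slot of the resulting state.

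Invoking Corollary \ref{corollary1} with $K=2$, a single random $U$ would simultaneously satisfy two bounds. The first, using $\cT = \mathrm{id}^{A^n}$ on the pre-channel marginal $(\Psi^{AS})^{\otimes n}$ (for which $\Theta(\mathrm{id}) = n \log|A|$, as its Choi state is $\Phi^{A^n A^{n'}}$), would yield $\|U \cdot (\Psi^{AS})^{\otimes n} - \pi^{A^n} \otimes (\Psi^S)^{\otimes n}\|_1 \leqslant \varepsilon_n$, decaying exponentially provided $\log|A| = \cR_Q + \cR_E < H_\alpha(A|S)_\Psi - O(\log n / n)$, thereby reproducing the first rate condition of Theorem \ref{yatheorem7} as $\alpha \to 1$. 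The second, using $\cT = \tr_{B_1^n} : A^n \to R^n$ on the post-channel marginal $(\Psi^{ACE})^{\otimes n}$ (for which $\Theta(\tr_{B_1^n}) = n \log(|R|/|B_1|)$ by a direct computation of the \renyi old $2$-conditional entropy on its Choi state $(\Phi^{RR'} \otimes \pi^{B_1'})^{\otimes n}$), together with the pure-state duality $H_\alpha(A|CE)_\Psi = -H_{\widetilde{\alpha}}(A|B)_\Psi$, would yield $\|\tr_{B_1^n}[U \cdot (\Psi^{ACE})^{\otimes n}] - \pi^{R^n} \otimes (\Psi^{CE})^{\otimes n}\|_1 \leqslant \vartheta_n$, decaying exponentially provided $\log(|R|/|B_1|) = \cR_Q - \cR_E < -H_{\widetilde{\alpha}}(A|B)_\Psi + O(\log n / n)$, reproducing the second rate condition.

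From the first bound plus Lemma \ref{yal11}, I would conclude that Alice's EPR halves (with $A^n$-marginal $\pi^{A^n}$) coherently realize the Uhlmann-idealized state up to trace-norm error $\Xi(\varepsilon_n)$. From the second bound plus Lemma \ref{yal11}, I would obtain a decoder $\cD : B^n B_1^n \to B_2^n \otimes (\text{purifier of } (\Psi^{CE})^{\otimes n})$ reproducing $(\Phi^{B_2 R})^{\otimes n}$ up to trace-norm error $\Xi(\vartheta_n)$, and Bob would discard the purifier. Chaining via the triangle inequality and the monotonicity of the trace norm under cptp maps, as in Section \ref{fqsw}, gives a total error $\Xi(\varepsilon_n) + \Xi(\vartheta_n)$ decaying exponentially in $n$, exhausting the region of Theorem \ref{yatheorem7}. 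The hard part will be the parallel handling of the two decoupling bounds --- one on the pre-channel state to align Alice's preparation with the idealized state in the face of mismatched $A^n$-marginals, and one on the post-channel state to construct Bob's decoder --- along with correctly computing the two $\Theta(\cdot)$ values and invoking the appropriate pure-state \renyi duality.
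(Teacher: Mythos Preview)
Your proposal is essentially the paper's own argument (given as the proof of Theorem~\ref{yatheorem4}): two simultaneous decoupling bounds via Corollary~\ref{corollary1}---the identity map on the pre-channel marginal $(\Psi^{AS})^{\otimes n}$ yielding Alice's encoder via Uhlmann, and $\tr_{B_1^n}$ on the post-channel marginal $(\Psi^{ACE})^{\otimes n}$ yielding Bob's decoder via Uhlmann---with exactly the same $\Theta$ values and the same duality $H_\alpha(A|CE)_\Psi=-H_{\widetilde\alpha}(A|B)_\Psi$.

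One clarification: the exact isometry $Y^{S'\to CAA'}$ you introduce is superfluous and the phrase ``identify her EPR halves $A_0^nA_1^n$ with the $A^n$ slot'' is not a well-defined operation---Alice cannot swap her locally prepared $A^n$ (with marginal $(\Psi^A)^{\otimes n}$) for her EPR shares (with marginal $\pi^{A^n}$), since those EPR shares are what carry the correlations with $R^n$ and $B_1^n$. The correct encoder is precisely the isometry $V_1^{A_0^nA_1^nS'^n\to A'^nC^n}$ that Lemma~\ref{yal11} produces \emph{directly} from the first decoupling bound (your $\varepsilon_n$), which you do invoke in your third paragraph. Drop the $Y$ preamble and your argument coincides with the paper's.
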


We now have the following theorem.
\begin{theorem}
\label{yatheorem4}
For any $n \in \mathbb{N}$, and $\Psi$ as defined in Theorem \ref{yatheorem7},
there exists a $(\{\cN, \ket{\Upsilon}\}, \err, n)$ Father protocol with side information at the transmitter such that for any
$\alpha \in (1,2]$ and $\delta_1, \delta_2 > 0$,
\begin{align}
\log|R| + \log|B_1| & = H_\alpha(A|S)_\Psi - |S| \frac{\log(n+1)}{n} - \delta_1 \\
\log|R| - \log|B_1|  & = -H_{\widetilde{\alpha}}(A | B)_{\Psi} - |C| |E| \frac{\log(n+1)}{n} - \delta_2,
\end{align}
and the error approaches $0$ exponentially in $n$.
\end{theorem}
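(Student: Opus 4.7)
The plan is to mirror the two-decoupling strategy used for FQSW (Section~\ref{fqsw}) and QSM, applying Corollary~\ref{corollary1} twice to secure a single random unitary $U$ on $A^n = R^n B_1^n$ that satisfies two simultaneous decoupling conditions. The first condition will deliver the $H_\alpha(A|S)_\Psi$ rate (controlling $\log|R|+\log|B_1|$), and the second will deliver the $-H_{\widetilde{\alpha}}(A|B)_\Psi$ rate (controlling $\log|R|-\log|B_1|$) via Rényi conditional-entropy duality on the purification $\Psi^{CABE}$. Two invocations of Lemma~\ref{yal11} will then extract Alice's encoder and Bob's decoder.

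For the first decoupling, I take the class-$1$ map $\cT_1 = \eye^{A^n}$ (the identity is class-$1$ since $\|U\sigma U^\dag\|_1 = \|\sigma\|_1$) on the state $(\Psi^{AS})^{\otimes n}$ with reference $S^n$. Since $\omega^{A^n A^{n\prime}}_{\cT_1} = \Phi^{A^n A^{n\prime}}$ and $\pi^{A^n}$ minimises $\tr\tau^{-1}$ over density matrices, I get $\Theta(\eye^{A^n}) = n\log|A| = n(\log|R|+\log|B_1|)$, and Corollary~\ref{corollary1} yields
\begin{equation*}
\Exp_U\left\|U \cdot (\Psi^{AS})^{\otimes n} - \pi^{A^n}\otimes(\Psi^{S})^{\otimes n}\right\|_1 \leqslant \varepsilon_n,
\end{equation*}
with $\varepsilon_n = 8\exp\{\tfrac{\alpha-1}{2\alpha}[|S|\log(n+1) - nH_\alpha(A|S)_\Psi + n(\log|R|+\log|B_1|)]\}$, an exponent that collapses to $-\tfrac{\alpha-1}{2\alpha}n\delta_1$ under the first rate identity. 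For the second decoupling, take the cptp (hence class-$1$) map $\cT_2 = \tr_{B_1^n}^{A^n \to R^n}$ on $(\Psi^{A,CE})^{\otimes n}$ with reference $C^n E^n$. Computing $\omega^{R^n A^{n\prime}}_{\cT_2} = (\Phi^{RR^{\prime}}\otimes \pi^{B_1^{\prime}})^{\otimes n}$ gives $\Theta(\cT_2) = n(\log|R|-\log|B_1|)$, and duality on $\Psi^{CABE}$ replaces $-H_\alpha(A|CE)_\Psi$ by $H_{\widetilde{\alpha}}(A|B)_\Psi$. Corollary~\ref{corollary1} then also yields, for the same $U$,
\begin{equation*}
\left\|\tr_{B_1^n}\!\left[U \cdot (\Psi^{A,CE})^{\otimes n}\right] - \pi^{R^n}\otimes(\Psi^{CE})^{\otimes n}\right\|_1 \leqslant \vartheta_n,
\end{equation*}
with $\vartheta_n$ exponentially small under the second rate identity.

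To build the protocol, note that the state $\pi^{A^n}\otimes(\Upsilon^S)^{\otimes n}$ on $A^n S^n$ is purified precisely by Alice's starting resources $(\Phi^{A_0 R}\otimes \Phi^{A_1 B_1}\otimes \Upsilon^{SS'})^{\otimes n}$ living on $A_0^n A_1^n S^{\prime n}$, while the state $U \cdot (\Psi^{AS})^{\otimes n}$ is purified by $(U \otimes I)(\Psi^{CAA'S})^{\otimes n}$ via $C^n A^{\prime n}$. Applying Lemma~\ref{yal11} to the $\varepsilon_n$-bound yields an isometry $A_0^n A_1^n S^{\prime n} \to C^n A^{\prime n}$ whose composition with $\tr_{C^n}$ defines Alice's encoder $\cE^{A_0^n A_1^n S^{\prime n} \to A^{\prime n}}$; after $n$ channel uses this prepares a state on $R^n B_1^n B^n C^n E^n$ that is $\Xi(\varepsilon_n)$-close to $(U \otimes V_\cN^{\otimes n})(\Psi^{CAA'S})^{\otimes n}$. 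A second application of Lemma~\ref{yal11} to the $\vartheta_n$-bound furnishes a decoding isometry on Bob's systems $B^n B_1^n$ producing $B_2^n$ with $(\Phi^{B_2 R})^{\otimes n}$ realised to error $\Xi(\vartheta_n)$. Combining via the triangle inequality, monotonicity of $\|\cdot\|_1$ under cptp maps, and the estimate $\Xi(\varepsilon) \leqslant 2\sqrt{\varepsilon} + \sqrt{2}\,\varepsilon^{3/4} + \varepsilon$ from the QSM proof then bounds the total error by a constant multiple of $\Xi(\varepsilon_n) + \Xi(\vartheta_n)$, exponentially small in $n$.

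The main obstacle will be the careful bookkeeping of purifying systems: verifying that the Uhlmann-type isometries provided by Lemma~\ref{yal11} act only on Alice's systems $A_0^n A_1^n S^{\prime n}$ (yielding a legitimate encoder of the form in Definition~\ref{def1}) and only on Bob's systems $B^n B_1^n$ (yielding a legitimate decoder), and that any dimension mismatch between $|C^n A^{\prime n}|$ and $|A_0^n A_1^n S^{\prime n}|$ can be absorbed into a local ancilla. The two $\Theta$ computations and the Rényi duality identity $H_\alpha(A|CE)_\Psi = -H_{\widetilde{\alpha}}(A|B)_\Psi$ are routine.
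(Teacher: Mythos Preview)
Your proposal is correct and follows essentially the same route as the paper: the same two decoupling conditions on $A^n = R^n B_1^n$ (identity map with reference $S^n$ giving the $H_\alpha(A|S)$ constraint, and $\tr_{B_1^n}$ with reference $C^nE^n$ giving the $H_{\widetilde{\alpha}}(A|B)$ constraint via duality), followed by two invocations of Lemma~\ref{yal11} to extract Alice's encoder $V_1^{A_0^nA_1^nS^{\prime n}\to A^{\prime n}C^n}$ and Bob's decoder $V_2^{B_1^nB^n\to B_2^n\widetilde{A}^n\widetilde{B}^n}$. The only cosmetic differences are that you swap the $\varepsilon_n/\vartheta_n$ labels and state an additive final bound $\Xi(\varepsilon_n)+\Xi(\vartheta_n)$ where the paper nests them as $2\sqrt{\varepsilon_n + 2\sqrt{\vartheta_n}}$; both decay exponentially, so the theorem follows either way.
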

\begin{proof}
We first claim using Corollary \ref{corollary1} that there exists a
Unitary $U$ on $R^nB_1^n$ such that
\begin{multline}
\label{yae27}
\left\| \tr_{B_1^n} \left[ U \cdot (\Psi^{C R B_1 E})^{\otimes n} \right] -
\left( \pi^R \otimes \Psi^{CE} \right)^{\otimes n} \right\|_1 \\
\leqslant 8 \exp\left\{ \frac{\alpha-1}{2\alpha} \left[ |C||E| \log(n+1) - n H_{\alpha}(A | CE)_{\Psi}
+ n \log \frac{|R|}{|B_1|} \right] \right\} \\
= 8 \exp\left\{ \frac{\alpha-1}{2\alpha} \left[ |C||E| \log(n+1) + n H_{\widetilde{\alpha}}(A | B)_{\Psi}
+ n \log \frac{|R|}{|B_1|} \right] \right\} \equiv \varepsilon_n,
\end{multline}
and
\begin{multline}
\label{yae28}
\left\| U \cdot (\Psi^{R B_1 S})^{\otimes n} - (\pi^{R B_1})^{\otimes n} \otimes (\Upsilon^S)^{\otimes n}
\right\|_1 =
\left\| U \cdot (\Psi^{R B_1 S})^{\otimes n} - (\pi^{R B_1})^{\otimes n} \otimes (\Psi^S)^{\otimes n}
\right\|_1 \\
\leqslant 8 \exp\left\{ \frac{\alpha-1}{2\alpha} \Big[ |S| \log(n+1) - n H_{\alpha}(A|S)_{\Psi}
+ n \log (|R||B_1|) \Big] \right\} \equiv \vartheta_n,
\end{multline}
where in \eqref{yae28}, we have used $\Psi^S = \Upsilon^S$, and it
follows from \eqref{yae28} and Lemma \ref{yal11} that there exists an isometry
$V_1^{A_0^n A_1^n S^{\prime n} \to A^{\prime n} C^n}$ such that
\beq
\label{yae29}
\left\| U \cdot (\Psi^{C R B_1 A^\prime S})^{\otimes n} -
V_1^{A_0^n A_1^n S^{\prime n} \to A^{\prime n} C^n} \cdot
(\Phi^{A_0 R} \otimes \Phi^{A_1 B_1} \otimes \Upsilon^{S^\prime S})^{\otimes n} \right\|_1
\leqslant 2 \sqrt{\vartheta_n}.
\enq
Using the triangle inequality, \eqref{yae27}, \eqref{yae29},
and monotonicity, we have
\begin{multline}
\left\| \tr_{B_1^n B^n} \left\{ \left[ (V_\cN)^{\otimes n} V_1^{A_0^n A_1^n S^{\prime n} \to A^{\prime n} C^n}
\right] \cdot (\Phi^{A_0 R} \otimes \Phi^{A_1 B_1} \otimes \Upsilon^{S^\prime S})^{\otimes n} \right\} -
\left( \pi^R \otimes \Psi^{CE} \right)^{\otimes n} \right\|_1 \\
\leqslant \varepsilon_n + 2 \sqrt{\vartheta_n}.
\end{multline}
Hence there exists an isometry $V_2^{B_1^n B^n \to B_2^n \widetilde{A}^n \widetilde{B}^n}$ such that
for some purifications $\Phi^{R B_2}$ and $\Psi^{\widetilde{A} \widetilde{B} C E}$ of
$\pi^R$ and $\Psi^{CE}$ respectively, we have
\begin{multline}
\left\| \left[ V_2^{B_1^n B^n \to B_2^n \widetilde{A}^n \widetilde{B}^n}
(V_\cN)^{\otimes n} V_1^{A_0 A_1 \to A^{\prime n}} \right] \cdot
(\Phi^{A_0 R} \otimes \Phi^{A_1 B_1} \otimes \Upsilon^{S^\prime S})^{\otimes n}
- (\Phi^{R B_2} \otimes \Psi^{\widetilde{A} \widetilde{B} C E})^{\otimes n} \right\|_1 \\
\leqslant 2 \sqrt{\varepsilon_n + 2 \sqrt{\vartheta_n}}
\end{multline}
and hence,
\begin{multline}
\left\| \tr_{\widetilde{A}^n \widetilde{B}^n C^n} \left\{ V_2 
\cdot (\cN)^{\otimes n} \left[ V_1 \cdot
(\Phi^{A_0 R} \otimes \Phi^{A_1 B_1} \otimes \Upsilon^{S^\prime S})^{\otimes n} \right] \right\} - (\Phi^{R B_2})^{\otimes n}
\right\|_1 \\
= \left\| \tr_{\widetilde{A}^n \widetilde{B}^n C^n E^n} \left\{ \left[ V_2 (V_\cN)^{\otimes n} V_1 \right] \cdot
(\Phi^{A_0 R} \otimes \Phi^{A_1 B_1} \otimes \Upsilon^{S^\prime S})^{\otimes n} \right\} -
(\Phi^{R B_2})^{\otimes n} \right\|_1 \\
\leqslant 2 \sqrt{\varepsilon_n + 2 \sqrt{\vartheta_n}}. \hspace{4.1in}
\end{multline}
It is now clear that Alice just applies $\tr_{C^n} \circ V_1^{A_0^n A_1^n S^{\prime n} \to A^{\prime n} C^n}$ and
Bob applies
$\tr_{\widetilde{A}^n \widetilde{B}^n} \circ V_2^{B_1^n B^n \to B_2^n \widetilde{A}^n \widetilde{B}^n}$.
The claim now follows readily.
\end{proof}

We now have the following corollary to obtain the regularized expressions by additional blocking.
\begin{corollary}
For any $m, n \in \mathbb{N}$, a pure state $\Psi^{C AA^{\prime m}S^m}$ with
$\cH_{A} = \cH_R \otimes \cH_{B_1}$ such that $\Psi^{S^m} = (\Upsilon^S)^{\otimes m}$
and $\ket{\Psi}^{CA B^mE^m}$ $= (V_\cN^{A^\prime S \to BE})^{\otimes m}
\ket{\Psi}^{CA A^{\prime m}S^m}$,
there exists a $(\{\cN, \ket{\Upsilon}\}, \err, mn)$ Father protocol with side information at the transmitter such that for any
$\alpha \in (1,2]$ and $\delta_1, \delta_2 > 0$,
\begin{align}
\label{yae68}
\frac{\log|R|}{m} + \frac{\log|B_1|}{m} & = \frac{H_\alpha(A|S^m)_\Psi}{m} - |S| \frac{\log(mn+1)}{mn} - \delta_1 \\
\label{yae69}
\frac{\log|R|}{m} - \frac{\log|B_1|}{m}  & = -\frac{H_{\widetilde{\alpha}}(A | B^m)_{\Psi}}{m} -
|C| |E|^m \frac{\log(n+1)}{mn} - \delta_2,
\end{align}
and the error approaches $0$ exponentially in $mn$.
\end{corollary}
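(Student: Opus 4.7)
The plan is to apply Theorem~\ref{yatheorem4} to an $m$-blocked super-channel and divide the resulting rates through by $m$, using one sharpened eigenvalue count on the side-information reference to obtain the dimension penalty of the stated form.

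Specifically, I would first define the super-channel $\widetilde{\cN} \equiv \cN^{\otimes m}$, viewed as a cptp map $A^{\prime m} S^m \to B^m$ with Stinespring dilation $(V_\cN)^{\otimes m}$, together with the blocked side-resource $\ket{\widetilde{\Upsilon}}^{S^m S^{\prime m}} \equiv (\ket{\Upsilon}^{S S^\prime})^{\otimes m}$, so that $\widetilde{\Upsilon}^{S^m} = (\Upsilon^S)^{\otimes m}$. The given pure state $\ket{\Psi}^{C A A^{\prime m} S^m}$ fulfils precisely the hypothesis of Theorem~\ref{yatheorem4} for the super-channel $\{\widetilde{\cN}, \ket{\widetilde{\Upsilon}}\}$, once we identify the single-letter $A^\prime, S, B, E$ of that theorem with $A^{\prime m}, S^m, B^m, E^m$ here; the decomposition $\cH_A = \cH_R \otimes \cH_{B_1}$ and the constraint $\widetilde{\Psi}^{S^m} = \widetilde{\Upsilon}^{S^m}$ are inherited directly. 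Applying Theorem~\ref{yatheorem4} with $n$ uses of $\widetilde{\cN}$ then produces a Father-with-side-information protocol consuming $mn$ uses of the original $\cN$, whose error decays exponentially in $n$ and hence exponentially in $mn$.

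The only subtlety lies in how the reference dimensions enter the rate equations via Corollary~\ref{corollary1}. The decoupling bound analogous to \eqref{yae27} has reference $CE^m$; since $\Psi^{C E^m}$ is generally not product across the $m$ blocks, Corollary~\ref{corollary1} contributes the penalty $|C||E|^m \log(n+1)$, which, after dividing by $m$, becomes the $|C||E|^m \log(n+1)/(mn)$ appearing in \eqref{yae69}. The decoupling bound analogous to \eqref{yae28} has reference $S^m$, but its $n$-fold tensor marginal equals $(\Upsilon^S)^{\otimes mn}$, a genuine product on $mn$ copies of $S$. By the type-counting bound cited in the proof of Theorem~\ref{theorem1} (Theorem~11.1.1 of \cite{covertom} or Lemma~3.7 of \cite{hayashi}), the number of distinct eigenvalues is at most $(mn+1)^{|S|}$, yielding the sharpened penalty $|S|\log(mn+1)$ in place of the naive $|S|^m \log(n+1)$; after dividing by $m$, this produces the $|S|\log(mn+1)/(mn)$ term in \eqref{yae68}. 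Dividing both rate equations from Theorem~\ref{yatheorem4} by $m$ then reproduces exactly \eqref{yae68} and \eqref{yae69}.

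The main obstacle is essentially just this reference-dimension bookkeeping, in particular the observation that the product structure $\Psi^{S^m} = (\Upsilon^S)^{\otimes m}$ permits the sharpening from $|S|^m\log(n+1)$ down to $|S|\log(mn+1)$; the underlying analytic machinery is already supplied by Theorem~\ref{yatheorem4} and Corollary~\ref{corollary1}, and no new decoupling argument is needed.
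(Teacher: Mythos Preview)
Your proposal is correct and matches the paper's approach exactly: the paper's proof is simply ``We omit the proof. Rather than blindly applying Theorem~\ref{yatheorem4}, we need to use $\nu_{(\Upsilon^S)^{\otimes mn}} \leqslant (mn+1)^{|S|}$,'' which is precisely the $m$-blocking plus the sharpened eigenvalue count on the $S^m$-reference that you identified. Your observation that the $CE^m$ reference cannot be similarly sharpened (hence the $|C||E|^m$ term) is also the correct reading of the asymmetry between \eqref{yae68} and \eqref{yae69}.
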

We omit the proof. Rather than blindly applying Theorem \ref{yatheorem4},
we need to use \linebreak $\nu_{(\Upsilon^S)^{\otimes mn}} \leqslant (mn+1)^{|S|}$.
The number $m$ serves two purposes. Firstly, it enables a better approximation to
the optimal rates, and, secondly, it allows for finer approximation to the \renyi quantities through the choices of
$|R|$ and $|B_1|$.

Note that by choosing $|B_1|=1$, we get {\bf entanglement-unassisted quantum communication}
as a special case of the above and for any $\alpha \in (1,2]$ and $\delta_1, \delta_2 > 0$,
the rate is given by
\begin{multline}
\frac{\log|R|}{m} =  \min \Big\{ \frac{H_\alpha(A|S^m)_\Psi}{m} - |S| \frac{\log(mn+1)}{mn} - \delta_1, \\
-\frac{H_{\widetilde{\alpha}}(A | B^m)_{\Psi}}{m} -
|C| |E|^m \frac{\log(n+1)}{mn} - \delta_2 \Big\}.
\end{multline}

Assuming $H_\alpha(A|S^m)_\Psi \geqslant - H_{\widetilde{\alpha}}(A | B^m)_{\Psi}$,
the rate for {\bf quantum communication assisted by
unlimited entanglement} for any $\alpha \in (1,2]$ and
$\delta > 0$ is given by
\begin{multline}
\label{yae67}
\frac{\log|R|}{m} = \frac{H_\alpha(A|S^m)_\Psi - H_{\widetilde{\alpha}}(A | B^m)_{\Psi}}{2m}
- \frac{|S| \log(mn+1) + |C| |E|^m \log(n+1)}{2mn} - \delta.
\end{multline}

\begin{definition}
A $(\{\cN, \ket{\Upsilon}\},\err,n)$ {\bf entanglement-assisted classical communication} protocol with side information at
the transmitter consists of $n$ copies of an MES $\Phi^{A_2B_2}$, where Alice has $A_2$ and
Bob has $B_2$, Alice having a random variable $X$ uniformly distributed over a set $\cX$ that
models the information, Alice applying an encoding map $\cE_x^{A_2^n S^{\prime n} \to A^{\prime n}}$,
$x \in \cX$, if $X = x$, $n$ uses of the channel with side information at the transmitter
$(\cN^{A^\prime S \to B}, \ket{\Upsilon}^{S S^\prime})$, and
Bob applying a POVM (positive operator-valued measure) $\{ \Lambda^{B^n B_2^n}_{x^\prime}$,
$x^\prime \in \cX \}$, such that for
\beq
\Pr\{ x^\prime | x \} \equiv \tr \, \Lambda_{x^\prime}^{B^n B_2^n} \left[ ( \cN^{A^\prime S \to B} )^{\otimes n}
\circ \cE_x^{A_2^n \to A^{\prime n}} (\Phi^{A_2 B_2} \otimes \Upsilon^{S S^\prime})^{\otimes n} \right],
\enq
\beq
\frac{1}{|\cX|} \sum_x (1 - \Pr \{ x | x \} ) \leqslant \err.
\enq
The number $(\log |\cX|)/n$ is called the {\bf classical communication rate} of the protocol.

A real number $\cR_C$ is called an {\bf achievable rate} if there
exist, for $n \to \infty$, any choice of $|A_2|$, protocols with classical communication rate approaching
$\cR_C$ and $\err$ approaching $0$.
\end{definition}

Note that the capacity for this protocol was obtained in Ref.
\cite{dupuis-thesis}.
We now provide the random coding exponents for the entanglement-assisted classical communication.
\begin{corollary}
For any $m, n \in \mathbb{N}$, a pure state $\Psi^{C AA^{\prime m}S^m}$ with
$\cH_{A} = \cH_R \otimes \cH_{B_1}$ such that $\Psi^{S^m} = (\Upsilon^S)^{\otimes m}$
and $\ket{\Psi}^{CA B^mE^m}$ $= (V_\cN^{A^\prime S \to BE})^{\otimes m}
\ket{\Psi}^{CA A^{\prime m}S^m}$,
there exists a $(\{\cN, \ket{\Upsilon}\}, \err, mn)$ \linebreak {\bf entanglement-assisted classical communication} protocol with
side information at the transmitter such that for any $\alpha \in (1,2]$ and $\delta > 0$,
the rate per channel use is given by
\begin{align}
\frac{\log|\cX|}{mn} = \frac{H_\alpha(A|S^m)_\Psi}{m} -\frac{H_{\widetilde{\alpha}}(A | B^m)_{\Psi}}{m}
- |S| \frac{\log(mn+1)}{mn} - |C| |E|^m \frac{\log(n+1)}{mn} - \delta,
\end{align}
and the error approaches $0$ exponentially in $mn$.
\end{corollary}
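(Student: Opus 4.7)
The plan is to obtain this corollary as a one-step composition: I would use the entanglement-assisted quantum communication protocol underlying equation~\eqref{yae67} to simulate an approximately noiseless quantum channel from Alice to Bob, and then layer superdense coding on top. Since superdense coding converts a noiseless $d$-dimensional quantum channel (together with a shared $d$-dimensional MES) into a carrier of $d^2$ classical messages, the resulting classical rate is exactly twice the quantum rate of~\eqref{yae67}, which, after the doubling propagates uniformly to every term including the polynomial $\log(n+1)$ and $\log(mn+1)$ corrections and the slack $\delta$, matches the expression in the corollary.

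Concretely, I would have Alice and Bob pre-share the MES $\Phi^{A_2 B_2}$ of total local dimension $|R|^n$, with $|R|$ chosen as in~\eqref{yae67}, and let $\{P_x\}_{x\in\cX}$, with $|\cX|=|R|^{2n}$, be the Heisenberg--Weyl operators on $\cH_{A_2}$, so that the vectors $(P_x\otimes\eye)\ket{\Phi}^{A_2 B_2}$ form an orthonormal Bell-type basis of $\cH_{A_2}\otimes\cH_{B_2}$. For a message $x\in\cX$, the encoder $\cE_x$ first applies $P_x$ to Alice's $A_2$ register and then runs the Father encoding from~\eqref{yae67}, treating $A_2$ in place of the transmitted register and using the $S^\prime$ copies together with a \emph{separate} Father-protocol MES as its other inputs, producing an output on $A^{\prime mn}$ sent through $mn$ uses of $\{\cN^{A^\prime S\to B},\ket{\Upsilon}^{SS^\prime}\}$. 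Bob's POVM is the Bell measurement $\Lambda_{x'}=(P_{x'}\otimes\eye)\ketbra{\Phi}^{A_2 B_2}(P_{x'}^{\dag}\otimes\eye)$ performed on his received register together with his held $B_2$.

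For the error analysis, the one mildly non-mechanical point is to upgrade the Father corollary from its as-stated guarantee---closeness of the output on $(B_2^n,R^n)$ to $(\Phi^{R B_2})^{\otimes n}$ by $\varepsilon_{mn}$ in trace norm---to the standard consequence that the induced Alice-to-Bob map is a trace-norm $\varepsilon_{mn}$-approximation to a noiseless channel on \emph{any} pure input: MES-closeness controls entanglement fidelity, which in turn controls the channel on arbitrary inputs. Applied to Alice's modified half $(P_x\otimes\eye)\ket{\Phi}^{A_2 B_2}$, this places Bob's joint post-transmission state within $\varepsilon_{mn}$ of the Bell vector labelled by $x$, so monotonicity under the Bell measurement gives $1-\Pr\{x\mid x\}\leqslant\varepsilon_{mn}$ uniformly in $x$ and this bound survives the uniform average over $\cX$ defining $\err$. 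The only remaining bookkeeping, which I see as the main (though quite mild) obstacle, is to keep the superdense-coding MES $\Phi^{A_2 B_2}$ logically separate from the MES consumed inside the Father protocol, and to verify that the factor of two from superdense coding promotes the $1/2$-coefficients on the polynomial corrections $|S|\log(mn+1)/(mn)$ and $|C||E|^m\log(n+1)/(mn)$ appearing in~\eqref{yae67} to the $1$-coefficients displayed in the final classical rate.
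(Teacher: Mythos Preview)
Your approach is essentially the same as the paper's: both layer qudit superdense coding on top of the Father protocol, set $|\cX|=|R|^{2n}$, and recover the classical rate as twice the quantum rate~\eqref{yae67}. The paper packs the superdense-coding MES and the Father entanglement assistance into a single $\Phi^{A_2B_2}$ by writing $\cH_{A_2}=\cH_{A_0}\otimes\cH_{A_1}$ and $\cH_{B_2}=\cH_R\otimes\cH_{B_1}$, but this is only bookkeeping to match the protocol definition; your ``separate MES'' picture is the same construction.

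The one place you work harder than necessary is the error step. You propose to upgrade the Father guarantee to an $\varepsilon_{mn}$-approximate identity channel on \emph{arbitrary} pure inputs via entanglement fidelity. That is valid but overkill (and in general costs a constant factor). The paper observes instead that your $B_2$ is precisely the Father reference $R$, which no party ever touches; hence, by Schmidt symmetry, $(P_x\otimes\eye_{B_2})\ket{\Phi}=(\eye\otimes P_x^{T})\ket{\Phi}$, so the $P_x^{T}$ on $B_2$ commutes past the entire protocol and the $x$-dependent output is within the \emph{same} $\beta_{mn}$ of $(P_x\otimes\eye)\cdot\Phi^{\otimes n}$ by unitary invariance of $\|\cdot\|_1$. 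Then Fuchs--van de Graaf gives $1-\Pr\{x\mid x\}\le\beta_{mn}$ directly. Either route yields the corollary; the Schmidt-symmetry shortcut is the cleaner one here because your inputs are all local-unitary rotations of the MES, not generic pure states.
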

\begin{proof}
We follow the well-understood strategy to encapsulate the entanglement-assisted quantum communication
protocol in the qudit superdense coding protocol. We follow the notation in Definition \ref{def1}.
Let $\cH_{A_2} = \cH_{A_0} \otimes \cH_{A_1}$ and
$\cH_{B_2} = \cH_{R} \otimes \cH_{B_1}$. Alice
has access to $A_0, A_1$ and Bob has access to $R, B_1$. Let
$V_i \in \bbU(R^n)$ such that $\tr V_i^\dag V_j = |R|^n \delta_{i,j}$. Alice chooses
$|\cX| = |R|^{2n}$, and, for $X = x$, Alice applies
$V_x$ over $R^n$ on $(\Phi^{A_0 R})^{\otimes n}$ (Alice does this by exploiting the Schmidt
symmetry) and passes that MES as input to the father protocol that uses the
channel $m \times n$ times.
At the end of the father protocol, we have a state $\rho_x^{B_1^{mn} R^{mn}}$ such that
\beq
\left\| \rho_x^{B_1^{mn} R^{mn}} - V_x \cdot (\Phi^{B_1 R})^{\otimes mn} \right\|_1 \leqslant \beta_{mn},
\enq
where $\beta_{mn} = 2 \sqrt{\varepsilon_{m,n} + 2 \sqrt{\vartheta_{m,n}}}$, and
\begin{align}
\varepsilon_{m,n} & = 8 \exp\left\{ \frac{\alpha-1}{2\alpha} \left[ |C||E|^m \log(n+1) +
n H_{\widetilde{\alpha}}(A | B^m)_{\Psi}
+ n \log \frac{|R|}{|B_1|} \right] \right\}, \\
\vartheta_{m,n} & = 8 \exp\left\{ \frac{\alpha-1}{2\alpha} \Big[ |S| \log(mn+1) - n H_{\alpha}(A|S^m)_{\Psi}
+ n \log (|R||B_1|) \Big] \right\},
\end{align}
and for appropriately chosen $|R|$ and $|B_1|$ as per \eqref{yae68} and \eqref{yae69},
$\beta_{mn}$ decays exponentially in $mn$. Bob now applies the
POVM given by $\{ V_{x^\prime} \cdot (\Phi^{B_2 R})^{\otimes mn} \}$, $x^\prime \in \cX$, and
\begin{multline}
\Pr\{ x | x \} = \tr \rho_x^{B_2^{mn} R^{mn}} \left[ V_x \cdot (\Phi^{B_2 R})^{\otimes mn} \right]
= F \left[ \rho_x^{B_2^{mn} R^{mn}}, V_x \cdot (\Phi^{B_2 R})^{\otimes mn} \right]^2 \geq 1 -  \beta_{mn},
\end{multline}
where the inequality follows from the Fuchs-van de Graaf inequalities between trace distance
and Fidelity \cite{fuchs-graaf-ineq-1998} and in particular Corollary 9.3.2 in Ref. \cite{wilde-book},
and hence, the error of the protocol is upper bounded by $\beta_{mn}$. Lastly, it is easy to show that
a cptp map followed by a POVM can be implement just by a suitably chosen POVM, and hence,
the decoder of the father protocol and the POVM of the superdense coding protocol can be
implemented by a POVM. The claim now follows readily.
\end{proof}

\section{Quantum state redistribution (QSR)}
\label{srdist}

\begin{definition}
A $(\Psi, \err, n)$ QSR protocol consists of
$n$ copies of a pure state $\ket{\Psi}^{ACBR}$ shared between with Alice ($A$ and $C$),
Bob ($B$), and the reference ($R$) unavailable
to both Alice and Bob, a MES $\Phi^{A_1 B_1}$ shared between Alice ($A_1$) and Bob ($B_1$),
Alice applying
$\cE : A_1 C^n A^n \to C_2 C_3 \widetilde{A}^n$, a quantum communication across a noiseless quantum
channel from Alice to Bob $\cI^{C_3 \to \widetilde{B}}$, and Bob applying
$\cD: B_1 \widetilde{B} B^n \to B_2 \widetilde{B}_3^n B_3^n$ such that for
\beq
\rho^{C_2 B_2 \widetilde{A}^n \widetilde{B}_3^n B_3^n R^n} \equiv
\cD^{B_1 \widetilde{B} B^n \to B_2 \widetilde{B}_3^n B_3^n}
\circ \cI^{C_3 \to \widetilde{B}} \circ \cE^{A_1 C^n A^n \to C_2 C_3 \widetilde{A}^n}
[(\Psi^{ACBR})^{\otimes n} \otimes \Phi^{A_1 B_1}],
\enq
\beq
\left\| \rho^{C_2 B_2 \widetilde{A}^n \widetilde{B}_3^n B_3^n R^n} - \Phi^{C_2 B_2} \otimes
(\Psi^{\widetilde{A} \widetilde{B}_3 B_3 R})^{\otimes n} \right\|_1 \leqslant \err.
\enq
The number $(\log |C_3|)/n$ is called the {\bf quantum communication rate}
and \linebreak $(\log |B_1| - \log |C_2|)/n$ is called the {\bf entanglement cost rate} of the protocol.

A pair of real numbers $(\cR_Q,\cR_E)$ is called an {\bf achievable rate pair} if there
exist, for $n \to \infty$, QSR protocols with quantum communication rate approaching
$\cR_Q$, entanglement cost rate approaching $\cR_E$, and $\err$ approaching $0$.
\end{definition}

The achievable rates are described by the following theorem.

\begin{theorem}[Devetak and Yard, 2008 \cite{devetak-state-redist-2008}]
\label{yatheorem3}
The following rates are achievable for the QSR protocol:
\beq
\cR_Q > \frac{1}{2} I(C:R|B)_\Psi \hspace{0.5in} \text{and} \hspace{0.5in}
\cR_Q + \cR_E > H(C|B)_\Psi.
\enq
\end{theorem}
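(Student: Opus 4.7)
The proof will follow the FQSW template of Section \ref{fqsw}, adapted to the richer setup of QSR in which Alice already holds a side register $A^n$ and a share of the input MES $\Phi^{A_1 B_1}$. The encoding is built from a full-rank partial isometry $W^{A_1 C^n \to C_2 C_3}$ with $|C_2||C_3| \leqslant |A_1||C|^n$: Alice conjugates $A_1 C^n$ by a (to-be-chosen) Unitary $U$, compresses via $\cC_W$, keeps $C_2$ as her half of the output entanglement, and routes $C_3$ through $\cI^{C_3 \to \widetilde{B}}$. Bob's decoder will be an Uhlmann-type isometry on $B_1 \widetilde{B} B^n$ extracted from decoupling via Lemma \ref{yal11}.

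The heart of the argument is a single application of Corollary \ref{corollary1} producing a Unitary $U^{A_1 C^n}$ that simultaneously satisfies two decoupling bounds. The first,
\[
\left\| \tr_{C_2 C_3} \circ \mapone\left[U \cdot \left((\Psi^{ACBR})^{\otimes n} \otimes \Phi^{A_1 B_1}\right)\right] - (\Psi^{ABR})^{\otimes n} \otimes \pi^{B_1} \right\|_1 \leqslant \varepsilon_n,
\]
supplies a correcting Unitary $V$ on $A_1 C^n$ (via Lemma \ref{yal11}) such that $\cC_W \circ V \circ U$ acts on the initial state almost as a reversible operation, exactly as in \eqref{yae55}--\eqref{yae45} of the FQSW proof. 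The second,
\[
\left\| \tr_{C_3} \circ \mapone\left[U \cdot \left((\Psi^{CR})^{\otimes n} \otimes \pi^{A_1}\right)\right] - \pi^{C_2} \otimes (\Psi^R)^{\otimes n} \right\|_1 \leqslant \vartheta_n,
\]
decouples Alice's kept register $C_2$ from the reference and yields, again via Lemma \ref{yal11}, an isometry $\widetilde{U}^{B_1 \widetilde{B} B^n \to B_2 \widetilde{B}_3^n B_3^n}$ that simultaneously synthesizes $\Phi^{C_2 B_2}$ and reconstructs $(\Psi^{\widetilde{A} \widetilde{B}_3 B_3 R})^{\otimes n}$ on Bob's side. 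A chain of triangle inequalities mirroring \eqref{yae46}--\eqref{yae45} then bounds the total error by a constant multiple of $\Xi(\varepsilon_n) + \Xi(\vartheta_n)$.

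To match the rate region of Theorem \ref{yatheorem3}, the two exponents, after invoking the duality identities on the pure state $\Psi^{ACBR}$ to trade \renyi conditional entropies for those on the complementary system, should combine to give $\cR_Q + \cR_E > H_{\widetilde{\alpha}}(C|B)_\Psi + o(1)$ and $2\cR_Q > I_\alpha(C{:}R|B)_\Psi + o(1)$ (both with polylog-in-$n$ corrections from the $\log \nu_{\sigma^R}$ term of Theorem \ref{theorem1}). The main obstacle will be the bookkeeping of the $\Theta(\cdot)$ contributions for $\tr_{C_2 C_3} \circ \mapone^{A_1 C^n \to C_2 C_3}$ and $\tr_{C_3} \circ \mapone^{A_1 C^n \to C_2 C_3}$, analogs of Lemmas \ref{yal5} and \ref{yal8} applied to a map that both rescales by $|A_1||C|^n/(|C_2||C_3|)$ and partially traces its output, so that the signs and coefficients of $\log|A_1|$, $\log|C_2|$, $\log|C_3|$ in the two exponents line up precisely with the combinations $\log(|A_1||C_3|/|C_2|)$ and $\log |C_3|^2$ dictated by $\cR_Q+\cR_E$ and $2\cR_Q$. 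Once that calculation is in hand, letting $\alpha \to 1^+$ exhausts the Devetak--Yard rate region with error decaying exponentially in $n$.
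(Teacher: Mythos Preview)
Your FQSW-template approach has two genuine gaps.

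First, the second decoupling bound as written carries reference $R^n$ only, but Alice's output includes $\widetilde{A}^n = A^n$ (your encoding acts on $A_1 C^n$, leaving $A^n$ untouched). For Lemma \ref{yal11} to yield an isometry acting solely on Bob's systems $B_1 \widetilde{B} B^n$, you must control the marginal on $C_2 A^n R^n$ --- everything outside Bob's hands --- and match it to $\pi^{C_2} \otimes (\Psi^{AR})^{\otimes n}$. Your bound only controls the marginal on $C_2 R^n$; the Uhlmann isometry it produces would have to act on $A^n$ as well, which is Alice's register.

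Second, even after repairing the reference to $A^n R^n$ (which turns the second constraint into the correct $\cR_Q + \cR_E > H_{\widetilde{\alpha}}(C|B)_\Psi$), your first ``compression-is-lossless'' decoupling forces
\[
\log|C_2| + \log|C_3| > \log|A_1| + n H_{\widetilde{\alpha}}(C)_\Psi,
\]
i.e.\ $\cR_Q - \cR_E > H(C)_\Psi$ in the limit $\alpha \to 1$. Adding this to the second constraint gives $\cR_Q > \tfrac{1}{2}\bigl[H(C)_\Psi + H(C|B)_\Psi\bigr]$, which exceeds the Devetak--Yard target $\tfrac{1}{2}I(C{:}R|B)_\Psi$ by exactly $\tfrac{1}{2}\bigl[H(C)_\Psi + H(C|BR)_\Psi\bigr] = \tfrac{1}{2}I(A{:}C)_\Psi$ (using purity of $\Psi^{ACBR}$). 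Hence your protocol cannot reach the boundary of the rate region whenever $A$ and $C$ are correlated.

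The paper does not run the FQSW template here; it follows the FQRS-style construction of Ye--Bai--Wang \cite{qsr-pra-2008}. The Unitary $U$ acts on $C^n$ alone and the partial isometry is $W^{C^n \to B_1 C_2 C_3}$, so that $B_1$ is an \emph{output} register. The first decoupling,
\[
\tr_{C_2 C_3} \circ \mapone\bigl[U \cdot (\Psi^{CBR})^{\otimes n}\bigr] \approx \pi^{B_1} \otimes (\Psi^{BR})^{\otimes n},
\]
is not a losslessness check; via Lemma \ref{yal11} it \emph{defines} Alice's encoder as $\cC_{V_1^\dag}^{A_1 A^n C^n \to C_2 C_3 \widetilde{A}^n}$, where $V_1$ is the Uhlmann isometry matching the purification $\Phi^{A_1 B_1} \otimes (\Psi^{ACBR})^{\otimes n}$. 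The second decoupling uses reference $\widetilde{A}^n R^n$ and is governed by $H_\alpha(C|AR)_\Psi = -H_{\widetilde{\alpha}}(C|B)_\Psi$, yielding Bob's decoder. The two $\Theta$-contributions are $\log\bigl(|B_1|/(|C_2||C_3|)\bigr)$ and $\log\bigl(|C_2|/(|B_1||C_3|)\bigr)$; their sum isolates $-2\log|C_3|$, which is precisely what makes $\cR_Q > \tfrac{1}{2}I(C{:}R|B)_\Psi$ fall out without any parasitic $H(C)_\Psi$ term.
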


Our goal in the remainder of this section is to provide the random coding exponents
for the achievability of this protocol.

\begin{theorem}
For any $n \in \mathbb{N}$, there exists a $(\Psi, \err, n)$ QSR protocol for any $\alpha \in (1,2]$,
$\delta_1, \delta_2 > 0$, such that
\begin{align}
\frac{\log |C_3|}{n} & = \frac{1}{2} \left[ H_{\widetilde{\alpha}}(C | B)_{\Psi} - H_\alpha(C|BR)_\Psi \right]
+ (|A| + |B|) |R| \frac{ \log(n+1) }{2n} + \frac{\delta_1 + \delta_2}{2}, \\
\frac{1}{n} \log \frac{|C_3| |B_1|}{|C_2|}
& = H_{\widetilde{\alpha}}(C | B)_{\Psi} + |A| |R| \frac{ \log(n+1) }{n} + \delta_2,
\end{align}
and the error approaches $0$ exponentially in $n$.
\end{theorem}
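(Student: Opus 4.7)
The plan is to follow the FQSW proof of Section~\ref{fqsw} closely, treating $A_1 C^n$ (rather than just $C^n$) as Alice's system to be randomized, so that the pre-shared MES is consumed by the same decoupling argument. Let $W^{A_1 C^n \to C_2 C_3}$ be a full-rank partial isometry with $|C_2||C_3| \leqslant |A_1||C|^n$, and write $\Omega \equiv \Phi^{A_1 B_1}\otimes (\Psi^{ACBR})^{\otimes n}$ for the total input pure state. A random Unitary $U$ is drawn on $A_1 C^n$, and $\mapone$ (with this $W$ built in) is used as the class-$1$ map for the decoupling analysis; Alice's actual cptp encoding will then be $\cC_W\circ V$ on $A_1 C^n$ together with identity on $A^n\to\widetilde{A}^n$, where $V$ is extracted from Uhlmann.

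I would invoke Theorem~\ref{theorem1} together with the union-bound trick behind Corollary~\ref{corollary1} to produce a single $U$ satisfying two simultaneous decoupling bounds. The first, applied to the marginal $\tr_{B_1 B^n}\Omega = \pi^{A_1}\otimes (\Psi^{CAR})^{\otimes n}$ with map $\tr_{C_3}\circ \mapone : A_1 C^n\to C_2$ and reference $A^n R^n$, yields
\beq
\bigl\|\tr_{C_3}\circ \mapone [U\cdot(\pi^{A_1}\otimes(\Psi^{CAR})^{\otimes n})] - \pi^{C_2}\otimes(\Psi^{AR})^{\otimes n}\bigr\|_1 \leqslant \vartheta_n,
\enq
with exponent $|A||R|\log(n+1) + nH_{\widetilde{\alpha}}(C|B)_\Psi + \log(|C_2|/|C_3|) - \log|A_1|$, after invoking the duality $-H_\alpha(C|AR)_\Psi = H_{\widetilde{\alpha}}(C|B)_\Psi$, the evaluation $\Theta(\tr_{C_3}\circ\mapone) = \log(|C_2|/|C_3|)$, and the $-\log|A_1|$ contribution from $\pi^{A_1}$. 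The second, applied to $\tr_{A^n}\Omega = \Phi^{A_1 B_1}\otimes (\Psi^{CBR})^{\otimes n}$ with map $\tr_{C_2 C_3}\circ \mapone$ and reference $B_1 B^n R^n$ using the test state $\sigma^R = \pi^{B_1}\otimes (\sigma^{BR})^{\otimes n}$, yields
\beq
\bigl\|\tr_{C_2 C_3}\circ \mapone [U\cdot(\Phi^{A_1 B_1}\otimes(\Psi^{CBR})^{\otimes n})] - \pi^{B_1}\otimes(\Psi^{BR})^{\otimes n}\bigr\|_1 \leqslant \varepsilon_n,
\enq
with exponent $|B||R|\log(n+1) - nH_\alpha(C|BR)_\Psi - \log(|C_2||C_3|) + \log|B_1|$ (the $+\log|B_1|$ coming from $-H_\alpha(A_1|B_1)_\Phi$, and $|B_1|$ absent from $\log\nu_{\sigma^R}$ since $\nu_{\pi^{B_1}}=1$). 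Forcing $\vartheta_n$ to decay exponentially reproduces the theorem's second rate equation (with $\delta_2$), and adding the two constraints reproduces the first (with $\delta_1+\delta_2$, $\delta_1$ controlling $\varepsilon_n$).

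The remainder mirrors the FQSW template. Lemma~\ref{yal11} applied to the first bound yields an isometry $\widetilde{U}^{B_1\widetilde{B}B^n \to B_2 \widetilde{B}_3^n B_3^n}$ which, composed with the noiseless channel $\cI^{C_3\to\widetilde{B}}$, implements Bob's decoder and reconstructs $\Phi^{C_2 B_2}\otimes(\Psi^{\widetilde{A}\widetilde{B}_3 B_3 R})^{\otimes n}$ up to $\Xi(\vartheta_n)$. Lemma~\ref{yal11} applied to the second bound produces a Unitary $V$ on $A_1 C^n$ with $\mapone [U\cdot\Omega]\approx \cC_W[V\cdot\Omega]$ up to $\Xi(\varepsilon_n)$, legitimising the cptp encoding $\cC_W\circ V$ as a stand-in for the non-cptp $\mapone\circ U$. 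The triangle inequality produces a total protocol error of at most $\Xi(\vartheta_n)+\Xi(\varepsilon_n)$, which is exponentially small in $n$.

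The main delicacy is the bookkeeping for the second decoupling, since the input state is not a pure $n$-th tensor power: one must pick $\sigma^R = \pi^{B_1}\otimes (\sigma^{BR})^{\otimes n}$ so that the extra $\Phi^{A_1 B_1}$ factor contributes a clean $+\log|B_1|$ term rather than inflating the $\log\nu_{\sigma^R}$ prefactor, and one must check that the dimension constraint $|C_2||C_3|\leqslant |A_1||C|^n$ is consistent with both rate equations (which holds for $\alpha$ sufficiently close to $1$). A secondary consideration is verifying that the Uhlmann-step isometry $\widetilde{U}$ has source and target dimensions compatible with the choices of $|C_2|,|C_3|,|A_1|$ dictated by the two rate equations.
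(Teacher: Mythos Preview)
Your approach is sound and reaches the same rates and error exponents, but it is organised differently from the paper. The paper randomises only $C^n$, using a partial isometry $W^{C^n \to B_1 C_2 C_3}$ that \emph{outputs} a fresh register $B_1$; decoupling bound \eqref{yae48} shows this $B_1$ becomes (half of) a MES, and Uhlmann then produces an isometry $V_1^{C_2 C_3 \widetilde{A}^n \to A_1 A^n C^n}$ whose compressive inverse $\cC_{V_1^\dag}$ \emph{consumes} the actual pre-shared $\Phi^{A_1 B_1}$ and serves as Alice's encoder. You instead absorb $A_1$ into the randomised system from the start, with $W^{A_1 C^n \to C_2 C_3}$, so that the MES is consumed directly in the forward direction and the FQSW template of Section~\ref{fqsw} applies verbatim. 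The two routes give identical exponents because your $-\log|A_1|$ contribution from $\pi^{A_1}$ in the first bound and $+\log|B_1|$ from $-H_\alpha(A_1|B_1)_\Phi$ in the second bound exactly replicate the $\log|B_1|$ factors coming from the paper's $\Theta(\tr_{C_2 C_3}\circ\mapone)$ and $\Theta(\tr_{B_1 C_3}\circ\mapone)$ (recall $|A_1|=|B_1|$). Your version is arguably more direct; the paper's version, following \cite{qsr-pra-2008}, has the conceptual feature that the random Unitary and the decoupling analysis never touch the MES at all.

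One small correction: Lemma~\ref{yal11} applied to your second bound yields a Unitary $V$ on $A_1 C^n A^n$, not on $A_1 C^n$ alone. The purifying system of $\pi^{B_1}\otimes(\Psi^{BR})^{\otimes n}$ inside $\Omega$ is all of $A_1 C^n A^n$, and the same is true of $\tr_{C_2 C_3 A^n}\mapone[U\cdot\Omega]$ after pulling back by $W^\dag$; there is no way to force the Uhlmann isometry to leave $A^n$ untouched. This does not break the argument---Alice holds $A^n$, and her encoder $\cE^{A_1 C^n A^n \to C_2 C_3 \widetilde{A}^n}$ is allowed to act on it---but the ``identity on $A^n\to\widetilde{A}^n$'' in your description must be replaced by the relevant action of $V$.
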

\begin{proof}
Our line of attack is similar to that in Ref. \cite{qsr-pra-2008}.
Let $W^{C^n \to B_1 C_2 C_3}$, $|B_1| |C_2| |C_3| \leqslant |C|^n$, be a full-rank partial isometry.
Then we can claim using Corollary \ref{corollary1} that for any $\alpha \in (1,2]$, there exists
a Unitary $U^{C^n}$ such that
\begin{multline}
\label{yae48}
\Big\| \tr_{C_2 C_3} \circ \mapone^{C^n \to B_1 C_2 C_3} \Big[ U^{C^n} \cdot
(\Psi^{C B R})^{\otimes n} \Big] - \pi^{B_1} \otimes (\Psi^{BR})^{\otimes n} \Big\|_1 \\
\leqslant 8 \exp\left\{ \frac{\alpha-1}{2\alpha} \left[ |B| |R| \log(n+1) - n H_{\alpha}(C | B R)_{\Psi}
+ \log \frac{|B_1|}{|C_2| |C_3|} \right] \right\} \equiv \varepsilon_n,
\end{multline}
and
\begin{multline}
\label{yae49}
\Big\| \tr_{B_1 C_3} \circ \mapone^{C^n \to B_1 C_2 C_3} \Big[ U^{C^n} \cdot
(\Psi^{\widetilde{A} C R})^{\otimes n} \Big] - \pi^{C_2} \otimes
(\Psi^{\widetilde{A} R})^{\otimes n} \Big\|_1 \\
\leqslant 8 \exp\left\{ \frac{\alpha-1}{2\alpha} \left[ |A| |R| \log(n+1) - n H_{\alpha}(C | A R)_{\Psi}
+ \log \frac{| C_2 |}{|B_1| |C_3|} \right] \right\} \\
= 8 \exp\left\{ \frac{\alpha-1}{2\alpha} \left[ |A| |R| \log(n+1) + n H_{\widetilde{\alpha}}(C | B)_{\Psi}
+ \log \frac{| C_2 |}{|B_1| |C_3|} \right] \right\} \equiv \vartheta_n.
\end{multline}
Using \eqref{yae48}, we claim that there exists an isometry $V_1^{C_2 C_3 \widetilde{A}^n \to A_1 A^n C^n}$
such that
\beq
\Big\| V_1 \cdot \mapone \Big[ U^{C^n} \cdot
(\Psi^{\widetilde{A} C B R})^{\otimes n} \Big] - \Phi^{A_1 B_1} \otimes (\Psi^{ACBR})^{\otimes n}
\Big\|_1 \leqslant \Xi(\varepsilon_n),
\enq
and hence, using the compressive map $\cC_{V_1^\dag}: A_1 A^n C^n \to C_2 C_3 \widetilde{A}^n$, we have
\begin{align}
\label{yae50}
\Big\| \cC_{V_1^\dag} \left[ \Phi^{A_1 B_1} \otimes (\Psi^{ACBR})^{\otimes n} \right] - \mapone
\Big[ U^{C^n} \cdot (\Psi^{\widetilde{A} C B R})^{\otimes n} \Big] \Big\|_1
\leqslant \Xi(\varepsilon_n).
\end{align}
Using \eqref{yae49}, we claim that there exists an isometry
$V_2^{B_1 \widetilde{B} B^n \to B_2 \widetilde{B}_3^n B_3^n}$ such that
\beq
\Big\| V_2 \cdot \cI \circ \mapone \Big[ U^{C^n} \cdot
(\Psi^{\widetilde{A} C B R})^{\otimes n} \Big] - \Phi^{C_2 B_2} \otimes
(\Psi^{\widetilde{A} \widetilde{B}_3 B_3 R})^{\otimes n}
\Big\|_1 \leqslant \Xi(\vartheta_n).
\enq
Using monotonicity and triangle inequality, we now have
\beq
\Big\| V_2 \cdot \cI \circ \cC_{V_1^\dag} \left[ \Phi^{A_1 B_1} \otimes (\Psi^{ACBR})^{\otimes n} \right] -
\Phi^{C_2 B_2} \otimes (\Psi^{\widetilde{A} \widetilde{B}_3 B_3 R})^{\otimes n} \Big\|_1
\leqslant \Xi(\varepsilon_n) + \Xi(\vartheta_n).
\enq
Hence, Alice's operation is $\cC_{V_1^\dag}^{A_1 A^n C^n \to C_2 C_3 \widetilde{A}^n}$
and Bob's operation is $V_2^{B_1 \widetilde{B} B^n \to B_2 \widetilde{B}_3^n B_3^n}$.
The claim now follows readily.
\end{proof}

\section{Quantum communication across Broadcast Channels \\ (QCBC)}
\label{qbc}

\begin{definition}
A $(\cN, \err, n)$ QCBC protocol consists of
$n$ copies of four MES $\ket{\Phi}^{S_1 R_1}$, $\ket{\Phi}^{A_1 B_1}$,
$\ket{\Phi}^{S_2 R_2}$, and $\ket{\Phi}^{A_2 B_2}$,
where Alice has $S_1$, $S_2$, $A_1$, $A_2$, Bob $1$ has $B_1$, Bob $2$ has $B_2$,
and the references ($R_1$ and $R_2$) are inaccessible to both Alice and Bob,
Alice applying the encoding map $\cE^{A_1^n S_1^n A_2^n S_2^n \to A^{\prime n}}$,
$n$ uses of a quantum broadcast channel from Alice to Bob
$1$ and $2$, $\cN^{A^\prime \to C_1 C_2}$ (with
Stinespring dilation $V_{\cN}^{A^\prime \to C_1  C_2 E}$), and
local quantum operations by Bobs $\cD_i^{B_i^n C_i^n \to \widetilde{S}_i^n}$, $i=1,2$, such that for
\begin{multline}
\rho^{\widetilde{S}_1^n R_1^n \widetilde{S}_2^n R_2^n} \equiv \left(\cD_1^{B_1^n C_1^n \to \widetilde{S}_1^n}
\circ \cD_2^{B_2^n C_2^n \to \widetilde{S}_2^n} \right)
\circ (\cN^{A^\prime \to C_1 C_2})^{\otimes n} \\
\circ \cE^{A_1^n S_1^n A_2^n S_2^n \to A^{\prime n}}
\left[ (\Phi^{S_1 R_1} \otimes \Phi^{A_1 B_1} \otimes \Phi^{S_2 R_2}
\otimes \Phi^{A_2 B_2})^{\otimes n} \right],
\end{multline}
\beq
\left\| \rho^{\widetilde{S}_1^n R_1^n \widetilde{S}_2^n R_2^n} - (\Phi^{\widetilde{S}_1 R_1} \otimes
\Phi^{\widetilde{S}_2 R_2})^{\otimes n} \right\|_1 \leqslant \err.
\enq
For $i=1,2$, the numbers $\log |R_i|$ are the {\bf quantum communication rates}
and $\log |B_i|$ are the {\bf entanglement consumption rates} of the protocol.

A vector of real numbers $(\cR_{Q,1},\cR_{Q,2},\cR_{E,1},\cR_{E_2})$ is called an
{\bf achievable rate vector} if there
exist, for $n \to \infty$, QCBC protocols with quantum communication rates approaching
$\cR_{Q,i}$, entanglement consumption rates approaching $\cR_{E,i}$, $i=1,2$,
and $\err$ approaching $0$.
\end{definition}

\begin{theorem}[Dupuis, 2009 \cite{dupuis-thesis}]
\label{yatheorem8}
Let $\ket{\Psi}^{G_1 G_2 A^{\prime} D}$ be any pure state with
$\ket{\Psi}^{G_1 G_2 C_1 C_2 E D} = V_\cN^{A^\prime \to C_1 C_2 E} \ket{\Psi}^{G_1 G_2 A^{\prime} D}$. The
following rates are achievable:
\begin{align}
\log |R_1| + \log |B_1| & < H(G_1)_\Psi \\
\log |R_2| + \log |B_2| & < H(G_2)_\Psi \\
\log |R_1| + \log |B_1| + \log |R_2| + \log |B_2| & < H(G_1 G_2)_\Psi \\
\log |R_1| - \log |B_1| & < I(G_1 \rangle C_1)_\Psi \\
\log |R_2| - \log |B_2| & < I(G_2 \rangle C_2)_\Psi.
\end{align}
\end{theorem}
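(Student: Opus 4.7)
The plan is to adapt the proofs of FQSW (Section~\ref{fqsw}) and the Father protocol with side information at the transmitter (Theorem~\ref{yatheorem4}) to the two-receiver setting by applying a single joint random unitary on $G_1^n G_2^n$ and extracting all five rate constraints of Theorem~\ref{yatheorem8} from simultaneous decoupling conditions furnished by Corollary~\ref{corollary1}.

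First I would fix full-rank partial isometries $W_i:G_i^n \to R_i B_i$ for $i=1,2$, with $|R_i|$ and $|B_i|$ chosen to realize the target rates, and consider $n$ copies of the Stinespring-dilated pure state $\ket{\Psi}^{G_1 G_2 C_1 C_2 E D}=V_\cN^{A^\prime \to C_1 C_2 E}\ket{\Psi}^{G_1 G_2 A^\prime D}$. Then I would invoke Corollary~\ref{corollary1} on the system $G_1^n G_2^n$ with five class-1 maps built from $\mapone$ and partial traces of $W_1\otimes W_2$: two ``coherent-information'' decoupling conditions, one per receiver, of the form $\tr_{B_i}\circ\mapone^{G_i^n\to R_i B_i}$ applied to the marginal $(\Psi^{G_i C_i E D})^{\otimes n}$, whose $\Theta$-values are computed via Lemmas~\ref{yal5} and~\ref{yal8} to yield $\log|R_i|-\log|B_i|<nI(G_i\rangle C_i)_\Psi+o(n)$; two marginal ``size'' conditions $\mapone^{G_i^n\to R_i B_i}$ applied to $(\Psi^{G_i})^{\otimes n}$, which give $\log|R_i B_i|<nH(G_i)_\Psi+o(n)$; and a joint size condition obtained from $\mapone^{G_1^n G_2^n\to R_1 B_1 R_2 B_2}$ applied to $(\Psi^{G_1 G_2})^{\otimes n}$, which gives $\log|R_1 B_1 R_2 B_2|<nH(G_1 G_2)_\Psi+o(n)$.

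Since Corollary~\ref{corollary1} produces a single unitary $U$ on $G_1^n G_2^n$ satisfying all five conditions to within error exponentially small in $n$, two applications of Lemma~\ref{yal11} (Uhlmann) then finish the job. The size conditions allow me to build Alice's encoding isometry $\cE:A_1^n S_1^n A_2^n S_2^n\to A^{\prime n}$ that, starting from the pre-shared $(\Phi^{S_i R_i}\otimes\Phi^{A_i B_i})^{\otimes n}$, reproduces up to a small error the $U$-rotated compressed state on $G_1^n G_2^n$ to be fed into the channel. The coherent-information conditions let me build each Bob's decoder $\cD_i:B_i^n C_i^n\to\widetilde{S}_i^n$ as an Uhlmann rotation onto the target $\Phi^{\widetilde{S}_i R_i}$; the two decoders act on disjoint registers, so they may be chosen independently. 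A final triangle-inequality chain, mirroring that in the proof of Theorem~\ref{yatheorem4}, combines the encoder and decoder errors into a sum of $\Xi(\cdot)$ terms, each decaying exponentially in $n$; letting $\alpha\to 1$ recovers the von Neumann-entropy rates of Theorem~\ref{yatheorem8}.

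The principal obstacle is the joint size condition: it cannot be obtained from the marginal size conditions alone because $H(G_1 G_2)_\Psi\leqslant H(G_1)_\Psi+H(G_2)_\Psi$ can be strict when $\Psi$ correlates $G_1$ and $G_2$. One must apply the decoupling theorem to the joint compression $W_1\otimes W_2$ with the \emph{same} random unitary $U$ that drives the marginal and coherent-information conditions---precisely the scenario Corollary~\ref{corollary1} was designed to handle. A subsidiary but routine point is to verify that the $\Theta$-values of the five class-1 maps combine correctly, via $D_\alpha$-duality, with the $H_\alpha$ terms in the exponents to produce exactly the entropies and coherent informations appearing in the rate region.
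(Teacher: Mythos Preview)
Your proposal has a genuine gap in the coherent-information step. With a \emph{single} random unitary $U$ acting jointly on $G_1^n G_2^n$, the system $G_2^n$ cannot serve as part of the reference when you decouple $G_1^n$, because $G_2^n$ is itself being scrambled by $U$. Concretely, Bob~1 holds only $B_1^n C_1^n$, so for his Uhlmann decoder to exist you need $R_1^n$ decoupled from \emph{everything} he does not hold, in particular from $R_2^n B_2^n$ (equivalently, from $G_2^n$ before compression). Your stated condition ``$\tr_{B_i}\circ\mapone^{G_i^n\to R_i B_i}$ applied to the marginal $(\Psi^{G_i C_i E D})^{\otimes n}$'' is not well-posed: Corollary~\ref{corollary1} requires all states $\rho_i^{AR_i}$ to share the same $A=G_1G_2$, so a state missing $G_2^n$ cannot be fed in. If instead you trace out $G_2^n$ as part of the map, you only obtain decoupling of $R_1^n$ from $C_2^n E^n D^n$, not from $G_2^n C_2^n E^n D^n$, which is insufficient. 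And if you keep $G_2^n$ in the output via $\cT_{W_1}\otimes\cT_{W_2}$, the exponent involves $H_\alpha(G_1G_2|C_2ED)_\Psi$, which by duality gives $I(G_1G_2\rangle C_1)_\Psi$ rather than $I(G_1\rangle C_1)_\Psi$.

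The paper resolves this by using two \emph{independent} random unitaries $U_1$ on $G_1^n$ and $U_2$ on $G_2^n$. For Bob~1's constraint one first applies Theorem~\ref{theorem1} to $U_1$ alone with $G_2 C_2 E D$ as the reference, obtaining the bound governed by $H_\alpha(G_1|G_2 C_2 E D)_\Psi=-H_{\widetilde\alpha}(G_1|C_1)_\Psi$; the class-$1$ property of $\cT_{W_2}$ then lets one push $U_2$ and $\cT_{W_2}$ through this estimate without loss. Bob~2's constraint is handled symmetrically with $U_2$. The size constraints are obtained sequentially as $H_\alpha(G_1|G_2)_\Psi$ and $H_\alpha(G_2)_\Psi$ (not all three of $H(G_1)$, $H(G_2)$, $H(G_1G_2)$ at once), and the full pentagon of Theorem~\ref{yatheorem8} is recovered only after swapping the roles of the two Bobs and time-sharing. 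So the ``principal obstacle'' is not the joint size condition you flag, but precisely the need to keep $G_j$ in the reference while randomizing $G_i$; this is what forces the two-unitary structure.
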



We follow the line of attack in Ref. \cite{dupuis-thesis} that we need to show the
following theorem, which
would yield Theorem \ref{yatheorem8}. The regularized expressions can be obtained by
additional blocking.

\begin{theorem}
For any $n \in \mathbb{N}$, $\ket{\Psi}^{G_1 G_2 A^{\prime} D}$ and
$\ket{\Psi}^{G_1 G_2 C_1 C_2 E D}$ the states defined in Theorem \ref{yatheorem8},
there exists a $(\cN, \err, n)$ QCBC protocol such that for
any $\alpha \in (1,2]$, $\delta_1, \delta_2, \delta_3, \delta_4 > 0$,
\begin{align}
\log |R_1| + \log |B_1| & = H_\alpha(G_1 | G_2)_\Psi -
\frac{|G_2| \log(n+1)}{n} - \delta_1 \\
\log |R_1| - \log |B_1| & = - H_{\widetilde{\alpha}}(G_1 | C_1)_\Psi -
\frac{|G_2 C_2 E D| \log(n+1)}{n} - \delta_2 \\
\log |R_2| + \log |B_2| & = H_\alpha(G_2)_\Psi - \delta_3 \\
\log |R_2| - \log |B_2| & = - H_{\widetilde{\alpha}}(G_2 | C_2)_\Psi -
\frac{|G_1 C_1 E D| \log(n+1)}{n} - \delta_4
\end{align}
and the error approaches $0$ exponentially in $n$.
\end{theorem}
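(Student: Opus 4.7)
The plan is to run two father-type protocols in parallel across the broadcast channel using a superposition-code-like structure, following the strategy in Ref.~\cite{dupuis-thesis}. I first dimensionally split $\cH_{G_i} \cong \cH_{R_i} \otimes \cH_{B_i}$ for $i=1,2$, with $|R_i|, |B_i|$ chosen so that the four rate conditions in the statement hold with equality. Working from the purification $\ket{\Psi}^{G_1 G_2 C_1 C_2 E D}$, I introduce independent Haar-random unitaries $U_1$ on $G_1^n$ and $U_2$ on $G_2^n$, each viewed as acting on $R_i^n B_i^n$.

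Next I would apply Corollary \ref{corollary1} to extract four simultaneous exponential decoupling bounds from a single realization of $(U_1, U_2)$. Two are encoder-side and reflect the superposition structure: $U_2 \cdot (\Psi^{G_2 D})^{\otimes n}$ has marginal close to $(\pi^{R_2 B_2})^{\otimes n} \otimes (\Psi^{D})^{\otimes n}$ with exponent involving $H_\alpha(G_2)_\Psi$ (matching the third rate condition, with no polynomial prefactor since the conditioning reference is trivial), while $(U_1 \otimes U_2) \cdot (\Psi^{G_1 G_2 D})^{\otimes n}$ has marginal close to $(\pi^{R_1 B_1})^{\otimes n} \otimes \rho^{(G_2 D)^n}$ with exponent involving $H_\alpha(G_1 | G_2)_\Psi$ and prefactor $|G_2| \log(n+1)$ (matching the first). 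Two are decoder-side: tracing out $B_i^n C_i^n$ from the full state, the marginal on $R_i^n$ together with everything outside Bob $i$'s hands (namely $G_j C_j E D$ for $j \neq i$) is close to $(\pi^{R_i})^{\otimes n} \otimes \rho^{(G_j C_j E D)^n}$, with the quantity $H_\alpha(G_i | G_j C_j E D)_\Psi$ in the exponent; the duality relation for the pure state $\ket{\Psi}^{G_1 G_2 C_1 C_2 E D}$ rewrites this as $-H_{\widetilde{\alpha}}(G_i | C_i)_\Psi$, matching the second and fourth rate conditions, together with the prefactors $|G_j C_j E D|$.

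Each of the four decoupling bounds is then upgraded via Lemma \ref{yal11} to the existence of an Uhlmann-type isometry with trace-norm error of order $\Xi(\cdot)$ of the decoupling error, exactly as in the FQSW and father-with-side-information proofs earlier in the paper. The two encoder-side isometries, composed with a compressive map $\cC_W$ as in Sections \ref{schu-section} and \ref{fqsw}, yield Alice's encoding $\cE^{A_1^n S_1^n A_2^n S_2^n \to A^{\prime n}}$ that produces a channel input exponentially close to the intended $(\Psi^{G_1 G_2 A^{\prime} D})^{\otimes n}$. Each Bob's decoder $\cD_i$ is the corresponding decoder-side Uhlmann isometry acting locally on $B_i^n C_i^n$. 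Chaining the three stages with the triangle inequality, and using monotonicity of the trace norm under $\cN^{\otimes n}$ to bridge the pre- and post-channel purifications, produces the stated exponentially small error.

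The main obstacle will be bookkeeping: ensuring that a single realization of $(U_1, U_2)$ simultaneously satisfies all four decoupling bounds -- handled by the union-bound style statement in Corollary \ref{corollary1} -- and that the encoder and the two decoder Uhlmann isometries interlock consistently so that both Bobs recover their respective references. This is the same organizational pattern already used in the father-with-side-information proof above, lifted to two receivers, and I expect no essentially new technique to be required beyond careful accounting of the systems involved.
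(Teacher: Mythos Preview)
Your outline captures the right high-level architecture (two father-type decouplings plus Uhlmann to build encoder and two local decoders), but there is a genuine gap in how you handle the compression and the interaction of the two random unitaries.

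First, the literal split $\cH_{G_i}\cong\cH_{R_i}\otimes\cH_{B_i}$ forces $\log|R_i|+\log|B_i|=\log|G_i|$, which is incompatible with the rate equalities in the statement (for instance $\log|R_2|+\log|B_2|=H_\alpha(G_2)_\Psi-\delta_3<\log|G_2|$ in general). The paper instead takes full-rank partial isometries $W_i^{G_i^n\to R_i^nB_i^n}$ with $|R_i|^n|B_i|^n\le|G_i|^n$ and runs the decoupling through the class-$1$ maps $\cT_{W_i}$ of \eqref{yae31}; it is the $\Theta(\cT_{W_i})$ term that produces the correct $\log|R_i||B_i|$ and $\log(|R_i|/|B_i|)$ factors in the exponents. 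Mentioning $\cC_W$ only at the encoder stage does not fix this: the decoupling bounds themselves must already carry the compression.

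Second, once the non-trace-preserving $\cT_{W_i}$ are in play, two complications arise that your four-bound sketch misses. You cannot invoke Corollary~\ref{corollary1} directly, since that result is for a single Haar unitary and your bounds mix averaging over $U_1$ with a fixed $U_2$ (and vice versa); the paper handles this by averaging over $U_1$ first with i.i.d.\ reference $G_2$, and then pushing the result through $\cT_{W_2}(U_2\cdot\,\cdot\,)$ using the class-$1$ property in expectation, before derandomising all five bounds simultaneously. And because $\cT_{W_2}$ is not cptp, an extra fifth bound $\bigl|\tr\cT_{W_2}[U_2\cdot(\Psi^{G_2})^{\otimes n}]-1\bigr|\le\varepsilon_{n,5}$ is needed so that Lemma~\ref{yal11} (which tolerates non-unit trace) still yields the Uhlmann isometry for Bob~$1$ with a controlled error; this is the $\Xi(\varepsilon_{n,2})+\varepsilon_{n,5}$ term in the paper. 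Finally, the encoder comes from a \emph{single} Uhlmann step applied to the combined bound on $\cT_{W_1}\circ\cT_{W_2}[(U_1\otimes U_2)\cdot(\Psi^{G_1G_2})^{\otimes n}]$, not from two separate isometries.
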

\begin{proof}
Let $W_i^{G_i^n \to R_i^n B_i^n}$, $|G_i|^n \geqslant |R_i|^n |B_i|^n$, $i=1,2$, be full-rank
partial isometries. Define:
\begin{align}
\varepsilon_{n,1} & \equiv
20 \exp\left\{ \frac{\alpha-1}{2\alpha} \Big[ |G_2| \log(n+1) - n H_{\alpha}(G_1 | G_2)_{\Psi}
+ n \log |R_1| |B_1| \Big] \right\} \\
\varepsilon_{n,2} & \equiv
20 \exp\left\{ \frac{\alpha-1}{2\alpha} \left[ |G_2 C_2 E D| \log(n+1) -
n H_{\alpha}(G_1 | G_2 C_2 E D)_{\Psi} + n \log \frac{|R_1|}{|B_1|} \right] \right\} \\
\varepsilon_{n,3} & \equiv 20 \exp\left\{ \frac{\alpha-1}{2\alpha} \Big[ - n H_{\alpha}(G_2)_{\Psi}
+ n \log (|R_2||B_2|) \Big] \right\} \\
\varepsilon_{n,4} & \equiv 20 \exp\left\{ \frac{\alpha-1}{2\alpha} \left[ |G_1 C_1 E D| \log(n+1) -
n H_{\alpha}(G_2 | G_1 C_1 E D)_{\Psi} + n \log \frac{|R_2|}{|B_2|} \right] \right\} \\
\varepsilon_{n,5} & \equiv 20 \exp\left\{ \frac{\alpha-1}{2\alpha} \left[ - n H_{\alpha}(G_2)_{\Psi} -
n \log |G_2| \right] \right\}.
\end{align}
For $i = 1,2$, let $U_i$ be random Unitaries on $G_i^n$. We have
\begin{multline}
\label{yae34}
\Exp_{U_1, U_2} \left\| \cT_{W_1}^{G_1^n \to R_1^n B_1^n} \circ \cT_{W_2}^{G_2^n \to R_2^n B_2^n}
\left[ (U_1 \otimes U_2) \cdot (\Psi^{G_1 G_2})^{\otimes n} \right]
- (\pi^{R_1 B_1 R_2 B_2})^{\otimes n}
\right\|_1 \\
\leqslant \Exp_{U_1, U_2} \left\| \cT_{W_2}^{G_2^n \to R_2^n B_2^n} \left( U_2 \cdot
\left\{ \cT_{W_1}^{G_1^n \to R_1^n B_1^n}
\left[ U_1 \cdot (\Psi^{G_1 G_2})^{\otimes n} \right] -
(\pi^{R_1 B_1})^{\otimes n} \otimes (\Psi^{G_2})^{\otimes n} \right\} \right) \right\|_1 \\
+ \Exp_{U_2} \left\| (\pi^{R_1 B_1})^{\otimes n} \otimes \cT_{W_2}^{G_2^n \to R_2^n B_2^n} \left[
U_2 \cdot (\Psi^{G_2})^{\otimes n} \right]  - (\pi^{R_1 B_1 R_2 B_2})^{\otimes n} \right\|_1 \\
\leqslant  \Exp_{U_1} \left\| \cT_{W_1}^{G_1^n \to R_1^n B_1^n}
\left[ U_1 \cdot (\Psi^{G_1 G_2})^{\otimes n} \right] -
(\pi^{R_1 B_1})^{\otimes n} \otimes (\Psi^{G_2})^{\otimes n} \right\|_1 \hspace{1.3in} \\
+ \Exp_{U_2} \left\| \cT_{W_2}^{G_2^n \to R_2^n B_2^n} \left[
U_2 \cdot (\Psi^{G_2})^{\otimes n} \right]  - (\pi^{R_2 B_2})^{\otimes n} \right\|_1 \\
\leqslant (\varepsilon_{n,1} + \varepsilon_{n,3})/5,
\end{multline}
where the first inequality follows from the triangle's inequality and
the second inequality follows since $\cT_{W_2}$ is a class-$1$ map and the last inequality from
Theorem \ref{theorem1}.
We also have
\begin{multline}
\label{yae35}
\Exp_{U_1, U_2} \left\|
\cT_{W_2} \left(U_2 \cdot \left\{ \tr_{B_1} \circ
\cT_{W_1} \left[ U_1 \cdot
(\Psi^{G_1 G_2 C_2 E D})^{\otimes n} \right] -
(\pi^{R_1} \otimes \Psi^{G_2 C_2 E D})^{\otimes n} \right\} \right) \right\|_1 \\
\leqslant \Exp_{U_1} \left\| \tr_{B_1} \circ
\cT_{W_1} \left[ U_1 \cdot
(\Psi^{G_1 G_2 C_2 E D})^{\otimes n} \right] -
(\pi^{R_1} \otimes \Psi^{G_2 C_2 E D})^{\otimes n} \right\|_1
\leqslant \varepsilon_{n,2}/5,
\end{multline}
where the first inequality follows since $\cT_{W_2}$ is a class-$1$ map,
\begin{align}
\label{yae37}
\Exp_{U_2} \left\| \tr_{B_2} \circ \cT_{W_2} \left[
U_2 \cdot (\Psi^{G_1 G_2 C_1 E D})^{\otimes n} \right] -
(\pi^{R_2} \otimes \Psi^{G_1 C_1 E D})^{\otimes n} \right\|_1 & \leqslant \varepsilon_{n,4}/5, \\
\Exp_{U_2} \left| \tr \circ \cT_{W_2} \left[
U_2 \cdot (\Psi^{G_2})^{\otimes n} \right] - 1 \right| & \leqslant \varepsilon_{n,5}/5.
\end{align}
We now use the arguments in Corollary \ref{corollary1} to claim that there exist
Unitaries $U_i$ on $G_i^n$, $i=1,2$, such that 
\begin{align}
\left\| \cT_{W_1} \circ \cT_{W_2}
\left[ (U_1 \otimes U_2) \cdot (\Psi^{G_1 G_2})^{\otimes n} \right]
- (\pi^{R_1 B_1 R_2 B_2})^{\otimes n} \right\|_1 & \leqslant \varepsilon_{n,1} + \varepsilon_{n,3} \\
\left\| \cT_{W_2} \left( U_2 \cdot \left\{
\tr_{B_1} \circ \cT_{W_1} \left[
U_1 \cdot (\Psi^{G_1 G_2 C_2 E D})^{\otimes n} \right] -
(\pi^{R_1} \otimes \Psi^{G_2 C_2 E D})^{\otimes n} \right\} \right) \right\|_1
& \leqslant \varepsilon_{n,2} \\
\left\| \tr_{B_2} \circ \cT_{W_2} \left[
U_2 \cdot (\Psi^{G_1 G_2 C_1 E D})^{\otimes n} \right] -
(\pi^{R_2} \otimes \Psi^{G_1 C_1 E D})^{\otimes n} \right\|_1 & \leqslant \varepsilon_{n,4} \\
\left| \tr \circ \cT_{W_2} \left[
U_2 \cdot (\Psi^{G_2})^{\otimes n} \right] - 1 \right| & \leqslant \varepsilon_{n,5}.
\end{align}
It now follows that there exist isometries $V_1^{S_1^n A_1^n S_2^n A_2^n \to A^{\prime n} D^n}$,
$V_2^{B_1^n C_1^n \to
\widetilde{G}_1^n \widetilde{S}_1^n \widetilde{C}_1^n}$, and \linebreak
$V_3^{B_2^n C_2^n \to \widetilde{G}_2^n \widetilde{S}_2^n \widetilde{C}_2^n}$ such that
\begin{multline}
\label{yae38}
\Big\| \cT_{W_1} \circ \cT_{W_2}
\left[ (U_1 \otimes U_2) \cdot (\Psi^{G_1 G_2 A^\prime D})^{\otimes n} \right] - V_1 \cdot
(\Phi^{S_1 R_1} \otimes \Phi^{A_1 B_1} \otimes \Phi^{S_2 R_2}
\otimes \Phi^{A_2 B_2})^{\otimes n} \Big\|_1 \\
\leqslant \Xi(\varepsilon_{n,1} + \varepsilon_{n,3}),
\end{multline}
\begin{multline}
\label{yae39}
\left\| \cT_{W_2} \left( U_2 \cdot \left\{ V_2 \circ \cT_{W_1}  \left[ U_1 \cdot
(\Psi^{G_1 G_2 C_1 C_2 E D})^{\otimes n}
\right] - (\Phi^{R_1 \widetilde{S}_1})^{\otimes n} \otimes
(\Psi^{\widetilde{G}_1 G_2 \widetilde{C}_1 C_2 E D})^{\otimes n} \right\} \right) \right\|_1 \\
\leqslant \Xi(\varepsilon_{n,2}) + \varepsilon_{n,5},
\end{multline}
where we have used the triangle's inequality, and
\begin{multline}
\label{yae40}
\left\| V_3 \cdot \cT_{W_2}^{G_2^n \to R_2^n B_2^n} \left[ U_2 \cdot
(\Psi^{G_1 G_2 C_1 C_2 E D})^{\otimes n}
\right] - (\Phi^{R_2 \widetilde{S}_2})^{\otimes n} \otimes
(\Psi^{G_1 \widetilde{G}_2 C_1 \widetilde{C}_2 E D})^{\otimes n} \right\|_1 \\
\leqslant \Xi(\varepsilon_{n,4}).
\end{multline}
We now have for
\begin{align}
\cE^{A_1 S_1 A_2 S_2 \to A^{\prime n}} & \equiv \tr_{D^n}
\circ V_1^{S_1^n A_1^n S_2^n A_2^n \to A^{\prime n} D^n}, \\
\cD_1^{B_1^n C_1^n \to \widetilde{S}_1^n} & \equiv \tr_{
\widetilde{G}_1^n \widetilde{C}_1^n} \circ V_2^{B_1^n C_1^n \to
\widetilde{G}_1^n \widetilde{S}_1^n \widetilde{C}_1^n}, \\
\cD_2^{B_2^n C_2^n \to \widetilde{S}_2^n} & \equiv \tr_{
\widetilde{G}_2^n \widetilde{C}_2^n} \circ V_3^{B_2^n C_2^n \to
\widetilde{G}_2^n \widetilde{S}_2^n \widetilde{C}_2^n}, \\
\Upsilon^{R_1^n \widetilde{S}_1^n R_2^n \widetilde{S}_2^n} &
\equiv \cD_1 \circ \cD_2 \circ \cT_{W_1} \circ \cT_{W_2}
\left[ (U_1 \otimes U_2) \cdot (\Psi^{G_1 G_2 C_1 C_2})^{\otimes n} \right], \\
\Upsilon^{R_1^n \widetilde{S}_1^n R_2^n \widetilde{S}_2^n}_2 & \equiv
(\Phi^{R_1 \widetilde{S}_1})^{\otimes n} \otimes
\cD_2 \circ \cT_{W_2} \left[ U_2 \cdot (\Psi^{G_2 C_2})^{\otimes n} \right],
\end{align}
\begin{multline}
\label{yae41}
\Big\| \cD_1 \circ \cD_2 \circ (\cN)^{\otimes n} \circ \cE
\left[ (\Phi^{S_1 R_1} \otimes \Phi^{A_1 B_1} \otimes \Phi^{S_2 R_2}
\otimes \Phi^{A_2 B_2})^{\otimes n} \right] -
(\Phi^{R_1 \widetilde{S}_1} \otimes \Phi^{R_2 \widetilde{S}_2})^{\otimes n} \Big\|_1 \\
\leqslant \left\| \Upsilon^{R_1^n \widetilde{S}_1^n R_2^n \widetilde{S}_2^n} -
(\Phi^{R_1 \widetilde{S}_1} \otimes \Phi^{R_2 \widetilde{S}_2})^{\otimes n} \right\|_1
+ \Xi(\varepsilon_{n,1} + \varepsilon_{n,3}) \hspace{2.3in} \\
\leqslant \left\| \Upsilon^{R_1^n \widetilde{S}_1^n R_2^n \widetilde{S}_2^n}
- \Upsilon^{R_1^n \widetilde{S}_1^n R_2^n \widetilde{S}_2^n}_2 \right\|_1
+ \left\| \Upsilon^{R_1^n \widetilde{S}_1^n R_2^n \widetilde{S}_2^n}_2 -
(\Phi^{R_1 \widetilde{S}_1} \otimes \Phi^{R_2 \widetilde{S}_2})^{\otimes n} \right\|_1 +
\Xi(\varepsilon_{n,1} + \varepsilon_{n,3}) \\
\leqslant \Xi(\varepsilon_{n,2}) + \varepsilon_{n,5} +
\Xi(\varepsilon_{n,4}) + \Xi(\varepsilon_{n,1} + \varepsilon_{n,3}),
\end{multline}
where the first inequality follows from \eqref{yae38}, the triangle inequality and the monotonicity,
the second inequality follows from the triangle inequality, the third inequality follows from
\eqref{yae39}, \eqref{yae40}, and monotonicity.
The claim of the Theorem now follows from \eqref{yae41}.
\end{proof}

\noindent {\bf Remark}:
It is clear from the above theorem that any rate in the following rate region is achievable with
error decaying exponentially in $n$ to zero:
\begin{align}
\log |R_1| + \log |B_1| & < H(G_1|G_2)_\Psi \\
\log |R_2| + \log |B_2| & < H(G_2)_\Psi \\
\log |R_1| - \log |B_1| & < I(G_1 \rangle C_1)_\Psi \\
\log |R_2| - \log |B_2| & < I(G_2 \rangle C_2)_\Psi
\end{align}
We now repeat the argument given in Theorem 5.3 in Ref. \cite{dupuis-thesis} that by
switching the roles of Bob 1 and Bob 2 and doing time sharing, we can achieve any point in
the rate region as stipulated in the claim of Theorem \ref{yatheorem8}.

\section{Destroying correlations by adding classical randomness}

\begin{definition}
A $(\rho,\err,n)$ protocol for destroying correlations by adding classical randomness
consists of $n$ copies of a bipartite state $\rho^{AR}$, and
applying $M$ Unitaries $U_i$, $i=1,...,M$, over $A ^n$ such that
\beq
\Big\| \frac{1}{M} \sum_{i=1}^M \left[ U_i \cdot (\rho^{AR})^{\otimes n} \right] -
\sigma^{A^n} \otimes (\rho^R)^{\otimes n} \Big\|_1 \leqslant \err,
\enq
where $\sigma^{A^n} \in \densitymatrix(\cH_{A^n})$ and we make no apriori restrictions on
the choice of $\sigma^{A^n}$.

The number $(\log M)/n$ is called the {\bf rate} of the protocol.
A real numbers $\cR_C$ is called an {\bf achievable rate} if there
exist, for $n \to \infty$, protocols with rate approaching $\cR_C$ and the $\err$ approaching $0$.
\end{definition}

\begin{theorem}[Groisman \emph{et al}, 2005 \cite{corr-pra-2005}]
The smallest achievable rate is $I(A:R)_\rho$.
\end{theorem}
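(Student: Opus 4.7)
The plan is to combine Theorem~\ref{theorem1} with a matrix-Chernoff derandomization, after first flattening $(\rho^{AR})^{\otimes n}$ by typical-subspace projection on both sides. Let $\Pi_A$ and $\Pi_R$ be the $\delta$-typical projectors of $(\rho^A)^{\otimes n}$ and $(\rho^R)^{\otimes n}$, of ranks $D_A \approx 2^{n[H(A)_\rho + O(\delta)]}$ and $D_R \approx 2^{n[H(R)_\rho + O(\delta)]}$, and set $\widetilde{\rho}^{A^n R^n} \propto (\Pi_A \otimes \Pi_R)(\rho^{AR})^{\otimes n}(\Pi_A \otimes \Pi_R)$. Standard typicality makes $\|\widetilde{\rho} - (\rho^{AR})^{\otimes n}\|_1$ exponentially small in $n$, forces $\widetilde{\rho}^{A^n} \approx \Pi_A / D_A$, and yields the crucial operator-norm bound $\|\widetilde{\rho}^{A^n R^n}\|_\infty \leq 2^{-n[H(AR)_\rho - O(\delta)]}$, since the joint typical projector of $(\rho^{AR})^{\otimes n}$ is essentially contained in $\Pi_A \otimes \Pi_R$.

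Applying Theorem~\ref{theorem1} with the class-$1$ map $\cT = \mathrm{id}$ restricted to the typical subspace of $A^n$ (so that $\Theta(\cT) = \log D_A$) produces a Haar-expectation bound identifying $\sigma^{A^n} = \Pi_A/D_A$ and $\widetilde{\rho}^{R^n} \approx (\rho^R)^{\otimes n}$ as the target product factors. The Haar-expectation bound does not by itself shrink with the number $M$ of unitaries used in the protocol; to derandomize it I would invoke an Ahlswede--Winter operator Chernoff / matrix Bernstein inequality. For $M$ independent Haar unitaries $U_1,\ldots,U_M$ supported on the typical subspace of $A^n$, converting spectral to trace norm via $\|X\|_1 \leq D_A D_R \|X\|_\infty$ on the $(D_A D_R)$-dimensional effective support, matrix concentration yields the sufficient condition
\begin{equation*}
M \;\geq\; C \cdot D_A \, D_R \, \|\widetilde{\rho}^{A^n R^n}\|_\infty \, \varepsilon^{-2}
\;\approx\; 2^{n[H(A)_\rho + H(R)_\rho - H(AR)_\rho] + O(\delta n)} \, \varepsilon^{-2}
\;=\; 2^{n I(A:R)_\rho + O(\delta n)} \, \varepsilon^{-2}
\end{equation*}
for the trace-norm error $\bigl\| \tfrac{1}{M}\sum_i U_i \cdot \widetilde{\rho} - (\Pi_A/D_A) \otimes \widetilde{\rho}^{R^n} \bigr\|_1 \leq \varepsilon$ with high probability over the random $U_i$. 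The probabilistic method then extracts a deterministic sequence $U_1,\ldots,U_M$ meeting the bound, and absorbing the typical-projection error into $\err$ gives the protocol at rate $I(A:R)_\rho + O(\delta)$.

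The main obstacle is the bookkeeping in the matrix-Bernstein step: to obtain the rate exactly $I(A:R)_\rho$, rather than a weaker $\log|A|$ or $H(A)_\rho$, one must simultaneously localize $A^n$ and $R^n$ onto their marginal typical subspaces so that the operator-norm estimate $\|\widetilde{\rho}^{A^n R^n}\|_\infty \leq 2^{-nH(AR)_\rho + o(n)}$ actually holds (using that the joint typical projector of $(\rho^{AR})^{\otimes n}$ sits inside $\Pi_A \otimes \Pi_R$ with exponentially small leakage). Once this is handled, combining the sharpened Haar-expectation bound from Theorem~\ref{theorem1} at $\alpha \in (1,2]$ with the Bernstein tail yields an error of the form $\err \leq \exp\{-n\, f(\alpha,\delta)\}$, matching the exponential-decay style of the other protocols treated earlier in this paper.
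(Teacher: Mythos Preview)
The paper does not prove this theorem; it is cited as a known result of Groisman, Popescu, and Winter, and only the achievability half is re-derived, with error exponents, in the \emph{subsequent} theorem. So the relevant comparison is between your sketch and that proof.

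Your proposal is essentially the original Groisman--Popescu--Winter argument (typical-subspace flattening followed by an operator-Chernoff/matrix-Bernstein derandomization), not the paper's route. Your invocation of Theorem~\ref{theorem1} with $\cT=\mathrm{id}$ is inert: as you yourself note, with the identity map one has $\Theta(\cT)=\log D_A$ and the resulting bound does not improve with $M$, so all the work in your argument is done by the concentration step, not by decoupling. The paper proceeds quite differently and never uses matrix concentration. It encodes the averaging over $M$ fixed unitaries into the cptp map $\cV_M(\sigma)=\frac{1}{M}\sum_i V_i\sigma V_i^\dagger$ on a compressed system $B$, applies Corollary~\ref{corollary1} to the class-$1$ map $\cV_M\circ\mapone$, and uses Lemma~\ref{yal12} to show $\Theta(\cV_M\circ\mapone)\le\log|B|-\log M$. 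Thus the $-\log M$ term appears directly in the exponent of the decoupling bound itself, and a single invocation of the decoupling theorem already gives an error that shrinks with $M$, expressed through $H_\alpha(A|R)_\rho$. A second invocation, with the map $\tr_B\circ\mapone$, handles the compression of $A^n$ into $B$ of size $\approx 2^{nH(A)}$ and replaces your typical-projector step; Lemma~\ref{yal11} then converts this into the existence of the Schumacher-type unitary $U_2$ needed to build the actual protocol unitaries $V_i^{A^n}U_2$.

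Both arguments establish achievability at rate $I(A:R)_\rho$; neither addresses the converse, for which one must return to the cited reference. Your approach is correct as a sketch of achievability, but it yields only typicality-style error bounds and does not use the paper's machinery; the paper's point is precisely to replace the operator-Chernoff step by the $\Theta$-calculation of Lemma~\ref{yal12} so as to obtain the sharper R\'enyi exponents advertised in the introduction.
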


We prove the following theorem.

\begin{theorem}
For any $n \in \mathbb{N}$, there exists a $(\rho,\err,n)$ protocol such that
for any $\delta > 0$, $\alpha \in (1,2]$ and $\ket{\Psi}^{ARE}$ a purification of $\rho^{AR}$,
\beq
\frac{\log M}{n} = H_{\widetilde{\alpha}}(A)_\rho - H_{\alpha}(A | R)_\rho +
(|E|+1)|R| \frac{\log(n+1)}{n} + \delta,
\enq
and the error approaches $0$ exponentially in $n$.
\end{theorem}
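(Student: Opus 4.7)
The plan is to apply Corollary \ref{corollary1} to two class-1 maps simultaneously, producing a single Unitary $U$ on $A^n$ that (i) achieves a Schumacher-type compression of $A^n$ into a subspace of dimension $\approx 2^{n H_{\widetilde{\alpha}}(A)_\rho}$ and (ii) further decouples a tensor factor of this compressed subspace from $R^n$. A uniform average over the generalized Pauli (Heisenberg--Weyl) group on the complementary tensor factor then implements the required randomization, with the number of Paulis equal to $M$.

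Let $\ket{\Psi}^{ARE}$ be a purification of $\rho^{AR}$. Introduce a full-rank partial isometry $W: A^n \to B$ with a chosen tensor factorization $B = \tilde{A}_1 \otimes \tilde{A}_2$, $|B| \leqslant |A|^n$, and set $M = |\tilde{A}_2|^2$. Apply Corollary \ref{corollary1} to
(i) the map $\tr_B \circ \mapone^{A^n \to B}$ with state $(\Psi^{ARE})^{\otimes n}$ and reference $R^n E^n$, and
(ii) the map $\tr_{\tilde{A}_2} \circ \mapone^{A^n \to B}$ with state $(\rho^{AR})^{\otimes n}$ and reference $R^n$.
Using $\Theta(\tr_B \circ \mapone) = -\log|B|$ (Lemma \ref{yal5}, as in Section \ref{schu-section}), a direct analogous computation giving $\Theta(\tr_{\tilde{A}_2} \circ \mapone) = \log(|\tilde{A}_1|/|\tilde{A}_2|)$, and the duality $H_\alpha(A|RE)_\Psi = -H_{\widetilde{\alpha}}(A)_\rho$, the two exponents from Corollary \ref{corollary1} become
\begin{align*}
& |R||E|\log(n+1) + nH_{\widetilde{\alpha}}(A)_\rho - \log|B|, \\
& |R|\log(n+1) - nH_\alpha(A|R)_\rho + \log(|\tilde{A}_1|/|\tilde{A}_2|).
\end{align*}
Requiring both to be negative subject to $\log|\tilde{A}_1| + \log|\tilde{A}_2| = \log|B|$ and minimizing $\log M = 2\log|\tilde{A}_2|$ recovers exactly the rate claimed in the theorem.

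Finally, convert these two decoupling bounds into the random-unitary protocol. The first (Schumacher-type) bound combined with Lemma \ref{yal11} produces a Unitary extension $U_S$ on $A^n$ such that $U_S \cdot (\rho^{AR})^{\otimes n} \cdot U_S^\dagger$ is approximately supported on the $|B|$-dimensional subspace $W^\dagger(B) \subset A^n$, which inherits the tensor factorization $\tilde{A}_1 \otimes \tilde{A}_2$. The $M$ random unitaries are taken to be $V_x = U_S^\dagger \tilde{S}_x U_S$, $x = 1, \ldots, M$, where each $\tilde{S}_x$ acts as the $x$-th Heisenberg--Weyl operator on $\tilde{A}_2$ inside $W^\dagger(B)$ and as the identity on the orthogonal complement. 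Averaging over $x$ enacts depolarization on $\tilde{A}_2$ within the subspace; the cross-terms between the subspace and its complement are controlled by the Schumacher error through a Cauchy--Schwarz estimate applied to the $(B,B^\perp)$ block decomposition of $U_S \cdot (\rho^{AR})^{\otimes n} \cdot U_S^\dagger$. Combined with the second decoupling bound, this delivers the required approximate product form for $(1/M)\sum_x V_x \, (\rho^{AR})^{\otimes n} \, V_x^\dagger$, with total error decaying exponentially in $n$. The main obstacle is the careful error propagation in this final step: extracting $U_S$ from the purification-side decoupling via Lemma \ref{yal11} while ensuring that the Heisenberg--Weyl depolarization---designed for the subspace $W^\dagger(B)$---behaves correctly on the actual (slightly leaking) state and can be faithfully combined with the conditional decoupling from (ii).
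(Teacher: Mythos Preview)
Your approach is essentially the paper's, specialized to a particular choice of the averaging unitaries. The paper also invokes Corollary~\ref{corollary1} twice: once with $\tr_B \circ \mapone$ on $(\Psi^{ARE})^{\otimes n}$ (your step (i)), and once with the averaging map $\cV_M \circ \mapone$ on $(\rho^{AR})^{\otimes n}$, where $\cV_M(\sigma)=M^{-1}\sum_i V_i^B \cdot \sigma$ for any $M\leqslant |B|^2$ unitaries satisfying $\tr (V_i^B)^\dagger V_j^B = |B|\,\delta_{ij}$. Your Heisenberg--Weyl choice $V_x^B = \eye^{\tilde A_1}\otimes S_x^{\tilde A_2}$ is exactly such a family, and with it $\cV_M = (\,\cdot\,\otimes \pi^{\tilde A_2})\circ \tr_{\tilde A_2}$, so your map (ii) and the paper's second map coincide; the bound $\Theta(\cV_M\circ\mapone)\leqslant \log|B|-\log M$ from Lemma~\ref{yal12} reduces to your $\log(|\tilde A_1|/|\tilde A_2|)$ via Lemma~\ref{yal5}.

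Where the two write-ups differ is the final lifting step, and here the paper's argument is cleaner than the Cauchy--Schwarz cross-term estimate you anticipate. The paper observes the algebraic identity $\tilde S_x W^\dagger = W^\dagger(\eye^{\tilde A_1}\otimes S_x)$ (in its notation $V_i^{A^n}W^\dagger = W^\dagger V_i^B$), valid exactly because $\tilde S_x$ is defined to act as the identity on the complement of $W^\dagger(B)$. Combined with the approximation $W^\dagger\cdot\mapone[U\cdot(\Psi^{ARE})^{\otimes n}]\approx U_S\cdot(\Psi^{ARE})^{\otimes n}$ coming from Lemma~\ref{yal11}, this lets you replace $U_S\cdot(\rho^{AR})^{\otimes n}$ by $W^\dagger\cdot\mapone[U\cdot(\rho^{AR})^{\otimes n}]$ inside each summand by a single triangle-inequality step (cost $\Xi(\varepsilon_n)$), after which the averaging and the second decoupling apply exactly on $B$. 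No block decomposition or control of ``leakage'' is needed. I recommend replacing your Cauchy--Schwarz sketch with this identity; it removes precisely the obstacle you flagged.
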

\begin{proof}
Consider a partial isometry $W^{A^n \to B}$, $|B| \leqslant |A^n|$.
For $M \leqslant |B|^2$, we can choose
$M$ Unitaries $V_i^B \in \bbU(B)$ such that
$\tr (V_i^B)^\dag V_j^B = |B| \delta_{i,j}$, and
let $\cV_M : B \to B$ be a cptp map given by
\beq
\cV_M(\sigma^B) \equiv \frac{1}{M} \sum_{i=1}^M V_i^B \cdot \sigma^B.
\enq
Then, from Corollary \ref{corollary1}, for any $\alpha \in (1,2]$, there exists a Unitary $U$ such that
\begin{multline}
\label{yae51}
\left\| \tr_B \circ \mapone \left[ U \cdot (\Psi^{ARE})^{\otimes n} \right] - (\Psi^{RE})^{\otimes n} \right\|_1
\\ \leqslant 8 \exp\Big\{ \frac{\alpha-1}{2\alpha} \big[ |R| |E| \log(n+1) +
n H_{\widetilde{\alpha}}(A)_{\rho} - \log|B| \big] \Big\} \equiv \varepsilon_n,
\end{multline}
and
\begin{multline}
\label{yae52}
\left\| \cV_M \circ \mapone \left[ U \cdot (\rho^{AR})^{\otimes n} \right] - \pi^B \otimes (\rho^R)^{\otimes n}
\right\|_1 \\
\leqslant 8 \exp\Big\{ \frac{\alpha-1}{2\alpha} \big[ |R| \log(n+1) -
n H_\alpha(A|R)_{\rho} - \log M + \log|B| \big] \Big\} \equiv \vartheta_n,
\end{multline}
where we have used $\Theta(\cV_M \circ \mapone) \leq \log |B| - \log M$ from Lemma \ref{yal12}.
From \eqref{yae51} and Lemma \ref{yal11}, we claim that there exists a Unitary $U_2$ over $A^n$
such that
\beq
\label{yae53}
\left\| W^\dag \cdot \mapone \left[ U \cdot (\Psi^{ARE})^{\otimes n} \right] -
U_2 \cdot (\Psi^{ARE})^{\otimes n} \right\|_1
\leqslant \Xi(\varepsilon_n).
\enq
Consider now the following Unitaries over $A^n$ constructed from $V_i^B$ as
$V_i^{A^n} = W^\dag \cdot V_i^B + (\eye^A - W^\dag W)$. Note that $V_i^{A^n} W^\dag = W^\dag V_i^B$.
We now claim that $V_i^{A^n} U_2$ are the $M$ Unitaries we need. We have
\begin{align}
\Bigg\| \frac{1}{M} \sum_{i=1}^M (V_i^{A^n} & U_2) \cdot (\rho^{AR})^{\otimes n}
- (W^\dag \cdot \pi^B) \otimes (\rho^R)^{\otimes n} \Bigg\|_1 \nonumber \\
& \leqslant \left\| \frac{1}{M} \sum_{i=1}^M (V_i^{A^n} U_2) \cdot (\rho^{AR})^{\otimes n}
- \frac{1}{M} \sum_{i=1}^M (V_i^{A^n} W^\dag) \cdot \mapone \left[ U \cdot (\rho^{AR})^{\otimes n} \right] \right\|_1
+ \nonumber \\
& \hspace{0.43in} \left\| \frac{1}{M} \sum_{i=1}^M (V_i^{A^n} W^\dag) \cdot \mapone
\left[ U \cdot (\rho^{AR})^{\otimes n} \right]
- (W^\dag \cdot \pi^B) \otimes (\rho^R)^{\otimes n} \right\|_1 \\
& \leqslant \frac{1}{M} \sum_{i=1}^M \Big\|  (V_i^{A^n} U_2) \cdot (\rho^{AR})^{\otimes n}
- (V_i^{A^n} W^\dag) \cdot \mapone \left[ U \cdot (\rho^{AR})^{\otimes n} \right] \Big\|_1 + \nonumber \\
& \hspace{0.43in}
\left\| \frac{1}{M} \sum_{i=1}^M (W^\dag V_i^B) \cdot \mapone \left[ U \cdot (\rho^{AR})^{\otimes n} \right]
- (W^\dag \cdot \pi^B) \otimes (\rho^R)^{\otimes n} \right\|_1 \\
& \leqslant \frac{1}{M} \sum_{i=1}^M \Big\|  U_2 \cdot (\rho^{AR})^{\otimes n}
- W^\dag \cdot \mapone \left[ U \cdot (\rho^{AR})^{\otimes n} \right] \Big\|_1
+ \nonumber \\
& \hspace{0.43in}
\left\| \frac{1}{M} \sum_{i=1}^M V_i^B \cdot \mapone \left[ U \cdot (\rho^{AR})^{\otimes n} \right]
- \pi^B \otimes (\rho^R)^{\otimes n} \right\|_1 \\
& \leqslant \Xi(\varepsilon_n) + \vartheta_n, 
\end{align}
where the first inequality follows from the triangle inequality, in the second inequality, the
first term follows from the convexity of the trace norm and the second term follows by invoking
$V_i^{A^n} W^\dag = W^\dag V_i^B$, in the third inequality, the first term follows by invoking the
Unitary invariance of the trace norm and the second term from monotonicity,
in the fourth inequality, the first term is upper bounded
using \eqref{yae53} and the second term is upper bounded using \eqref{yae52}. The claim now follows readily.
\end{proof}

\section{Conclusions}

In conclusion, we have provided a new version of the decoupling theorem that
gives an exponential bound on the average decoupling error with a \renyi $\alpha$-conditional
entropy in the exponent for a restricted class of completely positive maps
for any $\alpha \in (1,2]$ as opposed to only $\alpha = 2$ in Ref. \cite{dupuis-thesis}.
This key step allows us to make a connection with the random coding exponents, which we provide for
several important protocols including those at the top of the family tree of protocols. The importance of random
coding exponents for the achievability of information-processing tasks has been well known
since the seminal work by Gallager \cite{gallager-expo-1965}. Such an analysis, with
very few exceptions thus far, has been missing and we now fill
that void with this paper. The version of the decoupling theorem and other ideas developed in this paper may well find wider
applications with or without further extensions.

\appendix

\section{Computation of $\Theta$ for some cases}

\begin{lemma}
For a full-rank partial isometry $W^{A \to A_1 A_2}$, $|A_1| |A_2| \leqslant |A|$,
\label{yal5}
\begin{align}
\Theta(\tr_{A_2} \circ \mapone^{A \to A_1 A_2}) & \leqslant \log \frac{|A_1|}{|A_2|} \\
\Theta(\tr_{A_2} \circ \cC_W^{A \to A_1 A_2}) & \leqslant \log \frac{|A_1|}{|A_2|}.
\end{align}
\end{lemma}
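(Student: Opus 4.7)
The plan is to bound $\Theta$ directly from its variational definition. Since
$$\Theta(\cT)=\inf_{\sigma^{A_{1}}\in\densitymatrix(\cH_{A_{1}})}\log\tr\!\Bigl[\bigl(\omega_{\cT}^{A_{1}A'}\bigr)^{2}\bigl(\eye^{A'}\otimes(\sigma^{A_{1}})^{-1}\bigr)\Bigr],$$
plugging in the maximally mixed choice $\sigma^{A_{1}}=\pi^{A_{1}}$ gives the single-shot bound $\Theta(\cT)\leqslant\log\!\bigl[|A_{1}|\,\tr(\omega_{\cT}^{A_{1}A'})^{2}\bigr]$. This choice is natural because for both maps a short computation shows $\omega_{\cT}^{A_{1}}=\cT(\eye^{A})/|A|=\pi^{A_{1}}$, so $\pi^{A_{1}}$ is exactly the marginal of $\omega_\cT$. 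The target bound $\log(|A_{1}|/|A_{2}|)$ therefore reduces to showing $\tr(\omega_{\cT}^{A_{1}A'})^{2}\leqslant 1/|A_{2}|$, plus a small correction in the second case.

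For $\cT=\tr_{A_{2}}\circ \mapone$, I would decompose $W=\sum_{k_{1},k_{2}}|k_{1}k_{2}\rangle^{A_{1}A_{2}}\langle u_{k_{1}k_{2}}|^{A}$ for some orthonormal set $\{|u_{k_{1}k_{2}}\rangle\}$ in $A$, and compute $\omega_{\cT}^{A_{1}A'}=\tr_{A_{2}}\mapone(\Phi^{AA'})$ in components. Squaring and tracing, the index sums that appear collapse to Kronecker deltas via the partial-isometry identity $WW^{\dagger}=\eye^{A_{1}A_{2}}$ (equivalently, orthonormality of the $|u_{\mathbf{k}}\rangle$), yielding $\tr(\omega_{\cT}^{A_{1}A'})^{2}=1/|A_{2}|$ exactly. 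The first bound follows immediately.

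For $\cT=\tr_{A_{2}}\circ \cC_{W}$, the map splits as $\cC_{W}(\rho)=W\rho W^{\dagger}+\tr[Q\rho]\,\pi^{A_{1}A_{2}}$ with $Q\equiv\eye^{A}-W^{\dagger}W$, so that
$$\omega_{\cT}^{A_{1}A'}=\eta^{A_{1}A'}+\pi^{A_{1}}\otimes\frac{Q^{T}}{|A|},\qquad \eta\equiv\tr_{A_{2}}\!\bigl[(W\otimes\eye)\Phi^{AA'}(W^{\dagger}\otimes\eye)\bigr].$$
Expanding the square, the cross term $\tr\!\bigl[\eta\,(\pi^{A_{1}}\otimes Q^{T}/|A|)\bigr]$ vanishes: a standard Choi identity gives $\tr_{A_{1}}\eta=(W^{\dagger}W)^{T}/|A|$, and the orthogonality $(W^{\dagger}W)Q=0$ kills the remaining trace. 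Using the first-claim computation for $\tr\eta^{2}$ and $\tr Q=|A|-|A_{1}||A_{2}|$,
$$\tr\!\bigl(\omega_{\cT}^{A_{1}A'}\bigr)^{2}=\frac{|A_{1}|^{2}|A_{2}|}{|A|^{2}}+\frac{|A|-|A_{1}||A_{2}|}{|A_{1}|\,|A|^{2}}.$$
After multiplying by $|A_{1}|$, the desired bound $|A_{1}|\tr(\omega_{\cT}^{A_{1}A'})^{2}\leqslant |A_{1}|/|A_{2}|$ reduces, on clearing denominators, to the polynomial inequality $\bigl(|A|-|A_{1}||A_{2}|\bigr)\bigl[|A_{1}|\bigl(|A|+|A_{1}||A_{2}|\bigr)-|A_{2}|\bigr]\geqslant 0$, which is immediate from $|A|\geqslant|A_{1}||A_{2}|$ and $|A_{1}|,|A_{2}|\geqslant 1$. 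The only mildly delicate step is the cross-term cancellation; the key algebraic input is the orthogonality $(W^{\dagger}W)Q=0$ between the two components of $\cC_{W}$'s Kraus decomposition. Everything else is bookkeeping.
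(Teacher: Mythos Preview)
Your proposal is correct and follows essentially the same approach as the paper: both plug in the maximally mixed choice $\sigma^{A_1}=\pi^{A_1}$ into the variational definition of $\Theta$ and then compute $|A_1|\,\tr(\omega_\cT^{A_1A'})^2$ explicitly, arriving at the identical expression $\frac{|A_1|^3|A_2|+|A|-|A_1||A_2|}{|A|^2}$ in the $\cC_W$ case. The only cosmetic difference is that the paper carries out the computation entirely in an explicit basis adapted to $W$, whereas you organize the $\cC_W$ calculation via the structural decomposition $\omega=\eta+\pi^{A_1}\otimes Q^T/|A|$ and kill the cross term using $(W^\dagger W)Q=0$; the paper simply asserts the final inequality $\leqslant |A_1|/|A_2|$ without writing out the factorization you provide.
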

\begin{proof}
Since we have the freedom in choosing the local orthonormal bases in describing the
MES, hence, let them be such that $W \ket{i}^A = \ket{i}^{A_1 A_2}$ for $i \leqslant |A_1| |A_2|$,
and $W \ket{i}^A = 0$ for $i > |A_1| |A_2|$, where $\{ \ket{i}^A \}$ and $\{ \ket{i}^{A_1 A_2} \}$
are orthonormal states in their respective systems.
It now follows that
\begin{align}
\mapone(\ket{i} \bra{j}^A) & = \frac{|A|}{|A_1| |A_2|} \ket{i} \bra{j}^{A_1 A_2} \ind_{\{i,j \leqslant |A_1||A_2|\}} \\
\cC_W(\ket{i} \bra{j}^A) & = \ket{i} \bra{j}^{A_1 A_2} \ind_{\{i,j \leqslant |A_1||A_2|\}} +
\delta_{i,j} \pi^{A_1 A_2} \ind_{\{i,j > |A_1||A_2|\}}.
\end{align}
We now have
\begin{align}
\exp\{ \Theta(\tr_{A_2} \circ \cC_W) \} & \leqslant \frac{|A_1|}{|A|^2}
\sum_{i,j} \, \tr  \left[ \tr_{A_2} \circ \cC_W(\ket{i} \bra{j}^A) \right]
\left[ \tr_{A_2} \circ \cC_W(\ket{j} \bra{i}^A) \right] \\
& = \frac{|A_1|}{|A|^2} \, \tr \Bigg[ \sum_{i,j \leqslant |A_1| |A_2|} \tr_{A_2} ( \ket{i} \bra{j}^{A_1 A_2} )
\tr_{A_2} ( \ket{j} \bra{i}^{A_1 A_2} ) + \nonumber \\
& \hspace{1.0in} 
\sum_{i,j > |A_1| |A_2|} \delta_{i,j} \left( \tr_{A_2} \pi^{A_1 A_2} \right)^2 \Bigg] \\
& = \frac{|A_1|}{|A|^2} \left( |A_1|^2 |A_2| + \frac{|A| - |A_1| |A_2|}{|A_1|} \right)
\leqslant \frac{|A_1|}{|A_2|},
\end{align}
where the first inequality follows using \eqref{yae8}. Following the above, we arrive
at
\beq
\exp\{ \Theta(\tr_{A_2} \circ \mapone) \} \leqslant \frac{|A_1|}{|A|^2} \left[ \left( \frac{|A|}{|A_1| |A_2|} \right)^2
|A_1|^2 |A_2| \right] = \frac{|A_1|}{|A_2|}.
\enq
QED.
\end{proof}

\begin{lemma}
\label{yal8}
Let $\{M_i \in \mathrm{L}(BC, D),
i=1,...,J\}$, $J = \lceil \frac{BC}{D} \rceil$, be a complete set of measurement operators
($\sum_i M_i^\dagger M_i = \eye^{BC}$). Let $\zeta \equiv \frac{|B||C|}{|D|}$ and let
the first $\vartheta \equiv \lfloor \frac{BC}{D} \rfloor$ $M_i$'s be rank-$|D|$ partial isometries.
Define for any orthonormal basis $\{\ket{i}^{X}\}$, $i=1,...,J$,
\beq
\cE^{BC \to X D}(\sigma^{BC}) = \sum_{i=1}^J \ketbra{i}^{X} \otimes (M_i \cdot \sigma^{BC})
\enq
and let $W^{A \to B}$, $|B| \leqslant |A|$, be a full-rank partial isometry.
Then
\beq
\Theta(\cE \circ \mapone) \leqslant \log |D|.
\enq
\end{lemma}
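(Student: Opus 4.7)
The plan is to follow the pattern of Lemma~\ref{yal5} by evaluating the Choi--Jamio{\l}kowski state $\omega_{\cE\circ\mapone}$ against a carefully chosen reference state $\theta^{XD}$, using the identity $\exp\{\Theta(\cT)\}=\inf_{\theta^E}\tr[\omega_\cT^2(\eye^{A'}\otimes(\theta^E)^{-1})]$ that underlies the definition of $\Theta$. The new twist compared with Lemma~\ref{yal5} is that the naive choice $\pi^X\otimes\pi^D$ is \emph{not} tight enough here when $|D|$ fails to divide $|BC|$; the key will be to weight the $X$-branches by $\alpha_m:=\tr M_m M_m^\dag$.

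First, regarding $\mapone$ as extended trivially to $C$, I would record the Kraus operators of $\cE\circ\mapone:AC\to XD$,
\[
L_m \;=\; \sqrt{|A|/|B|}\,\bigl(|m\rangle^X\otimes M_m\bigr)(W\otimes \eye^C),\qquad m=1,\ldots,J.
\]
Because $W$ is a full-rank partial isometry with $|B|\leqslant|A|$, we have $WW^\dag=\eye^B$, so
\[
L_m L_{m'}^\dag \;=\; \tfrac{|A|}{|B|}\,|m\rangle\langle m'|^X\otimes M_m M_{m'}^\dag,\qquad \tr L_m^\dag L_{m'} \;=\; \tfrac{|A|}{|B|}\,\delta_{m,m'}\,\alpha_m.
\]
A short calculation analogous to Lemma~\ref{yal5} then gives
\[
\tr_{A'C'}\omega_{\cE\circ\mapone}^2 \;=\; \frac{1}{|AC|^2}\sum_{m,m'}(\tr L_m^\dag L_{m'})\,L_m L_{m'}^\dag.
\]

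Second, I would use the hypotheses on $\{M_m\}$ to note that $\alpha_m\leqslant|D|$ for every $m$ (with equality for $m\leqslant\vartheta$, since each such $M_m$ is a rank-$|D|$ partial isometry, so $M_m M_m^\dag=\eye^D$), and, crucially,
\[
\sum_{m=1}^J \alpha_m \;=\; \tr\sum_m M_m^\dag M_m \;=\; \tr\eye^{BC} \;=\; |BC|.
\]
This last identity makes
\[
\theta^{XD} \;:=\; \sum_{m=1}^J \frac{\alpha_m}{|BC|}\,|m\rangle\langle m|^X\otimes \pi^D
\]
a legitimate density matrix on $XD$. Substituting $\theta^{XD}$ into $\exp\{\Theta\}\leqslant\tr[\omega^2(\eye^{A'C'}\otimes(\theta^{XD})^{-1})]$ and combining with the expressions above, the double sum collapses to the diagonal and simplifies to
\[
\exp\{\Theta(\cE\circ\mapone)\} \;\leqslant\; \frac{|D|}{|BC|^2}\sum_{m=1}^J \frac{\alpha_m^2}{\alpha_m/|BC|} \;=\; \frac{|D|}{|BC|}\sum_{m=1}^J\alpha_m \;=\; |D|,
\]
which is the required bound.

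The main obstacle is thus the choice of $\theta^{XD}$: weighting $X$ proportionally to $\alpha_m$ (rather than uniformly) absorbs the non-uniformity between the full-rank branches $M_m$ for $m\leqslant\vartheta$ and the possibly rank-deficient branch $M_J$, and the identity $\sum_m\alpha_m=|BC|$ simultaneously validates $\theta^{XD}$ as a state and effects the final cancellation. Everything else is mechanical.
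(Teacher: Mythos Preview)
Your proof is correct and follows essentially the same route as the paper: both compute $\tr[\omega^2(\eye\otimes(\theta^{XD})^{-1})]$ with the \emph{same} reference state, since the paper's choice $p_x=(1-p)/\vartheta$ for $x\leqslant\vartheta$ and $p_J=p=1-\vartheta/\zeta$ is precisely $p_x=\alpha_x/|BC|$ once one notes $\alpha_x=|D|$ for $x\leqslant\vartheta$ and $\alpha_J=|BC|-\vartheta|D|$. Your Kraus-operator packaging and the identity $\tr_{A'C'}\omega^2=|AC|^{-2}\sum_{m,m'}(\tr L_m^\dag L_{m'})L_mL_{m'}^\dag$ streamline the paper's explicit basis expansion, but the substance is identical.
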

\begin{proof}
Let $W^{A \to B} = \sum_{i=1}^{|B|} \ket{i}^B \bra{i}^A$.
Once again, we exploit the freedom in choosing the local bases in defining MES and have
\beq
\ket{\Phi}^{AA^\prime C C^\prime} = \frac{1}{\sqrt{|A||C|}} \sum_{i_1,i_2} \ket{i_1}^A \ket{i_1}^{A^\prime}
\ket{i_2}^C \ket{i_2}^{C^\prime}.
\enq
Hence,
\beq
X_{i_1,j_1} \equiv \mapone (\ket{i_1} \bra{j_1}^A) = \frac{|A|}{|B|} \ket{i_1} \bra{j_1}^B \ind_{\{i_1,j_1 \leqslant |B|\}}.
\enq
We now have for $\theta^{XD} = \sum_x p_x \ketbra{x}^X \otimes \pi^D$, $\{p_x\}$ a probability
vector (whose choice is specified below),
\begin{align}
\exp & \{\Theta(\cE \circ \mapone)\} \nonumber \\
& \leqslant \frac{1}{|A|^2 |C|^2}
\sum_{i_1,j_1,i_2,j_2} \tr \left[ \cE ( X_{i_1,j_1} \otimes \ket{i_2} \bra{j_2}^C )
\cE ( X_{j_1,i_1} \otimes \ket{j_2} \bra{i_2}^C ) \right] (\theta^{XD})^{-1} \\
& = \frac{1}{|B|^2 |C|^2}
\sum_{i_1,j_1,i_2,j_2,x} \tr \Big[ \ketbra{x}^X \otimes
M_x ( \ket{i_1} \bra{j_1}^B \otimes \ket{i_2} \bra{j_2}^C ) M_x^\dag M_x
\nonumber \\
& \hspace{2in}  ( \ket{j_1} \bra{i_1}^B \otimes \ket{j_2} \bra{i_2}^C ) M_x^\dag \Big] (\theta^{XD})^{-1} \\
& = \frac{1}{|B|^2 |C|^2}
\sum_{i_1, i_2,x} (\tr M_x M_x^\dag) \tr \Big[ \ketbra{x}^X \otimes
M_x ( \ket{i_1} \bra{i_1}^B \otimes \ket{i_2} \bra{i_2}^C ) M_x^\dag
\Big] (\theta^{XD})^{-1} \\
& = \frac{1}{|B|^2 |C|^2}
\sum_x (\tr M_x M_x^\dag)^2 \frac{|D|}{p_x}.
\end{align}
Let $p = 1- \vartheta/\zeta$,
$p_x = (1-p)/\vartheta$ for $x = 1,...,\vartheta$,
and if $|D|$ doesn't divide $|B||C|$, then there is an additional entry
$p_x = p$ if $x = \vartheta + 1$. Continuing from above, we now have
\begin{align}
\exp \{\Theta(\cE \circ \mapone)\}
& \leqslant \frac{1}{|B|^2 |C|^2} \left[ \vartheta  |D|^2 \frac{|D|}{\frac{1-p}{\vartheta}} +
(|B| |C| - |D| \vartheta)^2 \frac{|D|}{p} \right] \\
& = |D| \left[ \frac{\vartheta^2}{\zeta^2 (1-p) }
+ \left( 1 - \frac{\vartheta}{\zeta} \right)^2 \frac{1}{p} \right] \\
& = |D| \left[ \frac{\vartheta}{\zeta} + 1 - \frac{ \vartheta }{\zeta} \right] = |D|.
\end{align}
QED.
\end{proof}

\begin{lemma}
\label{yal12}
For $M \in \mathbb{N}$, $M \leqslant |B|^2$,
$M$ Unitaries $V_i^B \in \bbU(B)$ such that
$\tr (V_i^B)^\dag V_j^B = |B| \delta_{i,j}$,
let $\cV_M : B \to B$ be a cptp map given by
\beq
\cV_M(\sigma^B) \equiv \frac{1}{M} \sum_{i=1}^M V_i^B \cdot \sigma^B.
\enq
Then
\beq
\Theta(\cV_M \circ \mapone) \leq \log |B| - \log M.
\enq
\end{lemma}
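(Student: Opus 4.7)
The plan is to imitate the computations in Lemmas \ref{yal5} and \ref{yal8}, exploiting the freedom to choose local bases for the maximally entangled state $\ket{\Phi}^{A A^\prime}$ together with the unitary orthogonality relation $\tr (V_i^B)^\dag V_j^B = |B| \delta_{i,j}$.

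First I would pick orthonormal bases $\{\ket{i}^A\}$, $\{\ket{i}^B\}$ such that $W\ket{i}^A = \ket{i}^B$ for $i \leqslant |B|$ and $W\ket{i}^A = 0$ otherwise. This gives the explicit formula $\mapone(\ket{i}\bra{j}^A) = \frac{|A|}{|B|} \ket{i}\bra{j}^B \, \ind_{\{i,j \leqslant |B|\}}$, so the Choi state of $\cV_M \circ \mapone$ takes the clean form
\begin{equation}
\omega^{B A^\prime}_{\cV_M \circ \mapone} = \frac{1}{|B|} \sum_{i,j \leqslant |B|} \cV_M(\ket{i}\bra{j}^B) \otimes \ket{i}\bra{j}^{A^\prime}.
\end{equation}

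Next I would compute $\tr_{A^\prime}(\omega^{BA^\prime})^2$. Squaring and tracing out $A^\prime$ kills the cross indices and yields $\tr_{A^\prime}\omega^2 = \frac{1}{|B|^2} \sum_{i,j \leqslant |B|} \cV_M(\ket{i}\bra{j}^B) \cV_M(\ket{j}\bra{i}^B)$. Using the definition of $\Theta$ from \eqref{yae8}, it suffices to exhibit a $\theta^B \in \densitymatrix(\cH_B)$ for which $\tr[(\theta^B)^{-1} \tr_{A^\prime}\omega^2] \leqslant |B|/M$. I will take $\theta^B = \pi^B$, so that $(\theta^B)^{-1} = |B| \eye^B$.

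Expanding $\cV_M(\rho) = \frac{1}{M} \sum_x V_x^B \rho (V_x^B)^\dag$ and substituting, a short manipulation gives
\begin{equation}
\tr\bigl[(\pi^B)^{-1} \tr_{A^\prime}\omega^2\bigr] = \frac{1}{|B| M^2} \sum_{x,y} \tr\!\bigl[(V_y^B)^\dag V_x^B\bigr] \, \tr\!\bigl[(V_x^B)^\dag V_y^B\bigr],
\end{equation}
after summing out the basis indices $i,j$. The orthogonality $\tr (V_x^B)^\dag V_y^B = |B| \delta_{x,y}$ collapses the double sum to $M \cdot |B|^2$, yielding $|B|/M$. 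Taking logarithms gives $\Theta(\cV_M \circ \mapone) \leqslant \log|B| - \log M$.

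The main bookkeeping obstacle is making sure the reduction to the $|B|$-dimensional block is done carefully: because $W$ has a non-trivial kernel inside $A$, the Choi state of $\mapone$ is effectively supported on a $|B|^2$-dimensional subspace, and one must verify that the indices $i,j$ in the double sum really do range only over $\{1,\dots,|B|\}$ so that the subsequent application of $V_x^B$ acts as a genuine unitary there. Once this is in place, the orthogonality relation on the $V_x^B$ does all the work and the choice $\theta^B = \pi^B$ (which is optimal among $\densitymatrix(\cH_B)$ by Cauchy-Schwarz for $\tr(\theta^B)^{-1}$) is the natural one.
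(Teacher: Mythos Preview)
Your proposal is correct and follows essentially the same approach as the paper's own proof: the same choice of bases for $W$, the same explicit Choi state, the same choice $\theta^B = \pi^B$, and the same collapse of the double sum via $\tr(V_x^B)^\dag V_y^B = |B|\delta_{x,y}$. The only difference is cosmetic---the paper writes everything with the $|A|$-indexed sums first and then reduces, whereas you pass to the $|B|$-dimensional block immediately.
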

\begin{proof}
Let $W \ket{i}^A = \ket{i}^B$ for $i \leqslant |B|$,
and $W \ket{i}^A = 0$ for $i > |B|$, where $\{ \ket{i}^A \}$ and $\{ \ket{i}^B \}$
are orthonormal states in their respective systems.
Using $\mapone(\ket{i} \bra{j}^A) = \frac{|A|}{|B|} \ket{i} \bra{j}^{B} \ind_{\{i,j \leqslant |B|\}}$, we have
\begin{align}
\exp\{ \Theta(\cV_M \circ \mapone) \} & \leqslant \frac{|B|}{|A|^2}
\sum_{i,j} \, \tr  \left[ \cV \circ \mapone(\ket{i} \bra{j}^A) \right]
\left[ \cV \circ \mapone(\ket{j} \bra{i}^A) \right] \\
& = \frac{1}{|B|} \, \tr \left[ \sum_{i,j \leqslant |B|} \cV ( \ket{i} \bra{j}^{B} ) \,
\cV ( \ket{j} \bra{i}^{B} ) \right] \\
& = \frac{1}{|B| M^2} \, \tr \left[ \sum_{i,j \leqslant |B|} \sum_{k,l=1}^M V_k \ket{i} \bra{j}^{B} V_k^\dag
V_l \ket{j} \bra{i}^{B} V_l^\dag \right] \\
& = \frac{1}{|B| M^2} \, \sum_{k,l=1}^M \left| \tr V_l^\dag V_k \right|^2
= \frac{1}{|B| M^2} \, \sum_{k,l=1}^M |B|^2 \delta_{k,l} = \frac{|B|}{M},
\end{align}
where the first inequality follows using \eqref{yae8} and the fourth equality follows
since $\tr V_l^\dag V_k = |B| \delta_{k,l}$. QED.
\end{proof}

\section{Lemmata}

\begin{lemma}
\label{yal1}
Let $\cT$ be a completely positive map. Then for any inputs $\sigma$, $\theta$
(not necessarily Hermitian), there exists a contraction $K$ such that
\beq
\cT(\sigma \theta^\dagger) \cT(\theta \sigma^\dagger) =
\sqrt{\cT(\sigma \sigma^\dagger) } K \cT(\theta \theta^\dagger) K^\dagger
\sqrt{\cT(\sigma \sigma^\dagger) }.
\enq
In particular, if $\theta = \eye$ and $\cT(\eye)$ is a scaled identity,
i.e., commutes with all matrices, then
\beq
\cT(\sigma) \cT(\sigma^\dagger) \leqslant \cT(\sigma \sigma^\dagger) \cT(\eye).
\enq
An example of such a $\cT$ is the partial trace.
\end{lemma}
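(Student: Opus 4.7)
The plan is to reduce the statement to a purely linear-algebraic fact about block matrices by using a Kraus decomposition of $\cT$. Write $\cT(X) = \sum_k A_k X A_k^\dagger$ and assemble the row-block matrices
$$ P \equiv (A_1 \sigma, A_2 \sigma, \ldots), \qquad Q \equiv (A_1 \theta, A_2 \theta, \ldots). $$
Then $\cT(\sigma\sigma^\dagger) = P P^\dagger$, $\cT(\theta\theta^\dagger) = Q Q^\dagger$, and crucially $\cT(\sigma \theta^\dagger) = P Q^\dagger$. So the claim is now a statement about arbitrary rectangular matrices $P$ and $Q$: exhibit a contraction $K$ with $P Q^\dagger = \sqrt{PP^\dagger}\, K\, \sqrt{QQ^\dagger}$, since squaring then gives the desired identity.

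The natural candidate is
$$ K \equiv \bigl(\sqrt{PP^\dagger}\bigr)^{+} P Q^\dagger \bigl(\sqrt{QQ^\dagger}\bigr)^{+}, $$
where $(\cdot)^+$ denotes the Moore--Penrose pseudoinverse. I would first verify $\sqrt{PP^\dagger} K \sqrt{QQ^\dagger} = PQ^\dagger$ by noting that $\bigl(\sqrt{PP^\dagger}\bigr)^{+} \sqrt{PP^\dagger}$ is the projector $\Pi_P$ onto the range of $P$, which satisfies $\Pi_P P = P$, and similarly on the right. This reproduces $P Q^\dagger$.

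The main obstacle (and the only non-routine step) is showing $\|K\| \leq 1$. Here I would use that the operator $M \equiv Q^\dagger (QQ^\dagger)^+ Q$ is an orthogonal projector: indeed $M$ is Hermitian, and a direct computation using $Q (QQ^\dagger)^+ QQ^\dagger = Q$ yields $M^2 = M$. Consequently $M \leq \eye$, so
$$ K K^\dagger = \bigl(\sqrt{PP^\dagger}\bigr)^{+} P M P^\dagger \bigl(\sqrt{PP^\dagger}\bigr)^{+} \leq \bigl(\sqrt{PP^\dagger}\bigr)^{+} PP^\dagger \bigl(\sqrt{PP^\dagger}\bigr)^{+} = \Pi_P \leq \eye. $$
Multiplying out $(PQ^\dagger)(QP^\dagger) = \sqrt{PP^\dagger} K \sqrt{QQ^\dagger} \sqrt{QQ^\dagger} K^\dagger \sqrt{PP^\dagger}$ and absorbing $\sqrt{QQ^\dagger}\sqrt{QQ^\dagger} = QQ^\dagger = \cT(\theta\theta^\dagger)$ gives the first displayed identity.

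For the special case, set $\theta = \eye$. Then the identity reads $\cT(\sigma)\cT(\sigma^\dagger) = \sqrt{\cT(\sigma\sigma^\dagger)}\, K\, \cT(\eye)\, K^\dagger \sqrt{\cT(\sigma\sigma^\dagger)}$. If $\cT(\eye)$ is a scaled identity, say $\cT(\eye) = c\,\eye$, it commutes through everything, so the right-hand side equals $c \sqrt{\cT(\sigma\sigma^\dagger)}\, KK^\dagger\, \sqrt{\cT(\sigma\sigma^\dagger)} \leq c\, \cT(\sigma\sigma^\dagger) = \cT(\sigma\sigma^\dagger)\, \cT(\eye)$, using $KK^\dagger \leq \eye$ from above. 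The partial trace obviously satisfies the hypothesis since $\tr_B(\eye^{AB}) = |B|\,\eye^A$.
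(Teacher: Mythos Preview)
Your argument is correct. It differs from the paper's route, though the two are closely related.

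The paper observes that complete positivity implies $2$-positivity, so applying $\cI_2 \otimes \cT$ to the rank-one operator $(\ket{0}\otimes\theta + \ket{1}\otimes\sigma)(\ket{0}\otimes\theta + \ket{1}\otimes\sigma)^\dagger$ gives a positive semidefinite $2\times 2$ block matrix with blocks $\cT(\theta\theta^\dagger)$, $\cT(\theta\sigma^\dagger)$, $\cT(\sigma\theta^\dagger)$, $\cT(\sigma\sigma^\dagger)$. It then invokes Bhatia's Theorem~IX.5.9 (for any PSD block matrix $\bigl(\begin{smallmatrix}A & B\\ B^\dagger & C\end{smallmatrix}\bigr)$ there is a contraction $K$ with $B^\dagger = \sqrt{C}\,K\,\sqrt{A}$) to conclude directly.

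You instead use a Kraus decomposition to factor the same block matrix explicitly as $\bigl(\begin{smallmatrix}P\\Q\end{smallmatrix}\bigr)\bigl(P^\dagger\; Q^\dagger\bigr)$, and then prove Bhatia's theorem by hand in this realization via the pseudoinverse construction of $K$ and the projector identity $Q^\dagger(QQ^\dagger)^+Q \leq \eye$. Your route is more self-contained (no external citation needed) and makes the contraction $K$ explicit; the paper's is shorter and highlights that only $2$-positivity, not full complete positivity, is used. Either way the special case and the partial-trace example fall out identically.
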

\begin{proof}
Since $\cT$ is completely positive, it is also $2$-positive. Hence, if $\cI_2$ is the
identity super-operator for $2 \times 2$ matrices, then for orthonormal $\ket{0}, \ket{1}$,
we have
\begin{align}
0 & \leqslant (\cI_2 \otimes \cT) \left[ (\ket{0} \otimes \theta + \ket{1} \otimes \sigma)
(\ket{0} \otimes \theta + \ket{1} \otimes \sigma)^\dagger \right] \\
& = \ket{0} \bra{0} \otimes \cT(\theta \theta^\dagger) + \ket{1} \bra{0} \otimes
\cT(\sigma \theta^\dagger) + \ket{0} \bra{1} \otimes \cT(\theta \sigma^\dagger) + \ket{1} \bra{1} \otimes \cT(\sigma \sigma^\dagger).
\end{align}
We now invoke Theorem IX.5.9 in Ref. \cite{bhatia-1997} to claim that there exists a
contraction $K$ such that
\beq
\cT(\sigma \theta^\dagger) = \sqrt{\cT(\sigma \sigma^\dagger)} K \sqrt{\cT(\theta \theta^\dagger)}.
\enq
The claim and the particular case now follow easily.
\end{proof}

\begin{lemma}
\label{yal6}
Let $\cT^{A \to E}$ be any completely positive map such that $\tr \, \cT(\eye^A) = |A|$. Then
$\cT^{A \to E}$ is a class-$1$ map. For any cptp map $\cE^{E \to C}$,
$\cE^{E \to C} \circ \cT^{A \to E}$ is also a class-$1$ map.
\end{lemma}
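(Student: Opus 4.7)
The plan is to handle the two assertions separately, with the first one doing the heavy lifting.

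The second assertion is essentially monotonicity. For any cptp map $\cE$ one has $\|\cE(X)\|_1 \leqslant \|X\|_1$ for every matrix $X$ (Hermitian or not): by trace-norm/operator-norm duality this is equivalent to the contractivity of the unital CP adjoint $\cE^*$ in the operator norm, which is a consequence of the Russo--Dye theorem. Composing with the class-$1$ inequality for $\cT$ then gives
\[
\Exp_U \|\cE \circ \cT(U \cdot \sigma)\|_1 \;\leqslant\; \Exp_U \|\cT(U \cdot \sigma)\|_1 \;\leqslant\; \|\sigma\|_1.
\]

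For the first assertion, the plan is to expand both the map and the input into rank-one pieces and exploit that $\bbU(A)$, being a $2$-design, is in particular a $1$-design. Fix a Kraus representation $\cT(X) = \sum_k A_k X A_k^\dagger$, so that the hypothesis reads
\[
\sum_k \tr(A_k^\dagger A_k) \;=\; \tr \cT^*(\eye^E) \;=\; \tr \cT(\eye^A) \;=\; |A|,
\]
and take a singular-value decomposition $\sigma = \sum_i s_i |a_i\rangle\langle b_i|$ with orthonormal $\{|a_i\rangle\}$ and $\{|b_i\rangle\}$, so that $\|\sigma\|_1 = \sum_i s_i$. Substituting both decompositions,
\[
\cT(U \sigma U^\dagger) \;=\; \sum_{i,k} s_i \, (A_k U |a_i\rangle)(A_k U |b_i\rangle)^\dagger
\]
is a sum of rank-one matrices, and the trace norm of each summand is simply $\|A_k U|a_i\rangle\| \cdot \|A_k U|b_i\rangle\|$.

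The triangle inequality, followed by Cauchy--Schwarz applied to the expectation over $U$, then yields
\[
\Exp_U \|\cT(U \sigma U^\dagger)\|_1 \;\leqslant\; \sum_{i,k} s_i \sqrt{\Exp_U \|A_k U|a_i\rangle\|^2 \cdot \Exp_U \|A_k U|b_i\rangle\|^2}.
\]
The $1$-design property $\Exp_U U |v\rangle\langle v| U^\dagger = \pi^A$ makes each $\Exp_U \|A_k U|v\rangle\|^2$ equal to $\tr(A_k^\dagger A_k)/|A|$ for any unit $|v\rangle$, so the two square roots multiply to $\tr(A_k^\dagger A_k)/|A|$; summing over $k$ collapses to $1$ via the hypothesis, and summing over $i$ produces $\sum_i s_i = \|\sigma\|_1$. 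The main technical obstacle I anticipate is handling non-Hermitian $\sigma$: a naive real/imaginary Hermitian split would lose a factor of two, and what rescues the tight constant is routing through the SVD so that the two singular vectors $|a_i\rangle$ and $|b_i\rangle$ can be paired by Cauchy--Schwarz before the unitary expectation is taken.
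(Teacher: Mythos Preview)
Your proof is correct but follows a different route from the paper's. The paper embeds the random unitary averaging into a single cptp map: writing the expectation as a uniform average over the design $\{U_j\}$, one observes
\[
\Exp_U \|\cT(U \cdot \sigma)\|_1 \;=\; \Big\| \tfrac{1}{|\bbU(A)|}\sum_j \ketbra{j}^B \otimes \cT(U_j \cdot \sigma) \Big\|_1 \;=\; \|\cF(\sigma)\|_1,
\]
where $\cF^{A \to BE}$ has Kraus operators proportional to $\ket{j}^B \otimes E_i U_j$; the hypothesis $\tr\,\cT(\eye^A)=|A|$ together with the $1$-design identity makes $\cF$ trace-preserving, and trace-norm contractivity of cptp maps finishes in one stroke. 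For the second assertion the paper simply notes $\tr\,\cE\circ\cT(\eye^A)=\tr\,\cT(\eye^A)=|A|$ and reapplies the first part, rather than invoking contractivity of $\cE$ separately as you do.

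Your approach is more hands-on: by decomposing both $\cT$ (Kraus) and $\sigma$ (SVD) into rank-one pieces you avoid constructing the auxiliary map $\cF$, and the Cauchy--Schwarz pairing of the two singular vectors is a nice way to keep the constant tight for non-Hermitian $\sigma$. The paper's route is slicker---one monotonicity replaces triangle inequality plus Cauchy--Schwarz---and gets non-Hermitian inputs for free since $\|\cF(\sigma)\|_1\leqslant\|\sigma\|_1$ holds for all $\sigma\in\LL(\cH_A)$. At the core both arguments rest on the same $1$-design computation $\Exp_U U X U^\dagger = (\tr X)\,\pi^A$.
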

\begin{proof}
Let the Kraus operators of $\cT$ be given by $\{ E_i \}$. We have for a random Unitary $U$ over $A$
and any $\sigma \in \LL(\cH_A)$,
\begin{align}
\Exp_U \| \cT(U \cdot \sigma) \|_1 & = \frac{1}{|A|^2} \sum_j \| \cT(U_j \cdot \sigma) \|_1 \\
& = \Big\| \frac{1}{|A|^2} \sum_{i,j} (\ket{j}^B \otimes E_i U_j) \cdot \sigma \Big\|_1 \\
& = \| \cF(\sigma) \|_1 \\
& \leqslant \| \sigma \|_1,
\end{align}
where in the second equality, $\{ \ket{j}^B \}$ is an orthonormal basis in $B$,
$\cF^{A \to BE}$ is a cptp map with Kraus operators $\{ \frac{1}{|A|} (\ket{j}^B \otimes E_i U_j) \}$,
and the last inequality is well known. The second statement of the claim follows simply
by noting that $\tr \, \cE \circ \cT(\eye^A) = \tr \, \cT(\eye^A) = |A|$. QED.
\end{proof}

\begin{lemma}
\label{yal2}
For any matrices $\sigma^{AR}$, $X^A$, $W^R$ (not necessarily Hermitian) and
for $U$ acting on $A$, we have
\begin{multline}
\Exp_U \Big\{ U \sigma^{AR} U^\dagger (X^{A} \otimes W^R) U (\sigma^{AR})^\dagger
U^\dagger \Big\} \\
= \frac{X^A \otimes \left( |A| \Lambda^R - \Upsilon^R \right) + (\tr X^A) \eye^A \otimes
\left( |A| \Upsilon^R - \Lambda^R \right) }{|A|(|A|^2-1)},
\end{multline}
where
$\Lambda^R \equiv \sigma^R W^R (\sigma^R)^\dagger$ and
$\Upsilon^R \equiv \tr_A \left[ \sigma^{AR} (\eye^A \otimes W^R) (\sigma^{AR})^\dagger \right]$.
\end{lemma}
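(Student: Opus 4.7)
}

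The plan is to reduce the problem to the standard second-moment integral for a unitary $2$-design on $\cH_A$ and then identify the resulting $R$-operators with $\Lambda^R$ and $\Upsilon^R$. First, I would expand $\sigma^{AR}$ in a product form $\sigma^{AR} = \sum_\mu S_\mu^A \otimes T_\mu^R$ (any operator-basis expansion on $R$ will do). Substituting this into the left-hand side and pulling the finite sum outside the expectation, the entire problem decouples as
\begin{equation*}
\Exp_U \Big\{ U \sigma^{AR} U^\dagger (X^A \otimes W^R) U (\sigma^{AR})^\dagger U^\dagger \Big\}
= \sum_{\mu,\nu} \Exp_U \! \big[ U S_\mu U^\dagger X^A U S_\nu^\dagger U^\dagger \big] \otimes T_\mu^R W^R (T_\nu^R)^\dagger,
\end{equation*}
so that the only nontrivial integration happens on the $A$ factor.

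Next, I would evaluate the $A$-side expectation using the fact that $\bbU(A)$ is a unitary $2$-design. By Schur--Weyl duality, $\Exp_U (U \otimes U)(S_\mu \otimes S_\nu^\dagger)(U^\dagger \otimes U^\dagger)$ on $\cH_A \otimes \cH_A$ lies in the span of $\eye^{A^2}$ and the swap $F^A$, and matching the two scalar traces $\tr$ and $\tr(F^A \,\cdot\,)$ fixes the coefficients. Using the standard identity $\tr_2\big[(P\otimes Q) F^A\big] = PQ$, one can rewrite
\begin{equation*}
U S_\mu U^\dagger X^A U S_\nu^\dagger U^\dagger
= \tr_{A'}\!\Big[\big( U S_\mu U^\dagger \otimes U S_\nu^\dagger U^\dagger \big) (X^A \otimes \eye^{A'}) F^{AA'}\Big],
\end{equation*}
so that taking the expectation and performing the partial trace yields
\begin{equation*}
\Exp_U \big[ U S_\mu U^\dagger X^A U S_\nu^\dagger U^\dagger \big]
= a_{\mu\nu}\, X^A + b_{\mu\nu}\, (\tr X^A)\, \eye^A,
\end{equation*}
with
\begin{equation*}
a_{\mu\nu} = \frac{|A|\,\tr(S_\mu)\tr(S_\nu^\dagger) - \tr(S_\mu S_\nu^\dagger)}{|A|(|A|^2-1)},
\qquad
b_{\mu\nu} = \frac{|A|\,\tr(S_\mu S_\nu^\dagger) - \tr(S_\mu)\tr(S_\nu^\dagger)}{|A|(|A|^2-1)}.
\end{equation*}

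Finally, I would sum the formula over $\mu,\nu$ and identify the two resulting $R$-operators with the quantities in the statement. Since $\sum_\mu \tr(S_\mu) T_\mu^R = \tr_A \sigma^{AR} = \sigma^R$, the ``product-of-traces'' part collapses to
\[
\sum_{\mu,\nu} \tr(S_\mu)\tr(S_\nu^\dagger)\, T_\mu W^R T_\nu^\dagger = \sigma^R W^R (\sigma^R)^\dagger = \Lambda^R,
\]
while the ``joint-trace'' part is
\[
\sum_{\mu,\nu} \tr(S_\mu S_\nu^\dagger)\, T_\mu W^R T_\nu^\dagger
= \tr_A\!\Big[ \sigma^{AR}(\eye^A \otimes W^R)(\sigma^{AR})^\dagger \Big] = \Upsilon^R,
\]
directly from $\sigma^{AR}(\eye^A\otimes W^R)(\sigma^{AR})^\dagger = \sum_{\mu,\nu} S_\mu S_\nu^\dagger \otimes T_\mu W^R T_\nu^\dagger$ followed by $\tr_A$. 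Assembling the pieces yields precisely the claimed expression, with $X^A$ picking up $|A|\Lambda^R - \Upsilon^R$ and $(\tr X^A)\eye^A$ picking up $|A|\Upsilon^R - \Lambda^R$, both divided by $|A|(|A|^2-1)$.

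The argument has no conceptual obstacle: it is an application of the second-moment/Schur--Weyl formula for a $2$-design, and the main work is bookkeeping. The one place where care is needed is the cross-identification step, ensuring that the ``linear in $X^A$'' term pairs with the $\Lambda^R$-dominated combination and the ``trace of $X^A$'' term with the $\Upsilon^R$-dominated combination, with the correct signs and the single prefactor $[|A|(|A|^2-1)]^{-1}$; this is what the 2-design formula delivers once the swap-trick identity $\tr_2[(P\otimes Q)F^A] = PQ$ is applied in the right order.
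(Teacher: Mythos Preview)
Your proof is correct and follows essentially the same strategy as the paper: reduce by linearity to a purely $A$-side second-moment integral, invoke the unitary $2$-design formula $\Exp_U(U\otimes U)M(U^\dagger\otimes U^\dagger)\in\mathrm{span}\{\eye,F\}$, and then re-sum. The paper implements this by first evaluating the integral on rank-one vectors and then expanding $\sigma^{AR}$ and $X^A$ via singular-value/Schmidt decompositions, whereas you keep $X^A$ intact via the swap identity and use a generic operator-basis expansion $\sigma^{AR}=\sum_\mu S_\mu\otimes T_\mu$; this streamlines the bookkeeping (no indices for $X^A$, and the identifications with $\Lambda^R,\Upsilon^R$ are immediate), but the mathematical content is the same.
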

\begin{proof}
Consider first vectors $\{ \ket{\varphi_i} \}$, $i \in 1,...,6$, in $\cH_A$ and we have
\begin{align}
\Exp_U \Big\{ U & \ket{\varphi_1}\bra{\varphi_2} U^\dagger \ket{\varphi_3}\bra{\varphi_4} U \ket{\varphi_5}
\bra{\varphi_6} U^\dagger \Big\} \\
& = (\eye \otimes \bra{\varphi_4}) \Exp_U \left\{ (U \otimes U) ( \ket{\varphi_1} \ket{\varphi_5} )
( \bra{\varphi_6} \bra{\varphi_2} ) ( U^\dagger \otimes U^\dagger ) \right\}
( \eye \otimes \ket{\varphi_3} ) \\
& = (\eye \otimes \bra{\varphi_4}) \left( \frac{q_1 |A| - q_2}{|A|(|A|^2-1)}
\eye^{A A^\prime} + \frac{q_2 |A| - q_1}{|A| (|A|^2 - 1)} F^{A A^\prime} \right)
( \eye \otimes \ket{\varphi_3} ) \\
& = \frac{q_1 |A| - q_2}{|A|(|A|^2-1)} \bracket{\varphi_4}{\varphi_3} \eye^A +
\frac{q_2 |A| - q_1}{|A| (|A|^2 - 1)} \ket{\varphi_3} \bra{\varphi_4},
\end{align}
where the integral in the second equality is well known
(see Lemma 3.4 in Ref. \cite{dupuis-thesis}),
$q_1 = \bracket{\varphi_6}{\varphi_1} \bracket{\varphi_2}{\varphi_5}$,
$q_2 = \bracket{\varphi_2}{\varphi_1} \bracket{\varphi_6}{\varphi_5}$, and
$F^{AA^\prime}$ is the swap operator.
We have by singular value decomposition:
\begin{align}
X^A = \sum_i \eta_i \ket{y_i} \bra{z_i}^A.
\end{align}
We also have by the singular value and Schmidt decompositions:
\begin{align}
\sigma^{AR} = \sum_{i,j,k} \sqrt{\beta_i^2 \lambda_{i,j} \mu_{i,k}} \ket{v_{ij}} \bra{w_{ik}}^A
\otimes \ket{v_{ij}} \bra{w_{ik}}^R.
\end{align}
Let $i_1^2 = (i_1,i_2)$, $i_1^3 = (i_1,...,i_3)$, $j_1^2 = (j_1,j_2)$ and $k_1^2 = (k_1,k_2)$.
We now have
\begin{align}
\Exp_U & \left\{ U \sigma^{AR} U^\dagger (X^A \otimes W^R) U (\sigma^{AR})^\dagger
U^\dagger \right\} \nonumber \\
& = \sum_{i_1^3,j_1^2,k_1^2} f_1(i_1^3,j_1^2,k_1^2)
\Exp_U \Big\{ U (\ket{v_{i_1 j_1}} \bra{w_{i_1 k_1}}^A
\otimes \ket{v_{i_1 j_1}} \bra{w_{i_1 k_1}}^R) U^\dagger
(\ket{y_{i_3}} \bra{z_{i_3}}^A \otimes W^R) \nonumber \\
& \hspace{2.5in} U (\ket{w_{i_2 j_2}} \bra{v_{i_2 k_2}}^A \otimes
\ket{w_{i_2 j_2}} \bra{v_{i_2 k_2}}^R) U^\dagger \Big\} \\
& = \sum_{i_1^3,j_1^2,k_1^2} f_1(i_1^3,j_1^2,k_1^2) \brackett{w_{i_1 k_1}}{W}{w_{i_2 j_2}}^R
\times \nonumber \\
& \hspace{0.5in} \Exp_U \left\{ U \ket{v_{i_1 j_1}}^A \bra{w_{i_1 k_1}} U^\dagger
\ket{y_{i_3}}^A \bra{z_{i_3}} U \ket{w_{i_2 j_2}}^A \bra{v_{i_2 k_2}}^A U^\dagger \right\}
\otimes \ket{v_{i_1 j_1}} \bra{v_{i_2 k_2}}^R \\
& = \sum_{i_1^3,j_1^2,k_1^2} f_1(i_1^3,j_1^2,k_1^2) \brackett{w_{i_1 k_1}}{W}{w_{i_2 j_2}}^R 
\Big[ \frac{q_1(i_1^2,j_1^2,k_1^2) |A| - q_2(i_1^2,j_1^2,k_1^2)}{|A|(|A|^2-1)}
\bracket{z_{i_3}}{y_{i_3}}^A \eye^A + \nonumber \\
& \hspace{1in}
\frac{q_2(i_1^2,j_1^2,k_1^2) |A| - q_1(i_1^2,j_1^2,k_1^2)}{|A|(|A|^2-1)}
\ket{y_{i_3}} \bra{z_{i_3}}^A \Big]
\otimes \ket{v_{i_1 j_1}} \bra{v_{i_2 k_2}}^R \\
\label{yae1}
& = \frac{X^A \otimes \left( |A| \Lambda^R - \Upsilon^R \right)
+ (\tr X^A) \eye^A \otimes
\left( |A| \Upsilon^R - \Lambda^R \right) }{|A|(|A|^2-1)},
\end{align}
where in the first equality
\beq
f_1(i_1^3,j_1^2,k_1^2) = \sqrt{\beta_{i_1}^2 \lambda_{i_1,j_1} \mu_{i_1,k_1} \eta_{i_3}^2
\beta_{i_2}^2 \lambda_{i_2,j_2} \mu_{i_2,k_2}},
\enq
in the third equality,
\begin{align}
q_1(i_1^2,j_1^2,k_1^2) & = \bracket{w_{i_1 k_1}}{w_{i_2 j_2}}^A
\bracket{v_{i_2 k_2}}{v_{i_1 j_1}}^A \\
q_2(i_1^2,j_1^2,k_1^2) & = \bracket{w_{i_1 k_1}}{v_{i_1 j_1}}^A
\bracket{v_{i_2 k_2}}{w_{i_2 j_2}}^A,
\end{align}
and the fourth equality follows after simplifications. QED.
\end{proof}

\begin{lemma}
\label{yal9}
Let $\cT^{A \to E}$ be a completely positive map with the
Choi-Jamio{\l}kowski representation $\omega^{E A^\prime}_\cT$.
Then for a random Unitary $U$ acting on $A$,
any matrix $\sigma^{AR}$, we have
\begin{multline}
\Exp_U \left\{ \left[ \cT \left( U \cdot \sigma^{AR} \right) - \omega^E_\cT \otimes \sigma^R \right]
\left[ \cT \left( U \cdot \sigma^{AR} \right) - \omega^E_\cT \otimes \sigma^R \right]^\dagger
\right\}
= \frac{\qmap_{A^\prime}(\omega^{E A^\prime}_\cT) \otimes \qmap_A(\sigma^{AR})}{|A|^2-1} \\
\leqslant \frac{|A|^2}{|A|^2-1} \tr_{A^\prime} \left( \omega^{E A^\prime}_\cT \right)^2 \otimes
\tr_A \left[ \sigma^{AR} (\sigma^{AR})^\dagger \right].
\end{multline}
\end{lemma}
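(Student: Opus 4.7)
The plan rests on two simplifications. First, since a Unitary $2$-design is in particular a $1$-design, $\Exp_U U \cdot \sigma^{AR} = \pi^A \otimes \sigma^R$, so $\Exp_U \cT(U \cdot \sigma^{AR}) = \cT(\pi^A) \otimes \sigma^R = \omega^E_\cT \otimes \sigma^R$, using $\omega^E_\cT = \tr_{A^\prime} \omega_\cT^{EA^\prime} = \cT(\pi^A)$. Writing $X_U \equiv \cT(U \cdot \sigma^{AR})$ and $Y \equiv \omega^E_\cT \otimes \sigma^R$, the identity $\Exp_U X_U = Y$ collapses the expectation of $(X_U - Y)(X_U - Y)^\dagger$ to $\Exp_U[X_U X_U^\dagger] - Y Y^\dagger$, with $Y Y^\dagger = (\omega^E_\cT)^2 \otimes \sigma^R(\sigma^R)^\dagger$. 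Second, I would introduce a Kraus decomposition $\cT(\cdot) = \sum_k E_k (\cdot) E_k^\dagger$ and unfold
\[
X_U X_U^\dagger = \sum_{k,l} E_k\, U \sigma^{AR} U^\dagger\, (E_k^\dagger E_l \otimes \eye^R)\, U (\sigma^{AR})^\dagger U^\dagger\, E_l^\dagger,
\]
so that each summand is exactly the template of Lemma \ref{yal2} with $X^A = E_k^\dagger E_l$ and $W^R = \eye^R$.

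Applying Lemma \ref{yal2} term by term, averaging, sandwiching by $E_k$ on the left and $E_l^\dagger$ on the right, and summing over $k,l$ reduces the computation to two $E$-side Kraus sums. The first, $\sum_{k,l} E_k E_k^\dagger E_l E_l^\dagger = [\cT(\eye^A)]^2 = |A|^2 (\omega^E_\cT)^2$, is immediate. The second, $\sum_{k,l} \tr(E_k^\dagger E_l)\, E_k E_l^\dagger$, I would identify with $|A|^2 \tr_{A^\prime}(\omega_\cT^{EA^\prime})^2$ by passing through the vectors $\ket{v_k} \equiv \sum_i (E_k \ket{i}^A) \otimes \ket{i}^{A^\prime}$, noting $\omega_\cT^{EA^\prime} = |A|^{-1} \sum_k \ket{v_k}\bra{v_k}$, $\bracket{v_k}{v_l} = \tr(E_k^\dagger E_l)$, and $\tr_{A^\prime} \ket{v_k}\bra{v_l} = E_k E_l^\dagger$.

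Plugging in and subtracting $YY^\dagger$ with the abbreviations $\Lambda^R \equiv \sigma^R(\sigma^R)^\dagger$ and $\Upsilon^R \equiv \tr_A[\sigma^{AR}(\sigma^{AR})^\dagger]$, the cross-terms regroup by $R$-factor: collecting coefficients of $\Lambda^R$ yields $(\omega^E_\cT)^2 - |A| \tr_{A^\prime}(\omega_\cT^{EA^\prime})^2 = -\qmap_{A^\prime}(\omega_\cT^{EA^\prime})$ and collecting coefficients of $\Upsilon^R$ yields $|A|\qmap_{A^\prime}(\omega_\cT^{EA^\prime})$, giving
\[
\Exp_U[(X_U - Y)(X_U - Y)^\dagger] = \frac{\qmap_{A^\prime}(\omega_\cT^{EA^\prime}) \otimes (|A|\Upsilon^R - \Lambda^R)}{|A|^2-1} = \frac{\qmap_{A^\prime}(\omega_\cT^{EA^\prime}) \otimes \qmap_A(\sigma^{AR})}{|A|^2-1},
\]
which is the claimed equality. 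For the upper bound, Lemma \ref{yal1} applied to the partial trace (whose image of $\eye$ is a scaled identity) gives $(\omega^E_\cT)^2 \leq |A| \tr_{A^\prime}(\omega_\cT^{EA^\prime})^2$ and $\Lambda^R \leq |A|\Upsilon^R$; in particular $\qmap_{A^\prime}(\omega_\cT^{EA^\prime}) \geq 0$. Expanding the difference between the target $|A|^2 \tr_{A^\prime}(\omega_\cT^{EA^\prime})^2 \otimes \Upsilon^R$ and the exact expression leaves $\qmap_{A^\prime}(\omega_\cT^{EA^\prime}) \otimes \Lambda^R + |A| (\omega^E_\cT)^2 \otimes \Upsilon^R$, a sum of two PSD tensor products, delivering the inequality.

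I expect the delicate step to be the Choi--Jamio{\l}kowski identification $\sum_{k,l} \tr(E_k^\dagger E_l)\, E_k E_l^\dagger = |A|^2 \tr_{A^\prime}(\omega_\cT^{EA^\prime})^2$ together with careful tensor-slot bookkeeping through the double Kraus sum; once the two $E$-side sums are in closed form, the algebraic collapse to the $\qmap$-form and the dominating inequality are each only a couple of lines.
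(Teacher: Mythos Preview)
Your proposal is correct and follows essentially the same route as the paper: expand via a Kraus decomposition, apply Lemma~\ref{yal2} with $X^A = E_k^\dagger E_l$ and $W^R = \eye^R$, identify the two Kraus sums with $|A|^2(\omega^E_\cT)^2$ and $|A|^2\tr_{A^\prime}(\omega_\cT^{EA^\prime})^2$, collapse to the $\qmap\otimes\qmap$ form, and finish with Lemma~\ref{yal1}. Your presentation differs only in minor organization---you make the $1$-design identity $\Exp_U X_U = Y$ explicit up front (the paper absorbs the cross terms silently into its first displayed line), you spell out the vectorized-Kraus argument for the second sum, and you verify the inequality by exhibiting the difference as a sum of PSD tensors rather than by dropping negative terms in two stages---but these are the same proof.
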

\begin{proof}
Let $\cT$ be described by the Kraus operators $\{ T_k \}$. We now have
\begin{align}
\Exp_U & \, \left\{ \left[ \cT \left( U \cdot \sigma^{AR} \right] - \omega^E_\cT \otimes \sigma^R \right)
\left[ \cT \left( U \cdot \sigma^{AR} \right) - \omega^E_\cT \otimes \sigma^R \right]^\dagger
\right\} \nonumber \\
& = \sum_{k,l} T_k \Exp_U \left\{ U \sigma^{AR} U^\dagger T_k^\dagger T_l U (\sigma^{AR})^\dagger
U^\dagger \right\} T_l^ \dagger - \left( \omega^E_\cT \right)^2 \otimes \sigma^R (\sigma^R)^\dagger \\
& = \sum_{k,l} T_k
\Bigg\{ T_k^\dagger T_l \otimes \frac{|A| \sigma^R (\sigma^R)^\dagger -
\tr_A \left[ \sigma^{AR} (\sigma^{AR})^\dagger \right]}{|A|(|A|^2-1)} + \nonumber \\
& \hspace{0.5in} (\tr T_k^\dagger T_l) \eye^A \otimes \frac{
|A| \tr_A \left[ \sigma^{AR} (\sigma^{AR})^\dagger \right]
- \sigma^R (\sigma^R)^\dagger}{|A|(|A|^2-1)} \Bigg\} T_l^ \dagger -
\left( \omega^E_\cT \right)^2 \otimes \sigma^R (\sigma^R)^\dagger \\
& = |A|^2 \left( \omega^E_\cT \right)^2 \otimes \frac{|A| \sigma^R (\sigma^R)^\dagger -
\tr_A \left[ \sigma^{AR} (\sigma^{AR})^\dagger \right]}{|A|(|A|^2-1)} + \nonumber \\
& \hspace{0.4in} |A|^2 \, \tr_{A^\prime} \left( \omega^{E A^\prime}_\cT \right)^2 \otimes
\frac{ |A| \tr_A \left[ \sigma^{AR} (\sigma^{AR})^\dagger \right] - \sigma^R (\sigma^R)^\dagger}{|A|(|A|^2-1)}
- \left( \omega^E_\cT \right)^2 \otimes \sigma^R (\sigma^R)^\dagger \\
& = \left( \omega^E_\cT \right)^2 \otimes \frac{|A|^2 \sigma^R (\sigma^R)^\dagger -
|A| \tr_A \left[ \sigma^{AR} (\sigma^{AR})^\dagger \right]}{|A|^2-1} + \nonumber \\
& \hspace{0.4in} |A| \, \tr_{A^\prime} \left( \omega^{E A^\prime}_\cT \right)^2 \otimes
\frac{ |A| \tr_A \left[ \sigma^{AR} (\sigma^{AR})^\dagger \right] - \sigma^R (\sigma^R)^\dagger}{|A|^2-1}
- \left( \omega^E_\cT \right)^2 \otimes \sigma^R (\sigma^R)^\dagger \\
& = \frac{\qmap_{A^\prime}(\omega^{E A^\prime}_\cT) \otimes \qmap_A(\sigma^{AR})}{|A|^2-1} \\
& \leqslant \frac{|A|^2}{|A|^2-1} \, \tr_{A^\prime} \left( \omega^{E A^\prime}_\cT \right)^2 \otimes
\tr_A \left[ \sigma^{AR} (\sigma^{AR})^\dagger \right],
\end{align}
where in the second equality, we have used Lemma \ref{yal2}, and the inequality
follows by noting from Lemma \ref{yal1} that $|A| \tr_A \left[ \sigma^{AR} (\sigma^{AR})^\dagger \right]
- \sigma^R (\sigma^R)^\dagger$ $\in \mathrm{Pos}(\cH_R)$. QED.
\end{proof}

\begin{lemma}[Exercise 9.9 in Ref. \cite{hayashi}]
\label{yal3}
Let $\rho \in \densitymatrix(\cH_A)$, $\sigma \in \mathrm{Pos}(\cH_A)$, and
$\Pi = \{ \cM_{\sigma}(\rho) \geqslant \zeta \sigma \}$. Then for any $\alpha \in (1,2]$,
we have
\beq
\| \Pi \rho \|_1 \leqslant \zeta^{\frac{1-\alpha}{2}} \sqrt{ Q_{\alpha}(\rho \| \sigma) }
= \zeta^{\frac{1-\alpha}{2}} \, \exp \left\{ \frac{\alpha - 1}{2} D_\alpha(\rho \| \sigma) \right\}.
\enq
\end{lemma}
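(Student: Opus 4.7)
The plan is to reduce the trace-norm bound to a scalar inequality by exploiting that $\Pi$, $\sigma$, and the pinched state $\cM_\sigma(\rho)$ all admit a simultaneous eigenbasis: $\Pi$ lies in the commutant of $\sigma$ by construction, since it is spectrally defined from $\cM_\sigma(\rho) - \zeta \sigma$, both of whose terms commute with $\sigma$.

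First I would bound $\|\Pi \rho\|_1$ by $\sqrt{\tr \Pi \rho}$ using H\"older's inequality $\|XY\|_1 \leq \|X\|_2 \|Y\|_2$ applied to $X = \Pi \rho^{1/2}$ and $Y = \rho^{1/2}$, which gives $\|\Pi \rho\|_1 \leq \sqrt{\tr \rho^{1/2} \Pi \rho^{1/2}} \cdot \sqrt{\tr \rho} = \sqrt{\tr \Pi \rho}$, using $\tr \rho = 1$. Next, because $\Pi$ commutes with $\sigma$ it is fixed by the pinching map, and by the self-adjointness of $\cM_\sigma$ with respect to the Hilbert-Schmidt inner product,
\[
\tr \Pi \rho = \tr \cM_\sigma(\Pi) \rho = \tr \Pi \cM_\sigma(\rho).
\]

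Then I would invoke the defining property of $\Pi$: on its range, $\cM_\sigma(\rho) \geq \zeta \sigma$, with all three operators simultaneously diagonalizable. Denoting their common eigenvalues by $r_i$ for $\cM_\sigma(\rho)$ and $s_i$ for $\sigma$, on the support of $\Pi$ we have $r_i \geq \zeta s_i > 0$, so multiplying the trivial identity $r_i = r_i$ by $(r_i / \zeta s_i)^{\alpha - 1} \geq 1$ (which is legitimate because $\alpha > 1$) gives the pointwise bound $r_i \leq \zeta^{1-\alpha} r_i^\alpha s_i^{1-\alpha}$. Summing this bound and then extending from the range of $\Pi$ to all indices yields
\[
\tr \Pi \cM_\sigma(\rho) \leq \zeta^{1-\alpha} \tr \cM_\sigma(\rho)^\alpha \sigma^{1-\alpha}.
\]

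The final step is to remove the pinching from inside the $\alpha$-th power. Since $x \mapsto x^\alpha$ is operator convex on $[0,\infty)$ for $\alpha \in [1,2]$ and $\cM_\sigma$ is unital and completely positive, the operator Jensen inequality gives $\cM_\sigma(\rho)^\alpha \leq \cM_\sigma(\rho^\alpha)$; combined with self-adjointness of pinching and the fact that $\sigma^{1-\alpha}$ already commutes with $\sigma$ (so $\cM_\sigma(\sigma^{1-\alpha}) = \sigma^{1-\alpha}$), one gets $\tr \cM_\sigma(\rho)^\alpha \sigma^{1-\alpha} \leq \tr \rho^\alpha \sigma^{1-\alpha} = Q_\alpha(\rho \| \sigma)$. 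Chaining the three inequalities produces $\|\Pi \rho\|_1^2 \leq \zeta^{1-\alpha} Q_\alpha(\rho\|\sigma)$, which after a square root is exactly the claim. The main obstacle is justifying the interchange of pinching with the $\alpha$-th power, and this is precisely why the range $\alpha \in (1,2]$ matters: operator convexity of $x^\alpha$ fails outside this interval, so an essentially different argument would be needed there.
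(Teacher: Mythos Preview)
Your proof is correct. The paper does not supply its own argument for this lemma, merely citing Exercise~9.9 of Hayashi's book, so there is nothing to compare against directly; your route via the Cauchy--Schwarz/H\"older bound $\|\Pi\rho\|_1 \leqslant \sqrt{\tr \Pi \rho}$, the self-adjointness of pinching, the scalar inequality $r_i \leqslant \zeta^{1-\alpha} r_i^\alpha s_i^{1-\alpha}$ on the range of $\Pi$, and operator Jensen for $x \mapsto x^\alpha$ is the standard one and matches what one would extract from Hayashi.

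One small remark on scope. Your final step, operator Jensen, yields $\tr \cM_\sigma(\rho)^\alpha \sigma^{1-\alpha} \leqslant \tr \rho^\alpha \sigma^{1-\alpha} = \qold_\alpha(\rho\|\sigma)$, i.e.\ the \emph{old} quasi-entropy. The paper's convention is that the unadorned $Q_\alpha$ and $D_\alpha$ may denote either the old or the sandwiched version. To cover both simultaneously you can replace operator Jensen by the data-processing inequality for $D_\alpha$ under the cptp map $\cM_\sigma$: since $\cM_\sigma(\sigma)=\sigma$ one gets $D_\alpha(\cM_\sigma(\rho)\|\sigma) \leqslant D_\alpha(\rho\|\sigma)$ for either type, and because $\cM_\sigma(\rho)$ commutes with $\sigma$ the two quasi-entropies coincide on the left-hand side, giving $\tr \cM_\sigma(\rho)^\alpha \sigma^{1-\alpha} \leqslant Q_\alpha(\rho\|\sigma)$ in both senses. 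This also clarifies that the restriction $\alpha \in (1,2]$ is not an artefact of your method: data processing for $\qold_\alpha$ likewise requires $\alpha \leqslant 2$.
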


\begin{lemma}[Hayashi \cite{hayashi}]
\label{yal4}
Let $\rho \in \densitymatrix(\cH_A)$, $\sigma \in \mathrm{Pos}(\cH_A)$,
$\Pi = \{ \cM_{\sigma}(\rho) \geqslant \zeta \sigma \}$
and $\hat{\Pi} = \eye - \Pi$. Then
\beq
\tr \sigma^{-1} \hat{\Pi} \rho^2 \hat{\Pi} \leqslant \nu_\sigma \zeta.
\enq
\end{lemma}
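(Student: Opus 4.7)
\textbf{Proof plan for Lemma \ref{yal4}.} The plan is to combine three standard ingredients: (i) the fact that $\hat{\Pi}$ commutes with both $\sigma$ and $\cM_\sigma(\rho)$; (ii) Hayashi's pinching inequality $\rho \leqslant \nu_\sigma\, \cM_\sigma(\rho)$; and (iii) a Hölder-type bound $\tr X \rho \leqslant \|X\|_\infty \tr \rho$ to pass from $\rho$ to $\rho^2$. The key observation that drives the entire argument is that, because $\cM_\sigma(\rho) = \sum_i \Pi_i \rho \Pi_i$ is block-diagonal in the spectral projectors of $\sigma$, the operator $\cM_\sigma(\rho)-\zeta \sigma$ shares an eigenbasis with $\sigma$, and hence its spectral projector $\hat{\Pi}$ commutes with both $\sigma$ and $\cM_\sigma(\rho)$.

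First I would record the operator inequality coming from the definition of $\hat{\Pi}$: since $\hat{\Pi}$ projects onto the subspace where $\cM_\sigma(\rho) - \zeta\sigma$ is negative,
\[
\hat{\Pi}\, \cM_\sigma(\rho)\, \hat{\Pi} \;\leqslant\; \zeta\, \hat{\Pi}\, \sigma\, \hat{\Pi}.
\]
Combining with the pinching inequality $\rho \leqslant \nu_\sigma\, \cM_\sigma(\rho)$ and then sandwiching by $\hat{\Pi}$ gives the operator bound
\[
\hat{\Pi}\, \rho\, \hat{\Pi} \;\leqslant\; \nu_\sigma\,\zeta\, \hat{\Pi}\, \sigma\, \hat{\Pi}.
\]
Conjugating both sides by $\sigma^{-1/2}$ (which commutes with $\hat{\Pi}$) yields
\[
\sigma^{-1/2}\, \hat{\Pi}\, \rho\, \hat{\Pi}\, \sigma^{-1/2} \;\leqslant\; \nu_\sigma\,\zeta\, \hat{\Pi},
\]
so that the operator norm of the left-hand side is at most $\nu_\sigma\,\zeta$.

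Setting $A \equiv \sigma^{-1/2}\, \hat{\Pi}\, \rho^{1/2}$, the previous display reads $\|AA^\dagger\|_\infty \leqslant \nu_\sigma\,\zeta$, i.e.\ $\|A\|_\infty^2 \leqslant \nu_\sigma\,\zeta$. The conclusion then follows from a one-line rewrite and a Hölder-type estimate:
\begin{align}
\tr\, \sigma^{-1}\, \hat{\Pi}\, \rho^2\, \hat{\Pi}
&= \tr\, A\, \rho\, A^\dagger
= \tr\, A^\dagger A\, \rho \nonumber\\
&\leqslant \|A^\dagger A\|_\infty \tr \rho
= \|A\|_\infty^2
\leqslant \nu_\sigma\,\zeta,
\end{align}
where we used cyclicity of the trace, the bound $\tr (X\rho) \leqslant \|X\|_\infty\, \tr \rho$ for $X,\rho\geqslant 0$, and $\tr \rho = 1$. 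The only real subtlety is resisting the temptation to bound $\rho^2 \leqslant \rho$ too early: doing so produces the looser factor $\tr \hat{\Pi}$ instead of $\nu_\sigma$. Keeping one factor of $\rho^{1/2}$ absorbed into $A$ and the remaining $\rho$ acted on by a Hölder bound is what exchanges that dimension factor for the sharp $\nu_\sigma$.
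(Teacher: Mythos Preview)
Your proof is correct. The paper does not supply its own argument for this lemma---it simply points to Lemma~9.2 in Hayashi's book---so there is no in-paper proof to compare against; your route via the pinching inequality $\rho \leqslant \nu_\sigma\, \cM_\sigma(\rho)$, commutation of $\hat{\Pi}$ with $\sigma$, and the H\"older/operator-norm bound is clean, self-contained, and standard.
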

The proof of this lemma is contained in Lemma 9.2 in Ref. \cite{hayashi}.


\begin{lemma}
\label{yal14}
Let $\sigma, \rho$ $\in \Pos(\cH_A)$.
Then
\beq
\tr \rho + \tr \sigma - 2 F(\rho,\sigma) \leqslant
\left\| \rho - \sigma \right\|_1 \leqslant \sqrt{ ( \tr \rho + \tr \sigma)^2 - 4 F(\rho,\sigma)^2 }.
\enq
\end{lemma}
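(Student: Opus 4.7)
Writing $p = \tr \rho$, $q = \tr \sigma$, and $F = F(\rho, \sigma)$, the plan is to prove the two bounds separately by adapting the standard Fuchs--van de Graaf argument to positive operators that need not be normalized.

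For the upper bound I would reduce to pure states via Uhlmann's theorem (used in its non-normalized form, which follows from the unit-trace case by rescaling $\rho \mapsto \rho/p$ and $\sigma \mapsto \sigma/q$): fix purifying vectors $|\psi\rangle, |\phi\rangle \in \cH_A \otimes \cH_{A'}$ with $\tr_{A'}|\psi\rangle\langle\psi| = \rho$, $\tr_{A'}|\phi\rangle\langle\phi| = \sigma$, and $|\langle\psi|\phi\rangle| = F$. Inside the (at most two-dimensional) span of $|\psi\rangle$ and $|\phi\rangle$ a direct computation gives $\tr(|\psi\rangle\langle\psi| - |\phi\rangle\langle\phi|) = p - q$ and $\tr(|\psi\rangle\langle\psi| - |\phi\rangle\langle\phi|)^2 = p^2 + q^2 - 2F^2$, so that the two nonzero eigenvalues are $\tfrac{1}{2}\bigl[(p-q) \pm \sqrt{(p+q)^2 - 4F^2}\bigr]$, with opposite signs since their product $F^2 - pq$ is nonpositive by Cauchy--Schwarz. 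This yields the pure-state identity $\||\psi\rangle\langle\psi| - |\phi\rangle\langle\phi|\|_1 = \sqrt{(p+q)^2 - 4F^2}$, and monotonicity of the trace norm under the partial trace over $A'$ completes the upper bound.

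For the lower bound I would pass to classical distributions. For any POVM $\{E_x\}$ set $p_x = \tr E_x \rho$ and $q_x = \tr E_x \sigma$; the variational characterization $\|\rho - \sigma\|_1 = \sup_{\{E_x\}} \sum_x |p_x - q_x|$ gives $\|\rho - \sigma\|_1 \geq \sum_x |p_x - q_x|$ for every POVM, and the elementary pointwise bound $|a - b| \geq (\sqrt{a} - \sqrt{b})^2$ for $a, b \geq 0$ converts this into $\|\rho - \sigma\|_1 \geq p + q - 2 \sum_x \sqrt{p_x q_x}$. Since this holds for every POVM, invoking the non-normalized variational formula $F(\rho, \sigma) = \inf_{\{E_x\}} \sum_x \sqrt{p_x q_x}$ and taking an infimizing sequence of POVMs yields $\|\rho - \sigma\|_1 \geq p + q - 2F$.

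The only nontrivial obstacle is justifying that both Uhlmann's theorem and the measurement characterization of fidelity extend cleanly from density matrices to arbitrary elements of $\Pos(\cH_A)$; both extensions are routine by homogeneity, since $F$, $\tr$, and $\|\cdot\|_1$ all scale multiplicatively under $\rho \mapsto \lambda \rho$, but the extended versions should be explicitly cited (e.g.\ from the Wilde book already referenced in the paper) to keep the write-up airtight.
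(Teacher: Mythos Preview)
Your proposal is correct and follows essentially the same route as the paper's proof: Uhlmann purifications plus monotonicity of the trace norm under partial trace for the upper bound, and the measurement characterization of fidelity combined with $|a-b|\geq(\sqrt{a}-\sqrt{b})^2$ for the lower bound. The only cosmetic differences are that you compute the pure-state eigenvalues explicitly (the paper just asserts the resulting trace-norm identity) and you phrase the lower bound via the variational formula for $\|\cdot\|_1$ rather than via a cptp-map monotonicity argument, which are equivalent.
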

\begin{proof}
The proof is essentially along the lines of the Fuchs-van de Graaf inequalities
\cite{fuchs-graaf-ineq-1998}. We know that
\beq
F(\rho,\sigma) = \min_{ \text{POVM} \{ \Lambda_m \} } \sum_m \sqrt{p_m q_m},
\enq
where $p_m \equiv \tr \Lambda_m \rho$ and $q_m \equiv \tr \Lambda_m \sigma$. Note that
$\sum_m p_m = \tr \rho$ and $\sum_m q_m = \tr \sigma$. Let $\{ \Lambda_m\}$
be the minimizing POVM in the above equation. We now have
\begin{multline}
\left\| \rho - \sigma \right\|_1 \geqslant \left\| \sum_m \ketbra{m}^X \otimes \sqrt{\Lambda_m}
\rho \sqrt{\Lambda_m} - \sum_m \ketbra{m}^X \otimes \sqrt{\Lambda_m}
\sigma \sqrt{\Lambda_m} \right\|_1 \\
\geqslant \left\| \sum_m p_m \ketbra{m}^X -
\sum_m q_m \ketbra{m}^X \right\|_1 = \sum_m |p_m - q_m| \\
= \sum_m | \sqrt{p_m} - \sqrt{q_m}| |\sqrt{p_m} + \sqrt{q_m}|
\geqslant \sum_m (\sqrt{p_m} - \sqrt{q_m})^2 \\
= \tr \rho + \tr \sigma - 2 F(\rho,\sigma),
\end{multline}
where the first inequality follows from the monotonicity under the application of a cptp map
with Kraus operators $\{ \ket{m}^X \otimes \sqrt{\Lambda_m} \}$, where $\{ \ket{m}^X \}$ is
an orthonormal basis, and the second inequality follows again from monotonicity under
partial trace.

To prove the other inequality, let $\ket{u_\rho}$ and $\ket{v_\sigma}$ be purifications
of $\rho$ and $\sigma$ respectively such that $F(\rho, \sigma) = \bracket{u_\rho}{v_\sigma}$.
We now have
\beq
\left\| \rho - \sigma \right\|_1 \leqslant \left\| u_\rho - v_\sigma \right\|_1
= \sqrt{ ( \tr \rho + \tr \sigma)^2 - 4 F(\rho,\sigma)^2 }.
\enq
QED.
\end{proof}

\begin{lemma}
\label{yal11}
Let $\Psi^A \in \densitymatrix(\cH_A)$, $\xi^A$ $\in \Pos(\cH_A)$ such that
$\| \xi^A - \Psi^A \|_1 \leqslant \varepsilon$.
Let $\xi^{AB}$, $\Psi^{AC}$, $|B| \leqslant |C|$, be purifications of $\xi^A$ and $\Psi^A$ respectively.
Then there exists a partial isometry $V^{B \to C}$ such that
\beq
\left\| V^{B \to C} \cdot \xi^{AB} - \Psi^{AC} \right\|_1 \leqslant
\sqrt{\varepsilon (2 + \varepsilon + 2 \sqrt{1+\varepsilon})}.
\enq
Note that if it is known that $\xi^A \in \densitymatrix_\leqslant(\cH_A)$, then from Corollary 2.2 in
Ref. \cite{dupuis-2014}, the bound in the RHS can be refined to $2 \sqrt{\varepsilon}$.
\end{lemma}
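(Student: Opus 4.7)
The plan is to invoke Uhlmann's theorem to obtain the partial isometry that attains the fidelity between the marginals $\xi^A$ and $\Psi^A$, and then to convert that fidelity lower bound into a trace-distance upper bound via the freshly proved Lemma~\ref{yal14}. Write $\xi^{AB} = \ketbra{\xi}^{AB}$ and $\Psi^{AC} = \ketbra{\Psi}^{AC}$, so $\braket{\xi}{\xi} = \tr\xi^A$ and $\braket{\Psi}{\Psi} = 1$. Since $|B| \leqslant |C|$, Uhlmann's theorem delivers a full-rank partial isometry $V^{B\to C}$ (so $V^\dagger V = \eye^B$) such that the vector $\ket{\phi}^{AC} \equiv (\eye^A \otimes V)\ket{\xi}^{AB}$ satisfies $|\braket{\Psi}{\phi}| = F(\xi^A, \Psi^A)$. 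Because $V$ acts as a true isometry on $B$, one has $\braket{\phi}{\phi} = \tr\xi^A$, and the rank-one operator $V \cdot \xi^{AB} = \ketbra{\phi}^{AC}$ satisfies $F(V \cdot \xi^{AB}, \Psi^{AC}) = |\braket{\Psi}{\phi}| = F(\xi^A, \Psi^A)$, via the elementary identity $\sqrt{\ketbra{a}} = \ketbra{a}/\sqrt{\braket{a}{a}}$ for rank-one positive operators.

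Next, I would apply the upper bound in Lemma~\ref{yal14} to $V \cdot \xi^{AB}$ and $\Psi^{AC}$. Writing $t \equiv \tr\xi^A$ and $f \equiv F(\xi^A, \Psi^A)$, this gives $\|V \cdot \xi^{AB} - \Psi^{AC}\|_1 \leqslant \sqrt{(t+1)^2 - 4f^2}$. The key move is to factor $(t+1)^2 - 4f^2 = (t+1-2f)(t+1+2f)$ and bound each factor separately. The lower bound in Lemma~\ref{yal14} applied directly to $\xi^A$ and $\Psi^A$ already provides $t + 1 - 2f \leqslant \|\xi^A - \Psi^A\|_1 \leqslant \varepsilon$. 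For the second factor, Cauchy--Schwarz applied to the definition $F(\xi^A, \Psi^A) = \|\sqrt{\xi^A}\sqrt{\Psi^A}\|_1$ gives $f \leqslant \sqrt{t}$, while $|\tr\xi^A - \tr\Psi^A| \leqslant \|\xi^A - \Psi^A\|_1$ combined with $\xi^A \geqslant 0$ forces $t \leqslant 1 + \varepsilon$. Hence $t + 1 + 2f \leqslant (1 + \sqrt{t})^2 \leqslant (1 + \sqrt{1+\varepsilon})^2 = 2 + \varepsilon + 2\sqrt{1+\varepsilon}$. Multiplying the two factor bounds and taking a square root yields exactly $\Xi(\varepsilon)$.

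The only non-trivial ingredient is Uhlmann's theorem in its unequal-dimension form, and the main technical subtlety is carrying through the computation for a possibly sub-normalized $\xi^A$: the purification $\ket{\xi}^{AB}$ has norm $\sqrt{\tr\xi^A}$ rather than $1$, so $V \cdot \xi^{AB}$ need not be normalized and one needs the general Fuchs--van de Graaf form from Lemma~\ref{yal14} rather than the standard trace-one version. No genuine obstacle is anticipated; all remaining work is the two-factor bookkeeping above.
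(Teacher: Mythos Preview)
Your proof is correct and follows essentially the same route as the paper: invoke Uhlmann's theorem to produce the partial isometry $V$ with $F(V\cdot\xi^{AB},\Psi^{AC})=F(\xi^A,\Psi^A)$, then apply both inequalities of Lemma~\ref{yal14}, factoring $(t+1)^2-4f^2=(t+1-2f)(t+1+2f)$ and bounding the factors via $t\leqslant 1+\varepsilon$ and $f\leqslant\sqrt{t}$. The only cosmetic difference is that the paper phrases $f\leqslant\sqrt{t}$ as monotonicity of fidelity under the full trace rather than as Cauchy--Schwarz, and bounds $t$ and $f$ separately in the second factor rather than grouping them as $(1+\sqrt{t})^2$; the arithmetic is identical.
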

\begin{proof}
We use the first inequality in the claim of Lemma \ref{yal14} to have
\beq
\tr \xi^A + \tr \Psi^A - 2 F(\xi^A,\Psi^A) \leqslant \varepsilon.
\enq
Using the Uhlmann's theorem \cite{uhlmann-1976},
we claim that there exists a partial isometry $V^{B \to C}$ such that
$F(\xi^A, \Psi^A)$ $= F(V^{B \to C} \cdot \xi^{AB}, \Psi^{AC})$, and hence,
\beq
\tr \xi^{AC} + \tr \Psi^{AC} - 2 F(V^{B \to C} \cdot \xi^{AB},\Psi^{AC}) \leqslant \varepsilon.
\enq
Since, $|\tr \xi^A - \tr \Psi^A| \leqslant \varepsilon$, or, $\tr \xi^A \leqslant 1 + 
\varepsilon$, and, using monotonicity, $F(V^{B \to C} \cdot \xi^{AB},\Psi^{AC}) 
\leqslant \sqrt{(\tr \xi^A) (\tr \Psi^A)}$ $\leqslant \sqrt{1+\varepsilon}$, and
hence,
$\tr \xi^{AC} + \tr \Psi^{AC} + 2 F(V^{B \to C} \cdot \xi^{AB},\Psi^{AC}) \leqslant 2 + \varepsilon + 2 \sqrt{1+\varepsilon}$.
Using the second inequality in the claim of Lemma \ref{yal14} again, we arrive at
\begin{multline}
\left\| V^{B \to C} \cdot \xi^{AB} - \Psi^{AC} \right\|_1 \\ \leqslant
\sqrt{ \left[ \tr \xi^{AC} + \tr \Psi^{AC} - 2 F(V^{B \to C} \cdot \xi^{AB},\Psi^{AC}) \right] \left[
\tr \xi^{AC} + \tr \Psi^{AC} + 2 F(V^{B \to C} \cdot \xi^{AB},\Psi^{AC}) \right] } \\
\leqslant \sqrt{\varepsilon (2 + \varepsilon + 2 \sqrt{1+\varepsilon})}.
\end{multline}
QED.
\end{proof}

\begin{corollary}[A straightforward corollary of Lemma 9.2 in Ref. \cite{hayashi}]
\label{yal10}
Consider a cq state
\beq
\rho^{XR} \equiv \sum_{x \in \cX} p_x \ketbra{x}^X \otimes \rho_x^R,
\enq
where $\rho_x^R \in \densitymatrix(\cH_R)$, $x \in \cX$, and $\{ p_x, x \in \cX \}$
is a probability vector. Let $\rho^R = \tr_X \rho^{XR}$, $\zeta > 0$,
$M \in \mathbb{N}$, any $\kappa^R \in \densitymatrix(\cH_R)$,
and $X^M \equiv (X_1,...,X_M)$ be $M$ i.i.d. random variables with probability distribution $\{p_x, x \in \cX\}$.
Then we have for any $\alpha \in (1,2]$,
\beq
\Exp_{X^M} \Big\| \frac{1}{M} \sum_{i=1}^M \rho_{X_i}^R - \rho^R \Big\|_1
\leqslant 4 \, \exp \left\{ \frac{\alpha-1}{2 \alpha} \left[ \log \nu_{\kappa^R} + D_\alpha(\rho^{XR} \| \rho^X \otimes
\kappa^R) - \log M \right] \right\}.
\enq
\end{corollary}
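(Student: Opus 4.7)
The plan is to mirror the two-term decomposition used in the proof of Theorem \ref{theorem1}, with the Haar average over unitaries replaced by the empirical average over $M$ i.i.d.\ classical samples. The analogy is clean: both $\Exp_U\, \cT(U\cdot \rho^{AR})$ and $\Exp_{X}\rho_{X}^R$ return the decoupled target in expectation, and in both settings the ``variance'' about that target is controlled via an $L_1$-to-weighted-$L_2$ inequality whose key ingredient is Lemma \ref{yal4}.

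Fix a free parameter $\zeta>0$ and set $\Pi_x^R \equiv \{\cM_{\kappa^R}(\rho_x^R) \geqslant \zeta \kappa^R\}$, $\hat\Pi_x^R \equiv \eye^R - \Pi_x^R$, along with $\mu_1 \equiv \sum_x p_x \Pi_x^R\rho_x^R$ and $\mu_2 \equiv \sum_x p_x \hat\Pi_x^R\rho_x^R$, so that $\mu_1+\mu_2 = \rho^R$. Writing $\rho_{X_i}^R = \Pi_{X_i}^R\rho_{X_i}^R + \hat\Pi_{X_i}^R\rho_{X_i}^R$ and applying the triangle inequality, the quantity of interest splits as $(I)+(II)$, where
\[
(I) \equiv \Exp_{X^M}\Bigl\|\tfrac{1}{M}\sum_i \Pi_{X_i}^R\rho_{X_i}^R - \mu_1\Bigr\|_1, \qquad
(II) \equiv \Exp_{X^M}\Bigl\|\tfrac{1}{M}\sum_i \hat\Pi_{X_i}^R\rho_{X_i}^R - \mu_2\Bigr\|_1.
\]
For $(I)$, a second application of the triangle inequality and convexity of $\|\cdot\|_1$ give $(I)\leqslant 2\,\Exp_X\|\Pi_X^R\rho_X^R\|_1 = 2\sum_x p_x\|\Pi_x^R\rho_x^R\|_1 = 2\,\|\Pi^{XR}\rho^{XR}\|_1$, where the block-diagonal projector $\Pi^{XR}\equiv \sum_x \ketbra{x}^X\otimes\Pi_x^R$ equals $\{\cM_{\rho^X\otimes\kappa^R}(\rho^{XR})\geqslant \zeta(\rho^X\otimes\kappa^R)\}$ because $\rho^{XR}$ is block-diagonal in $X$. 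Lemma \ref{yal3} then delivers $(I)\leqslant 2\zeta^{(1-\alpha)/2}\exp\{\tfrac{\alpha-1}{2}D_\alpha(\rho^{XR}\|\rho^X\otimes\kappa^R)\}$.

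For $(II)$, let $\Delta \equiv \tfrac{1}{M}\sum_i \hat\Pi_{X_i}^R\rho_{X_i}^R - \mu_2$. The centered summands $\hat\Pi_{X_i}^R\rho_{X_i}^R - \mu_2$ are i.i.d.\ with zero mean, so independence (and dropping the PSD correction $\mu_2\mu_2^\dag$) gives $\Exp[\Delta\Delta^\dag]\preceq \tfrac{1}{M}\Exp[\hat\Pi_X^R(\rho_X^R)^2\hat\Pi_X^R]$. Invoking the bound $\|A\|_1\leqslant \sqrt{\tr (\kappa^R)^{-1}AA^\dag}$, Jensen's inequality (concavity of $\sqrt{\cdot}$), and finally Lemma \ref{yal4} termwise in $x$ to estimate $\tr(\kappa^R)^{-1}\hat\Pi_x^R(\rho_x^R)^2\hat\Pi_x^R\leqslant \nu_{\kappa^R}\zeta$, one arrives at $(II)\leqslant \sqrt{\nu_{\kappa^R}\zeta/M}$. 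Adding these contributions produces a sum of the form $x\zeta^{(1-\alpha)/2}+y\zeta^{1/2}$, optimized by the same choice $\zeta=(x/y)^{2/\alpha}$ used in Theorem \ref{theorem1}; with $x=2\exp\{\tfrac{\alpha-1}{2}D_\alpha(\rho^{XR}\|\rho^X\otimes\kappa^R)\}$ and $y=\sqrt{\nu_{\kappa^R}/M}$, and the crude bound $2^{1+1/\alpha}\leqslant 4$ for $\alpha\in(1,2]$, the stated bound $4\exp\{\tfrac{\alpha-1}{2\alpha}[\log\nu_{\kappa^R}+D_\alpha(\rho^{XR}\|\rho^X\otimes\kappa^R)-\log M]\}$ drops out.

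No serious obstacle is anticipated beyond this bookkeeping: the only conceptual move is recognizing that independence of $X_1,\dots,X_M$ supplies the $1/M$ variance factor playing the role of $1/(|A|^2-1)$ in the unitary decoupling argument, after which Lemmas \ref{yal3} and \ref{yal4} carry both halves of the estimate without modification.
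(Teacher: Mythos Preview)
Your proposal is correct and follows essentially the same route as the paper: the same two-term projector split, the same bounds via Lemmas~\ref{yal3} and~\ref{yal4}, and the same optimization over $\zeta$. The only difference is packaging: the paper cites Hayashi's Lemma~9.2 as a black box for the two-term bound and then passes from the statewise $\sum_x p_x \sqrt{Q_\alpha(\rho_x^R\|\kappa^R)}$ to $\exp\{\tfrac{\alpha-1}{2}D_\alpha(\rho^{XR}\|\rho^X\otimes\kappa^R)\}$ via Jensen, whereas you unroll that lemma explicitly and apply Lemma~\ref{yal3} directly to the cq state $\rho^{XR}$ with $\sigma=\rho^X\otimes\kappa^R$, which sidesteps the Jensen step.
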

\begin{proof}
It follows from the claims of Lemma 9.2 in Ref. \cite{hayashi} that for any $\zeta > 0$,
\begin{align}
\Exp_{X^M} \Big\| \frac{1}{M} \sum_{i=1}^M \rho_{X_i}^R - \rho^R \Big\|_1 & \leq
2 \sum_x p_x \zeta^{\frac{1-\alpha}{2}} \sqrt{ Q_{\alpha}(\rho_x^R \| \kappa^R) } +
\sqrt{\frac{\nu_{\kappa^R} \zeta}{M}} \\
& = 2 \zeta^{\frac{1-\alpha}{2}} \exp\Big\{ \frac{\alpha-1}{2}
D_\alpha(\rho^{XR} \| \rho^X \otimes \kappa^R) \Big\} + \sqrt{\frac{\nu_{\kappa^R} \zeta}{M}}.
\end{align}
If we make a choice of
\beq
\zeta = \left( \frac{2 \exp\left\{ \frac{\alpha-1}{2} D_\alpha(\rho^{XR} \| \rho^X \otimes \kappa^R) \right\} M}
{\nu_{\kappa^R}} \right)^{\frac{2}{\alpha}},
\enq
we get
\begin{align}
\Exp_{X^M} \Big\| \frac{1}{M} \sum_{i=1}^M \rho_{X_i}^R - \rho^R \Big\|_1 & \leq
4 \, \exp \left\{ \frac{\alpha-1}{2 \alpha} \left[ \log \nu_{\kappa^R} + D_\alpha(\rho^{XR} \| \rho^X \otimes
\kappa^R) - \log M \right] \right\}.
\end{align}
QED.
\end{proof}

\section{A more general decoupling theorem that we never use!}
\label{appendix3}

\begin{theorem}
\label{theorem1-2}
Let $\cX$ be a finite set,
$\{ p_x, x \in \cX \}$ a probability distribution on $\cX$, $\rho_x^{AR} \in \densitymatrix(\cH_{AR})$
$\forall$ $x \in \cX$, and $\{ \ketbra{x}^X \}$ a set of orthonormal
states in $X$. Consider a cq state
\beq
\rho^{XAR} \equiv \sum_{x \in \cX} p_x \ketbra{x}^X \otimes \rho_x^{AR}.
\enq
For $M \in \mathbb{N}$, let $X_1,...,X_M$ be $M$ independent
and identically distributed (i.i.d.) random variables having probability distribution $\{p_x, x \in \cX\}$,
and $\cT^{A \to E}$ be a class-$1$ map. 
Then for $\alpha \in (1,2]$, $X_1^M \equiv (X_1,...,X_M)$, random Unitaries
$U_1^M \equiv (U_1,...,U_M)$ acting independently on $A$, we have for any $\sigma^R$,
$\kappa^R \in \densitymatrix(\cH_{R})$,
\begin{align}
\Exp_{X_1^M} \Exp_{U_1^M} & \Big\| \frac{1}{M} \sum_{i=1}^M \cT(U_i \cdot \rho^{AR}_{X_i}) -
\omega^E_\cT \otimes \rho^R \Big\|_1 \nonumber \\
& \leqslant 4 \exp\left\{ \frac{\alpha-1}{2\alpha} \left[ \log \nu_{\sigma^R}
+ D_\alpha(\rho^{XAR} \| \rho^X \otimes \eye^A \otimes \sigma^R) - \log M
+ \Theta(\cT) \right] \right\} \ind_{|A| \neq 1} \nonumber \\
& \hspace{0.5in} + 4 \exp \left\{ \frac{\alpha-1}{2\alpha} \left[ \log \nu_{\kappa^R} +
D_{\alpha}(\rho^{XR} \| \rho^X \otimes  \kappa^R) - \log M \right] \right\} \ind_{|\cX| \neq 1}.
\end{align}
\end{theorem}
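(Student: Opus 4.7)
The approach is to combine, through one application of the triangle inequality, the two ingredients that produced Theorem~\ref{theorem1} and Corollary~\ref{yal10}. Split
\begin{align*}
\frac{1}{M}\sum_{i=1}^M \cT(U_i \cdot \rho_{X_i}^{AR}) - \omega^E_\cT \otimes \rho^R
&= \frac{1}{M} \sum_{i=1}^M \left[ \cT(U_i \cdot \rho_{X_i}^{AR}) - \omega^E_\cT \otimes \rho_{X_i}^R \right] \\
&\quad + \omega^E_\cT \otimes \left[ \frac{1}{M} \sum_{i=1}^M \rho_{X_i}^R - \rho^R \right].
\end{align*}
Applying the class-$1$ condition to $\sigma = \eye^A$ gives $\|\omega^E_\cT\|_1 \leqslant 1$, so the second summand has trace norm at most $\|\frac{1}{M}\sum_i \rho_{X_i}^R - \rho^R\|_1$, and Corollary~\ref{yal10} bounds its expectation by the second exponential of the claim; this contribution vanishes outright when $|\cX| = 1$, explaining the indicator $\ind_{|\cX| \neq 1}$.

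For the first summand set $\Delta_i \equiv \cT(U_i \cdot \rho_{X_i}^{AR}) - \omega^E_\cT \otimes \rho_{X_i}^R$ and observe that the identity $\Exp_{U_i} U_i \sigma U_i^\dag = \pi^A \otimes \sigma^R$, together with $\cT(\pi^A) = \omega^E_\cT$, gives $\Exp_{U_i} \Delta_i = 0$ for every realization of $X_i$; thus the $\Delta_i$'s are independent and centered over the unitary randomness. Next I would imitate the projector truncation of Theorem~\ref{theorem1} in the cq setting by introducing
\[
\Pi^{XAR} \equiv \sum_x \ketbra{x}^X \otimes \Pi_x^{AR},
\qquad
\Pi_x^{AR} \equiv \{ \cM_{\eye^A \otimes \sigma^R}(\rho_x^{AR}) \geqslant \zeta \, \eye^A \otimes \sigma^R \},
\]
and writing $\Delta_i = \Delta_i^\Pi + \Delta_i^{\hat\Pi}$ according to whether $\rho_{X_i}^{AR}$ is pre-multiplied by $\Pi_{X_i}$ or $\hat\Pi_{X_i} = \eye - \Pi_{X_i}$. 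The atypical ($\Pi$) contribution is bounded summand-wise using the triangle inequality, the class-$1$ property, Lemma~\ref{yal3}, and concavity of $\sqrt{\cdot}$, arriving at $2 \zeta^{(1-\alpha)/2} \sqrt{Q_\alpha(\rho^{XAR} \| \rho^X \otimes \eye^A \otimes \sigma^R)}$ via the cq identity $\Exp_X Q_\alpha(\rho_X^{AR} \| \eye^A \otimes \sigma^R) = Q_\alpha(\rho^{XAR} \| \rho^X \otimes \eye^A \otimes \sigma^R)$. The typical ($\hat\Pi$) contribution is where the independence of the $\Delta_i^{\hat\Pi}$ pays off: from $\|\Upsilon\|_1 \leqslant \sqrt{\tr \kappa^{-1} \Upsilon \Upsilon^\dag}$, Jensen, and the vanishing of the cross terms $\Exp \Delta_i^{\hat\Pi} (\Delta_j^{\hat\Pi})^\dag = 0$ for $i \neq j$,
\[
\Exp \left\| \frac{1}{M} \sum_i \Delta_i^{\hat\Pi} \right\|_1 \leqslant \sqrt{\frac{1}{M} \tr \kappa^{-1} \Exp_{X, U_1} \Delta_1^{\hat\Pi} (\Delta_1^{\hat\Pi})^\dag},
\]
and choosing $\kappa = \theta^E \otimes \sigma^R$ with $\theta^E$ optimal for $\Theta(\cT)$, applying Lemma~\ref{yal9} to the inner $U_1$-average, and invoking Lemma~\ref{yal4} per fibre $\rho_x^{AR}$ with reference $\eye^A \otimes \sigma^R$ (whose distinct-eigenvalue count is $\nu_{\sigma^R}$) bounds this by $\sqrt{|A|^2 \nu_{\sigma^R} \zeta \exp \Theta(\cT) / [M (|A|^2 - 1)]}$. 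The $1/\sqrt{M}$ here is precisely what becomes the $-\log M$ inside the exponent after the $\zeta$-optimization.

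Optimizing $\zeta$ by the same choice $\zeta = (x/y)^{2/\alpha}$ and the same absorption $|A|^2/(|A|^2 - 1) \leqslant 4/3$ into the prefactor as in the proof of Theorem~\ref{theorem1} then delivers the first exponential of the claim. The indicator $\ind_{|A| \neq 1}$ is explained by the variance step requiring $|A| \geqslant 2$; for $|A| = 1$ every $U_i$ is a phase and $\cT(U_i \cdot \rho_{X_i}^{AR}) = \omega^E_\cT \otimes \rho_{X_i}^R$ identically, so the decoupling contribution is exactly zero and no bound is needed. The main technical nuisance will be the bookkeeping of the cq projector $\Pi^{XAR}$: verifying that per-$x$ applications of Lemmas~\ref{yal3} and~\ref{yal4} aggregate, under $\Exp_X$, into joint R\'enyi $\alpha$-quantities on $XAR$ rather than into the conditional quantities on $AR$. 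The algebraic fact underpinning this reduction is the $Q_\alpha$-identity noted above, which holds precisely because $\Pi^{XAR}$ commutes with the classical $X$-basis.
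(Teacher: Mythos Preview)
Your proposal is correct and follows essentially the same route as the paper's own proof: the same triangle-inequality split into a ``decoupling'' part (bounded by the per-$x$ projector truncation $\Pi_x^{AR}$, the class-$1$ property plus Lemma~\ref{yal3} for the atypical piece, and the independence/zero-mean structure plus Lemmas~\ref{yal9} and~\ref{yal4} for the typical piece, followed by the same $\zeta$-optimization) and a ``sampling'' part (bounded by Corollary~\ref{yal10}), with the indicators explained exactly as you do. The only cosmetic difference is that the paper factors the tensor $\omega^E_\cT$ out via $\|X\otimes Y\|_1=\|X\|_1\|Y\|_1$ rather than your explicit observation $\|\omega^E_\cT\|_1\leqslant 1$, and it phrases the cq aggregation as a direct application of concavity of $\sqrt{\cdot}$ to $\sum_x p_x\sqrt{Q_\alpha(\rho_x^{AR}\|\eye^A\otimes\sigma^R)}$ rather than via your $Q_\alpha$-identity, but these amount to the same computation.
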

\begin{proof}
We have
\begin{align}
\Exp_{X_1^M} & \Exp_{U_1^M} \Big\| \frac{1}{M} \sum_{i=1}^M \cT(U_i \cdot \rho^{AR}_{X_i}) -
\omega^E_\cT \otimes \rho^R \Big\|_1 \nonumber \\
& \leqslant \Exp_{X_1^M} \Exp_{U_1^M} \Big\| \frac{1}{M} \sum_{i=1}^M \left[ \cT(U_i \cdot \rho^{AR}_{X_i})
- \omega^E_\cT \otimes \rho^R_{X_i} \right] \Big\|_1 + 
\Exp_{X_1^M} \Big\|_1 \frac{1}{M} \sum_{i=1}^M \omega^E_\cT \otimes
\rho^R_{X_i} - \omega^E_\cT \otimes \rho^R \Big\|_1 \nonumber \\
\label{yae44}
& = \Exp_{X_1^M} \Exp_{U_1^M} \Big\| \frac{1}{M} \sum_{i=1}^M \left[ \cT(U_i \cdot \rho^{AR}_{X_i})
- \omega^E_\cT \otimes \rho^R_{X_i} \right] \Big\|_1 \ind_{|A| \neq 1} + \nonumber \\
& \hspace{1in}
\Exp_{X_1^M} \Big\|_1 \frac{1}{M} \sum_{i=1}^M \rho^R_{X_i} - \rho^R \Big\|_1 \ind_{|\cX| \neq 1},
\end{align}
where the inequality follows from the triangle inequality and the last equality follows since
$\| X \otimes Y \|_1 = \|X \|_1 \|Y \|_1$, and
the first and the second terms are identically zero if $|A|=1$ and $|\cX|=1$ respectively.
The upper bound for the second term can be deduced
from Lemma 9.2 in Ref. \cite{hayashi} for any $\alpha \in (1,2]$ 
and any $\kappa^R \in \densitymatrix(\cH_R)$ as
\beq
\label{yae42}
\Exp_{X_1^M} \Big\|_1 \frac{1}{M} \sum_{i=1}^M \rho^R_{X_i} - \rho^R \Big\|_1 \leq
4 \exp \left\{ \frac{\alpha-1}{2\alpha} \left[ \log \nu_R + D_{\alpha}(\rho^{XR} \| \rho^X \otimes \kappa^R)
- \log M \right] \right\}.
\enq
Note that Lemma 9.2 in Ref. \cite{hayashi} doesn't provide an upper bound in the above form but it
is easy to deduce it from the claim, and, for the sake of completeness,
it is provided in Corollary \ref{yal10}.

The rest of the proof is to upper bound the first term in \eqref{yae44}.
For $\zeta > 0$ and $\forall$ $x \in \cX$, let
$\Pi_x^{AR} \equiv \left\{ \cM_{\eye^A \otimes \sigma^R}(\rho_x^{AR}) \geqslant \zeta \eye^A \otimes \sigma^R
\right\}$, $\hat{\Pi}_{x}^{AR} \equiv \eye^{AR} - \Pi_{x}^{AR}$,
$\mu_{1,x} \equiv \omega^{E}_\cT \otimes \tr_A \left\{ \Pi_x^{AR} \rho^{AR}_x \right\}$, and
$\mu_{2,x} \equiv \omega^{E}_\cT \otimes \tr_A \left\{ \hat{\Pi}_x^{AR} \rho^{AR}_x \right\}$.
Note that $\mu_{1,x} + \mu_{2,x} = \omega^{E}_\cT \otimes \rho^R_x$. We now have
from the triangle inequality
\begin{align}
\Exp_{X_1^M} & \Exp_{U_1^M} \Big\| \frac{1}{M} \sum_{i=1}^M \left[ \cT(U_i \cdot \rho^{AR}_{X_i})
- \omega^E_\cT \otimes \rho^R_{X_i} \right] \Big\|_1 \nonumber \\
& \leqslant \Exp_{X_1^M} \Exp_{U_1^M} \Big\| \frac{1}{M} \sum_{i=1}^M \left\{
\cT \left[ U_i \cdot (\Pi_{X_i}^{AR} \rho^{AR}_{X_i}) \right] - \mu_{1,X_i} \right\} \Big\|_1 + \nonumber \\
& \hspace{1in} \Exp_{X_1^M} \Exp_{U_1^M} \Big\|_1 \frac{1}{M} \sum_{i=1}^M \left\{ \cT \left[ U_i
\cdot (\hat{\Pi}_{X_i}^{AR} \rho^{AR}_{X_i}) \right] - \mu_{2,X_i} \right\} \Big\|_1.
\end{align}

We attack the first term.
\begin{align}
\Exp_{X_1^M} & \Exp_{U_1^M} \Big\| \frac{1}{M} \sum_{i=1}^M \left\{
\cT \left[ U_i \cdot (\Pi_{X_i}^{AR} \rho^{AR}_{X_i})  \right] - \mu_{1,X_i} \right\} \Big\|_1 \nonumber \\
& \leqslant \Exp_{X_1^M}  \Exp_{U_1^M} \Big\| \frac{1}{M} \sum_{i=1}^M \cT \left[ U_i \cdot (\Pi_{X_i}^{AR}
\rho^{AR}_{X_i})  \right] \Big\|_1 + \Exp_{X_1^M} \| \mu_{1,X_i} \|_1 \\
& \leqslant \frac{2}{M} \sum_{i=1}^M \Exp_{X_i}  \Exp_{U_i} \Big\| \cT \left[ U_i \cdot (\Pi_{X_i}^{AR}
\rho^{AR}_{X_i})  \right] \Big\|_1
= 2 \, \Exp_{X}  \Exp_{U} \Big\| \cT \left[ U \cdot (\Pi_{X}^{AR} \rho^{AR}_{X}) \right] \Big\|_1 \\
& \leqslant 2 \, \Exp_{X}  \Big\| \Pi_{X}^{AR} \rho^{AR}_{X} \Big\|_1
= 2 \, \sum_x p_x  \Big\| \Pi_{x}^{AR} \rho^{AR}_{x} \Big\|_1 \\
& \leqslant 2 \zeta^{\frac{1-\alpha}{2}} \, \sum_x p_x \sqrt{
Q_\alpha(\rho_x^{AR} \| \eye^A \otimes \sigma^R) } \\
& = 2 \zeta^{\frac{1-\alpha}{2}} \, \exp \left\{ \frac{\alpha - 1}{2}
D_\alpha(\rho^{XAR} \| \rho^X \otimes \eye^A \otimes \sigma^R) \right\},
\end{align}
where the first inequality follows from the triangle inequality,
the second inequality follows from the convexity of the trace norm to have
\begin{multline}
\Exp_{X_1^M} \| \mu_{1,X_i} \|_1 = \Exp_{X_1^M} \Big\| \frac{1}{M}
\sum_{i=1}^M  \Exp_{U_i}  \left\{ \cT \left[ U_i \cdot (\Pi_{X_i}^{AR}
\rho^{AR}_{X_i})  \right] \right\} \Big\|_1 \\
\leqslant \frac{1}{M} \sum_{i=1}^M \Exp_{X_i} \Big\| \Exp_{U_i}
\left\{ \cT \left[ U_i \cdot (\Pi_{X_i}^{AR} \rho^{AR}_{X_i})  \right] \right\} \Big\|_1
\leqslant \frac{1}{M} \sum_{i=1}^M \Exp_{X_i}  \Exp_{U_i} \Big\| \cT \left[ U_i \cdot (\Pi_{X_i}^{AR}
\rho^{AR}_{X_i})  \right] \Big\|_1,
\end{multline}
and similarly for the first term, the first equality follows since $X_i$'s and $U_i$'s are i.i.d.,
the third inequality follows from the definition of class-$1$ maps,
the fourth inequality follows from Lemma \ref{yal3} (proved by Hayashi \cite{hayashi}),
and the last inequality follows from the concavity of $x \mapsto \sqrt{x}$.

We now attack the second term. Let
$\Delta_{X_i U_i} \equiv \cT \left[ U_i \cdot (\hat{\Pi}_{X_i}^{AR} \rho^{AR}_{X_i}) \right] - \mu_{2,X_i}$ and
$\Delta_{X_1^M U_1^M} \equiv \sum_{i=1}^M \Delta_{X_i U_i}/M$.
Note that
$\Exp_{X_1^M U_1^M} \Big\{ \Delta_{X_i U_i} \Delta_{X_j U_j}^\dagger \Big\} =
\mzero$, $\forall$ $i \neq j$,
and hence,
\begin{align}
\Exp_{X_1^M U_1^M} \Big\{ \Delta_{X_1^M U_1^M} \Delta_{X_1^M U_1^M}^\dagger \Big\}
& = \frac{1}{M^2} \sum_{i=1}^M \Exp_{X_i U_i} \Big\{ \Delta_{X_i U_i} \Delta_{X_i U_i}^\dagger
\Big\}
= \frac{1}{M} \Exp_{X U} \Big\{ \Delta_{X U} \Delta_{X U}^\dagger \Big\} \\
\label{yae16}
& \leqslant \frac{|A|^2 \tr_{A^\prime} \left( \omega^{E A^\prime}_\cT \right)^2 }{M(|A|^2-1)} \otimes
\tr_A \Exp_X \left\{ \hat{\Pi}_{X}^{AR} (\rho^{AR}_{X})^2 \hat{\Pi}_{X}^{AR} \right\},
\end{align}
where the inequality follows from Lemma \ref{yal9}. Following the arguments in Theorem \ref{theorem1}
in dealing with the second term, we get
\begin{align}
\Exp_{X_1^M U_1^M} & \Big\| \frac{1}{M} \sum_{i=1}^M \left\{ \cT \left[ U_i \cdot (\hat{\Pi}_{X_i}^{AR}
\rho^{AR}_{X_i}) \right] - \mu_{2, X_i} \right\} \Big\|_1 \leqslant
\sqrt{ \frac{\nu_{\sigma^R} \zeta |A|^2 \exp \left\{ \Theta(\cT) \right\}}{M (|A|^2-1)} }.
\end{align}

We now have
\begin{multline}
\Exp_{X_1^M} \Exp_{U_1^M} \Big\| \frac{1}{M} \sum_{i=1}^M \left[ \cT(U_i \cdot \rho^{AR}_{X_i})
- \omega^E_\cT \otimes \rho^R_{X_i} \right] \Big\|_1 \\
\leqslant 2 \zeta^{\frac{1-\alpha}{2}} \, \exp \left\{ \frac{\alpha - 1}{2} \left[
D_\alpha(\rho^{XAR} \| \rho^X \otimes \eye^A \otimes \sigma^R) \right] \right\}
+ \sqrt{ \frac{\nu_{\sigma^R} \zeta |A|^2 \exp \left\{ \Theta(\cT) \right\}}{M (|A|^2-1)} },
\end{multline}
and by appropriately choosing $\zeta$, we get
\begin{multline}
\label{yae54}
\Exp_{X_1^M} \Exp_{U_1^M} \Big\| \frac{1}{M} \sum_{i=1}^M \left[ \cT(U_i \cdot \rho^{AR}_{X_i})
- \omega^E_\cT \otimes \rho^R_{X_i} \right] \Big\|_1 \\
\leqslant 4 \exp\left\{ \frac{\alpha-1}{2\alpha} \left[ \log \nu_{\sigma^R}
+ D_\alpha(\rho^{XAR} \| \rho^X \otimes \eye^A \otimes \sigma^R) - \log M
+ \Theta(\cT) \right] \right\}.
\end{multline}
The claim now follows from \eqref{yae44}, \eqref{yae42}, and  \eqref{yae54}.
\end{proof}

%

\end{document}